\newcommand{\EE}{\mathbb{E} }
\newcommand{\PP}{\mathbb{P} }
\newcommand{\be}{\begin{equation}}
\newcommand{\en}{\end{equation}}
\newtheorem{theorem}{Theorem}
\newtheorem{prop}{Proposition}
\newtheorem{remark}{Remark}
\newcommand{\ea}{\end{eqnarray}}
\newcommand{\ba}{\begin{eqnarray}}
\newcommand{\ean}{\end{eqnarray*}}
\newcommand{\ban}{\begin{eqnarray*}}
\begin{document}
\title{Systemic Risk and Heterogeneous Mean Field Type Interbank Network}
\author{Li-Hsien Sun\thanks{Institute of Statistics, National Central University, Chung-Li, Taiwan, 32001 {\em lihsiensun@ncu.edu.tw}. Work supported by Most grant 108-2118-M-008-002-MY2}}

\date{}
\maketitle
\begin{abstract}
We study the system of heterogeneous interbank lending and borrowing based on the relative average of log-capitalization given by the linear combination of the average within groups and the ensemble average and describe the evolution of log-capitalization by a system of coupled diffusions. The model incorporates a game feature with homogeneity within groups and heterogeneity between groups where banks search for the optimal lending or borrowing strategies through minimizing the heterogeneous linear quadratic costs in order to avoid to approach the default barrier.
Due to the complicity of the lending and borrowing system, the closed-loop Nash equilibria and the open-loop Nash equilibria are both driven by the coupled Riccati equations. The existence of the equilibria in the two-group case where
the number of banks are  sufficiently large is guaranteed by the solvability for the coupled Riccati equations as the number of banks goes to infinity in each group. The equilibria are consisted of the mean-reverting term identical to the one group game and the group average owing to heterogeneity. In addition, the corresponding heterogeneous mean filed game with the arbitrary number of groups is also discussed. The existence of the $\epsilon$-Nash equilibrium in the general $d$ heterogeneous groups is also verified. Finally, in the financial implication, we observe the Nash equilibria governed by the mean-reverting term and the linear combination of the ensemble averages of individual groups and study the influence of the relative parameters on the liquidity rate through the numerical analysis.

\end{abstract}
\textbf{Keywords:} Systemic risk, inter-bank borrowing and lending system, heterogeneous group, relative ensemble average, Nash equilibrium, Mean Field Game.


\section{Introduction}\label{sec:intro}
 Toward a deeper understanding of systemic risk created by lending and borrowing behavior under the heterogeneous environment, we extend the model studied in \cite{R.Carmona2013} from one homogeneous group into several groups with heterogeneity.
The evolution of monetary reserve is described by a system of interacting diffusions with homogeneity within groups and heterogeneity between groups. Banks intend to borrow money from a central bank when they remain below the global average of capitalization treated as the critical level and lend money to a central bank when they stay above the same critical level through minimizing their cost for the corresponding lending or borrowing given by the linear quadratic objective functions with varied parameters. Furthermore, motivated by \cite{Touzi2015}, instead of the global ensemble average, we propose the relative ensemble average which is the linear combination of the group average and the global ensemble average. In the case of the finite players with heterogeneous groups, we solve the closed-loop Nash equilibria using the dynamic programming principle and the corresponding fully coupled backward Hamilton Jacobi Bellman (HJB) equations. In addition, through the Pontryagin minimization principle and the corresponding the adjoint forward and backward stochastic differential equations (FBSDEs), we get the open-loop Nash equilibria. 
Due to the complicity of the heterogeneity, the closed-loop Nash equilibria and the open-loop Nash equilibria are given by the coupled Riccati equations. Hence, in the two-group case, we propose the solvability condition for the existence of the equilibria as the number of banks in each group goes to infinity in the sense that in the case of sufficiently large number of banks, the existence of the closed-loop Nash equilibria and the open-loop Nash equilibria can be guaranteed. Furthermore, we discuss heterogeneous mean field games (MFGs) with the common noises where the number of groups is arbitrary. Due to the complexity generated by common noises, the $\epsilon$-Nash equilibria can not be obtained using the HJB equations. Hence, the adjoint FBSDEs are applied to solve the equilibria. The existence of the $\epsilon$-Nash equilibria is also proved under some sufficient conditions. As the results in \cite{R.Carmona2013},  the closed-loop Nash equilibria and the open look Nash equilibria converge to the $\epsilon$-Nash equilibria. 
We observe that owing to the linear quadratic regulator, the equilibria are the linear combination of the mean-reverting term identical to the one group system discussed in \cite{R.Carmona2013} and the group ensemble averages given by heterogeneity. In addition, through the numerical studies, if banks prefer tracing the global ensemble average rather than the average of their own groups, they intend to increase the liquidity rate, and the larger sample size also implies the larger liquidity rate.

In the literature, this type of interaction in continuous time is studied in several models.  \cite{Fouque-Sun, R.Carmona2013} describe systemic risk based on the coupled Ornstein-Uhlenbeck (OU) type processes. The extension of this OU type model with delay obligation is also proposed by \cite{Carmona-Fouque2016}. \cite{Fouque-Ichiba, Sun2016} investigate system crash through the Cox-Ingersoll-Ross (CIR) type processes.  The rare event related to systemic risk given by the bistable model is discussed in \cite{Garnier-Mean-Field, GarnierPapanicolaouYang}. The stability created by a central agent according to the model in \cite{Garnier-Mean-Field, GarnierPapanicolaouYang} is provided by \cite{Papanicolaou-Yang2015}.

The asymptotic equilibria called $\epsilon$-Nash equilibria of stochastic differential games in one homogeneous group are solved by MFGs proposed by \cite{MFG1, MFG2, MFG3}. Here, the interactions given by empirical distributions whose solution satisfies the coupled backward HJB equation backward in time and the forward Kolmogorov equation forward in time. Independently, Nash certainty equivalent treated as a similar case of MFGs is also developed by \cite{HuangCainesMalhame1,HuangCainesMalhame2}. In addition, the probabilistic approach in the form of FBSDEs to obtain $\epsilon$-Nash equilibria is studied in
\cite{Bensoussan_et_al, CarmonaDelarueLachapelle, Carmona-Lacker2015,MFG-book-1}. \cite{Lacker-Webster2014, Lacker2015,  MFG-book-2} discuss MFGs with common noise and the master equations.  In particular, \cite{Lacker-Zari2017} study the optimal investment under heterogeneous relative performance in the case of mean field limit.

The paper is organized as follows. In Section \ref{Heter}, we analyze the case of two heterogeneous groups with relative performance and study the corresponding closed-loop Nash equilibria in Section \ref{Sec:NE}. In parciular, Section \ref{MFG} is devoted to solving for the $\epsilon$-Nash equilibria with common noises in MFGs with heterogeneous groups using the coupled FBSDEs. The financial implication is illustrated in Section \ref{FI}. The concluding remark is given in Section \ref{conclusions}.

\section{Heterogeneous Groups}\label{Heter}
According to the interbank lending and borrowing system discussed in \cite{R.Carmona2013}, it is nature to consider the model of interbank lending and borrowing containing $d$ groups and $N_k$ denoted the number of banks in group $k=1,\cdots,d$ where $N=\sum_{j=1}^dN_j$. The $i$-th Bank in group $k$ intends to obtain the optimal strategy through minimizing its own linear quadratic cost function
\ba
\label{objective}J^{(k)i}(\alpha)=\EE\bigg\{\int_{0}^{T}  f^{N}_{(k)}(X^{(k)i}_t, X^{-(k)i}_t, \alpha^{(k)i}_t)
dt+  g_{(k)}(X^{(k)i}_T, X^{-(k)i}_T )\bigg\},
\ea
where $X=(X^{(1)1},\cdots,X^{(d)N_d})$, $x=(x^{(1)1},\cdots,x^{(d)N_d})$, $\alpha=(\alpha^{(1)1},\cdots,\alpha^{(d)N_d})$, and $ X^{-{(k)i}}=( X^{(1)1},\cdots, X^{(k)i-1},  X^{(k)i+1},\cdots, X^{(d)N_d})$ where the running cost is
\ba
  &&f^N_{(k)}(x^{(k)i}, x^{-(k)i},\alpha)=\frac{(\alpha )^2}{2}-q_k\alpha \left(\tilde x^{\lambda_k} -x^{(k)i}\right)
+\frac{\epsilon_k}{2}\left(\tilde x^{\lambda_k} -x^{(k)i}\right)^2,\label{running_cost}
\ea
and terminal cost is
\be\label{teminal_cost}
g^N_{(k)}(x^{(k)i}, x^{-(k)i})=\frac{c_k}{2}\left(\tilde x^{\lambda_k}  -x^{(k)i} \right)^2,
\en
with $x^{-(k)i}=( x^{(1)1},\cdots, x^{(k)i-1},  x^{(k)i+1},\cdots, x^{(d)N_d}$) where the relative ensemble average is 
\[
\overline x^{\lambda_k}=(1-\lambda_k)\overline x^{(k)}+\lambda_k\overline{x}
\]
where the global average of capitalization and the group average of capitalization are written as 
\[
\overline{x} =\frac{1}{N}\sum_{k=1}^d\sum_{i=1}^{N_k}x^{(k)i} ,\,\overline{x} ^{(k)}=\frac{1}{N_k}\sum_{i=1}^{N_k}x^{(k)i},
\]
under the constraint
\ba\label{diffusions}
\nonumber dX^{(k)i}_t &=& (\alpha_{t}^{(k)i}+\gamma^{(k)}_{t})dt\\
&&+\sigma_k\left(\rho dW^{(0)}_t+\sqrt{1-\rho^2}\left(\rho_{k}dW_t^{(k)}+\sqrt{1-\rho^2_{k}}dW^{(k)i}_t\right)\right),
\ea
for $i=1,\cdots,N_k$ with nonnegative diffusion parameters $\sigma_k$, nonnegative deterministic growth rate $\gamma^{(k)}$ in $L^\infty$-space denoted as the collection of bounded measurable functions on $[0,T]$. We further assume that $W^{(k)i}_t$ for all $k=1,\cdots,d$ and $i=1,\cdots,N_k$ are standard Brownian motions and $W^{(0)}_t$ and $W^{(k)}_t$ for $k=1,\cdots,d$ are the common noises between groups and within groups corresponding to the parameters $\rho $ and $\rho_k$ for $k=1,\cdots,d$ denoted as the correlation between groups and within groups respectively. Note that all Brownian motions are defined on a filtration probability space $(\Omega,{\cal{F}},{\cal{P}},\{{\cal{F}}_t\})$ referred to as Definition 5.8 of Chapter 2 in \cite{Karatzas2000}. The initial value $X_0^{(k)i}$  which may also be a squared integrable random variable $\xi^{(k)}$ for $k=1,\cdots,d$ independent of the Brownian motions defined on $(\Omega,{\cal{F}},{\cal{P}},\{{\cal{F}}_t\})$. Namely, the system of the interbank lending and borrowing contains the feature of homogeneity within the groups and heterogeneity between groups. Note that $\alpha_\cdot$ is a progressively measurable control process and  $\alpha^{(k)i}_\cdot$ is admissible if it satisfy the integrability condition given by
\be\label{admissible}
\EE\int_0^T\left|\alpha_s^{(k)i}\right|^2ds<\infty.
\en


In addition, the parameters $q_k$, $\epsilon_k$, and $c_k$ stay in positive satisfying $q_k^2\leq \epsilon_k$ in order to guarantee  that $\alpha\rightarrow f_{(k)}^{N}(x,\alpha)$ is convex for any $x$ and $x\rightarrow f_{(k)}^{N}(x,\alpha)$ is convex for any $\alpha$.  
The parameters $0\leq\lambda_k\leq 1$ for $k=1,\cdots,d$ are described as the relative consideration for the group average and the global average. The case of large $\lambda$ means that banks consider tracing the global average rather than the group one through the large ratio on global average.  Finally, $q_k$ presents the incentive of lending and borrowing behavior for banks in group $k$ as the description in \cite{Carmona-Fouque2016,R.Carmona2013}.  

For simplicity, in case of the finite players, we study the case of two heterogeneous groups where $d=2$ 
Hence, the dynamics for both groups are written as 
\ba
\nonumber dX^{(1)i}_t &=& (\alpha_{t}^{(1)i}+\gamma^{(1)}_{t})dt\\
\label{diffusion-major}&&+\sigma_1\left(\rho dW^{(0)}_t+\sqrt{1-\rho^2}\left(\rho_{1}dW_t^{(1)}+\sqrt{1-\rho^2_{1}}dW^{(1)i}_t\right)\right),
\ea
and 
\ba
\nonumber dX^{(2)i}_t &=& (\alpha_{t}^{(2)i}+\gamma^{(2)}_{t})dt\\
\label{diffusion-minor}&&+\sigma_2\left(\rho dW^{(0)}_t+\sqrt{1-\rho^2}\left(\rho_{2}dW_t^{(2)}+\sqrt{1-\rho^2_{2}}dW^{(2)i}_t\right)\right),
\ea
with the initial value $X_0^{(k)i}$ for $k=1,2$ and $i=1,\cdots,N_k$. In particular, given the first group consisted of larger banks and the second group consisted of smaller banks, we may further assume that $ 0\leq\lambda_1<\lambda_2 \leq 1$ since large banks intend to trace their own group ensemble average $\overline X^{(1)}$ rather than the global average $\overline X$. On the contrary, small banks prefer tracing large banks through the global ensemble average. In addition, the number of large banks $N_1$ is usually less than the number of small banks $N_2$.

\section{Construction of Nash Equilibria }\label{Sec:NE}
This section is devoted to obtain the closed-loop Nash equilibria and the open-loop Nash equilibria in the finite players game. The solution to the closed-loop Nash equilibria are given by the HJB approach. The open-loop Nash equilibria are obtained using the FBSDEs based on the Pontryagin minimum principle.

\subsection{Closed-loop Nash Equilibria}\label{HJB-approach}
In order to solve the closed-loop Nash equilibrium,  given the optimal strategies $\hat\alpha^{(k)j}$ for $j\neq i$ with the corresponding trajectories $$\hat X^{-{(k)i}}=(\hat X^{(1)1},\cdots,\hat X^{(k)i-1},\hat X^{(k)i+1},\cdots,\hat X^{(2)N_2}),$$  bank $(1)i$ and bank $(2)j$ intend to minimize the objective functions through the value functions written as
\be\label{value-function-1}
V^{(1)i}(t,x)=\inf_{\alpha^{(1)i}\in{\cal{A}} }\EE_{t,x}\left\{\int_t^T  f^N_{(1)}(X^{(1)i}_s, \hat X^{-(1)i}_s, \alpha^{(1)i}_s )ds+ g^N_{(1)}(X_{T}^{(1)i},\hat X^{-(1)i}_T )\right\},
\en
and 
\be\label{value-function-2}
V^{(2)j}(t,x)=\inf_{\alpha^{(2)j}\in{\cal{A}}}\EE_{t,x}\left\{\int_t^T  f^N_{(2)}(X^{(2)j}_s, \hat X^{-(2)j}_s, \alpha^{(2)j}_s )ds+ g^N_{(2)}(X_{T}^{(2)j},\hat X^{-(2)j}_T )\right\},
\en
subject to 
\ba\label{coupled-1}
\nonumber dX^{(1)i}_t &=& (\alpha_{t}^{(1)i}+\gamma^{(1)}_{t})dt\\
&&+\sigma_1\left(\rho dW^{(0)}_t+\sqrt{1-\rho^2}\left(\rho_{1}dW_t^{(1)}+\sqrt{1-\rho^2_{1}}dW^{(1)i}_t\right)\right) ,
\ea
and 
\ba\label{coupled-2}
\nonumber dX^{(2)j}_t &=& (\alpha_{t}^{(2)j}+\gamma^{(2)}_{t})dt\\
&&+\sigma_2\left(\rho dW^{(0)}_t+\sqrt{1-\rho^2}\left(\rho_{1}dW_t^{(2)}+\sqrt{1-\rho^2_{2}}dW^{(2)j}_t\right)\right),
\ea
where $W^{(k)i}_t$ for all $k=1,2$, $i=1,\cdots,N_1$ and $j=1,\cdots,N_2$ are standard Brownian motions and $W^{(0)}_t$ and $W^{(k)}_t$ for $k=1,2$ are the common noises between groups and within groups corresponding to the parameters $\rho$ and $\rho_k$ for $k=1,\cdots,2$ denoted as the correlation between groups and within groups respectively. The initial value $X_0^{(k)i}$ may also be a squared integrable random variable $\xi^{(k)}$. The control process $\alpha^{(k)i}$ is progressively measurable and ${\cal{A}}$ is denoted as the admissible set given by \eqref{admissible} for $\alpha^{(k)i}$. 
Note that $\EE_{t,x}$ denotes the expectation given $X_t=x$. 

\vskip 0.5 in

\begin{theorem}\label{Hete-Nash}
Assuming $q_k^2\leq \epsilon_k$ for $k=1,2$ and $\eta^{(i)}_t$ and $\phi^{(i)}_t$ for $i=1,\cdots,10$ satisfying \eqref{eta1} to \eqref{phi10}, the value functions of the closed-loop Nash equilibria to the problem \eqref{value-function-1} and \eqref{value-function-2} subject to \eqref{coupled-1} and \eqref{coupled-2} are given by 
\ba
\nonumber V^{(1)i}(t,x)&=&\frac{\eta^{(1)}_t}{2}( \overline x^{(1)}-x^{(1)i})^2+\frac{\eta^{(2)}_t}{2}(\overline x^{(1)})^2+\frac{\eta^{(3)}_t}{2}(\overline x^{(2)})^2\\
\nonumber&&+\eta^{(4)}_t( \overline x^{(1)}-x^{(1)i})\overline x^{(1)}+\eta^{(5)}_t(\overline x^{(1)}-x^{(1)i})\overline x^{(2)}+\eta^{(6)}_t\overline x^{(1)}\overline x^{(2)}\\
&&+\eta^{(7)}_t( \overline x^{(1)}-x^{(1)i})+\eta^{(8)}_t\overline x^{(1)}+\eta^{(9)}_t\overline x^{(2)}+\eta^{(10)}_t,\label{ansatz-1-prop}
\ea
and 
\ba
\nonumber V^{(2)j}(t,x)&=&\frac{\phi^{(1)}_t}{2}( \overline x^{(2)}-x^{(2)j})^2+\frac{\phi^{(2)}_t}{2}(\overline x^{(1)})^2+\frac{\phi^{(3)}_t}{2}(\overline x^{(2)})^2\\
\nonumber&&+\phi^{(4)}_t(\overline x^{(2)}-x^{(2)j})\overline x^{(1)}+\phi^{(5)}_t(\overline x^{(2)}-x^{(2)j})\overline x^{(2)}+\phi^{(6)}_t\overline x^{(1)}\overline x^{(2)}\\
&&+\phi^{(7)}_t(\overline x^{(2)}-x^{(2)j})+\phi^{(8)}_t\overline x^{(1)}+\phi^{(9)}_t\overline x^{(2)}+\phi^{(10)}_t,\label{ansatz-2-prop}
\ea
and the corresponding closed-loop Nash equilibria are  
\ba
\label{optimal-finite-ansatz-V1}
&&\hat\alpha^{(1)i}(t,x)=(q_1+\widetilde\eta^{(1)}_t)( \overline x^{(1)}-x^{(1)i})+\widetilde\eta^{(4)}_t\overline x^{(1)}+\widetilde\eta^{(5)}_t\overline x^{(2)}+\widetilde\eta_t^{(7)},\\
&&\hat\alpha^{(2)j}(t,x)=(q_2+\widetilde\phi^{(1)}_t)( \overline x^{(2)}-x^{(2)j})+\widetilde\phi^{(4)}_t\overline x^{(1)}+\widetilde\phi^{(5)}_t\overline x^{(2)}+\widetilde\phi_t^{(7)},\label{optimal-finite-ansatz-V2}
\ea
for $i=1,\cdots,N_1$ and $j=1,\cdots,N_2$ where 
\ba
&&\nonumber \widetilde\eta^{(1)}_t=(1-\frac{1}{N_1})\eta^{(1)}_t-\frac{1}{N_1}\eta^{(4)}_t, \quad\widetilde\eta^{(4)}_t=(1-\frac{1}{N_1})\eta^{(4)}_t-\frac{1}{N_1}\eta^{(2)}_t-\lambda_1(1-\beta_1)q_1,\\
&&\label{tildeeta}\widetilde\eta^{(5)}_t=(1-\frac{1}{N_1})\eta^{(5)}_t-\frac{1}{N_1}\eta^{(6)}_t+\lambda_1\beta_2q_1,\quad\widetilde\eta^{(7)}_t=(1-\frac{1}{N_1})\eta^{(7)}_t-\frac{1}{N_1}\eta^{(8)}_t,
\ea
and 
\ba
&&\nonumber \widetilde\phi^{(1)}_t=(1-\frac{1}{N_2})\phi^{(1)}_t-\frac{1}{N_2}\phi^{(5)}_t, \quad \widetilde\phi^{(4)}_t=(1-\frac{1}{N_2})\phi^{(4)}_t-\frac{1}{N_2}\phi^{(6)}_t+\lambda_2\beta_1q2,\\
\nonumber&&\widetilde\phi^{(5)}_t=(1-\frac{1}{N_2})\phi^{(5)}_t-\frac{1}{N_2}\phi^{(3)}_t-\lambda_2(1-\beta_2)q_2,\quad \widetilde\phi^{(7)}_t=(1-\frac{1}{N_2})\phi^{(7)}_t-\frac{1}{N_2}\phi^{(9)}_t.\\
\label{tildephi}
\ea
\end{theorem}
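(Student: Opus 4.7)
The plan is a dynamic programming verification argument. I will write the HJB equation for each reference bank's value function $V^{(1)i}$ and $V^{(2)j}$, postulate the quadratic ansatz \eqref{ansatz-1-prop}--\eqref{ansatz-2-prop}, read off the minimizer of the Hamiltonian, and match monomial coefficients to obtain the coupled ODE system for $\eta^{(i)},\phi^{(i)}$, $i=1,\ldots,10$. Because every other bank's control that appears in the HJB for bank $(1)i$ is itself the same feedback ansatz, this is really a self-consistency/Nash verification rather than a derivation from scratch.

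First I compute the first-order condition in $\alpha^{(1)i}$ of the Hamiltonian, which gives
\[
\hat\alpha^{(1)i} \;=\; q_1(\overline x^{\lambda_1}-x^{(1)i}) \;-\; \partial_{x^{(1)i}} V^{(1)i}.
\]
Differentiating the ansatz in $x^{(1)i}$ is the delicate algebraic step: $x^{(1)i}$ enters both directly and through $\overline x^{(1)}$ and $\overline x$, with $\partial_{x^{(1)i}}\overline x^{(1)}=1/N_1$ and $\partial_{x^{(1)i}}(\overline x^{(1)}-x^{(1)i})=1/N_1-1$. Collecting the resulting $(1-1/N_1)$ and $-1/N_1$ weights on $\eta^{(1)},\eta^{(2)},\eta^{(4)},\eta^{(5)},\eta^{(6)},\eta^{(7)},\eta^{(8)}$ produces the coefficients $\widetilde\eta^{(1)},\widetilde\eta^{(4)},\widetilde\eta^{(5)},\widetilde\eta^{(7)}$ in \eqref{tildeeta}. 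The $\lambda_1$-dependent additive corrections on $\widetilde\eta^{(4)}$ and $\widetilde\eta^{(5)}$ arise from decomposing $\overline x^{\lambda_1}$ inside the $q_1(\overline x^{\lambda_1}-x^{(1)i})$ term into its group-average and global-average pieces using $\beta_k=N_k/N$; the symmetric computation in group $2$ yields \eqref{tildephi} and \eqref{optimal-finite-ansatz-V2}.

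Next I substitute the candidate feedbacks for every player back into the HJB for $V^{(1)i}$. The drift collapses to an affine function of $(x^{(1)i},\overline x^{(1)},\overline x^{(2)})$, and the second-order part, after applying the correlation structure $\rho,\rho_1,\rho_2$ to the joint dynamics \eqref{coupled-1}--\eqref{coupled-2}, contributes only to the pure Hessian coefficients and to the constant $\eta^{(10)}$. Expanding $(\overline x^{\lambda_1}-x^{(1)i})^2$, the square $\bigl(\partial_{x^{(1)i}} V^{(1)i}\bigr)^2$, and all cross-products in the ten-monomial basis
\[
\bigl\{(\overline x^{(1)}-x^{(1)i})^2,\,(\overline x^{(1)})^2,\,(\overline x^{(2)})^2,\,(\overline x^{(1)}-x^{(1)i})\overline x^{(1)},\ldots,\,1\bigr\},
\]
and equating each coefficient with $-\dot\eta^{(k)}$, gives exactly the coupled Riccati/linear ODEs \eqref{eta1}--\eqref{phi10} with terminal data from $g_{(1)}$ and $g_{(2)}$, so $\eta^{(1)}(T)=c_1$, $\phi^{(1)}(T)=c_2$ and the others vanish. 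The analogous calculation for $V^{(2)j}$ couples the $\phi$-system to the $\eta$-system through the $(\overline x^{(1)},\overline x^{(2)})$ cross-terms.

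The main obstacle will be verifying closure of this monomial basis under the HJB operator, so that the ten chosen scalars really absorb every coefficient generated by the quadratic term $\bigl(\partial_{x^{(1)i}} V^{(1)i}\bigr)^2$ and by the mixing of group-$1$ and group-$2$ averages; the hypothesis $q_k^2\leq\epsilon_k$ then provides the sign structure that makes the coupled Riccati block in $(\eta^{(1)},\ldots,\eta^{(6)})$ and $(\phi^{(1)},\ldots,\phi^{(6)})$ well-posed on $[0,T]$. Once closure and solvability of the ODE system are in hand, a standard verification step using It\^o's formula on $V^{(k)i}(t,X_t)$ against any admissible competitor, together with admissibility of $\hat\alpha^{(k)i}$ via \eqref{admissible}, shows that \eqref{optimal-finite-ansatz-V1}--\eqref{optimal-finite-ansatz-V2} indeed realize a closed-loop Nash equilibrium with value \eqref{ansatz-1-prop}--\eqref{ansatz-2-prop}.
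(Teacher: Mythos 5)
Your overall route is the same as the paper's: write the coupled HJB equations, take the first-order condition $\hat\alpha^{(k)i}=q_k(\overline x^{\lambda_k}-x^{(k)i})-\partial_{x^{(k)i}}V^{(k)i}$, insert the ten-monomial quadratic ansatz, identify coefficients to get the coupled system \eqref{eta1}--\eqref{phi10}, and finish with an It\^o-based verification theorem; the derivation of \eqref{tildeeta}--\eqref{tildephi} from the $1/N_1-1$ and $1/N_1$ weights and the decomposition of $\overline x^{\lambda_k}$ is exactly what the paper does.

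There is, however, one concrete error: you assert that the terminal data are $\eta^{(1)}(T)=c_1$, $\phi^{(1)}(T)=c_2$ ``and the others vanish.'' Since the terminal cost is $\frac{c_1}{2}(\overline x^{\lambda_1}-x^{(1)i})^2$ and $\overline x^{\lambda_1}-x^{(1)i}=(\overline x^{(1)}-x^{(1)i})+\lambda_1(\beta_1-1)\overline x^{(1)}+\lambda_1\beta_2\overline x^{(2)}$, expanding the square in your own monomial basis forces nonzero terminal values $\eta^{(2)}_T=c_1\lambda_1^2(\beta_1-1)^2$, $\eta^{(3)}_T=c_1\lambda_1^2\beta_2^2$, $\eta^{(4)}_T=c_1\lambda_1(\beta_1-1)$, $\eta^{(5)}_T=c_1\lambda_1\beta_2$, $\eta^{(6)}_T=c_1\lambda_1^2(\beta_1-1)\beta_2$, and analogously for $\phi^{(2)}_T,\ldots,\phi^{(6)}_T$; only $\eta^{(7)}_T=\cdots=\eta^{(10)}_T=0$ and $\phi^{(7)}_T=\cdots=\phi^{(10)}_T=0$. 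With all of them set to zero the ansatz does not match $g_{(1)},g_{(2)}$ at $T$, the ODE solutions change, and the verification step fails. A smaller point: the hypothesis $q_k^2\leq\epsilon_k$ only gives convexity of the running cost and the sign of $\epsilon_k-q_k^2$; it does not by itself yield well-posedness of the coupled Riccati block on $[0,T]$. The theorem assumes solvability of \eqref{eta1}--\eqref{phi10}, and the paper establishes existence only in the regime of large $N_1,N_2$ via Proposition \ref{Prop_suff}, so your claim that the sign structure ``makes the coupled Riccati block well-posed'' overstates what is proved.
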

\begin{proof}
See Appendix \ref{Appex-1}. 
\end{proof}

Similarly, due to the complexity of the coupled ODEs given by (\ref{eta1}-\ref{phi10}), we now study the existence of the coupled system (\ref{eta1}-\ref{phi10}) in the case of $N_1\rightarrow\infty$ and $N_2\rightarrow\infty$.

\begin{prop}\label{Prop_suff}
As $N_1\rightarrow\infty$ and $N_2\rightarrow\infty$, the coupled equations \eqref{eta1} to \eqref{eta6} and \eqref{phi1} to \eqref{phi6} are rewritten as 
written as 
\ba
\label{eta1_N} \dot{\widehat{\eta}}^{(1)}_t&=&2 q_1\widehat\eta^{(1)}_t+ (\widehat\eta_t^{(1)})^2-(\epsilon_1-q_1^2),\\
 \nonumber  {\dot{\widehat\eta}^{(2)}_t}&=&2\left(-\widehat\eta^{(4)}_t+q_1\lambda_1(1-\beta_1)\right) \widehat\eta^{(2)}_t-(\widehat\eta_t^{(4)})^2\\
&&-2\left(\widehat\phi^{(4)}_t+q_2\lambda_2\beta_1\right)\widehat\eta^{(6)}_t-(\epsilon_1-q_1^2)\lambda_1^2(\beta_1-1)^2,\label{eta2_N}\\
\nonumber   {\dot{\widehat\eta}^{(3)}_t} &=&2\left(-\widehat\phi^{(5)}_t+q_2\lambda_2(1-\beta_2)\right)\widehat\eta^{(3)}_t-(\widehat\eta_t^{(5)})^2\\
&&-2\left(\widehat\eta^{(5)}_t+q_1\lambda_1\beta_2\right)\widehat\eta^{(6)}_t-(\epsilon_1-q_1^2)\lambda_1^2\beta_2^2\label{eta3_N}\\
\nonumber \dot{\widehat\eta}^{(4)}_t&=&q_1\left(1+\lambda_1(1-\beta_1)\right) \widehat\eta^{(4)}_t-(\widehat\eta^{(4)}_t)^2\\
 &&-\left(\widehat\phi^{(4)}_t+q_2\lambda_2\beta_1\right)\widehat\eta^{(5)}_t+(\epsilon_1-q_1^2)\lambda_1(1- \beta_1),\label{eta4_N}\\
 \dot{\widehat\eta}^{(5)}_t&=&\left(q_1-\widehat\eta^{(4)}_t-\widehat\phi^{(5)}_t+q_2\lambda_2(1-\beta_2)\right) \widehat\eta^{(5)}_t-q_1\lambda_1\beta_2\widehat\eta^{(4)}_t-(\epsilon_1-q_1^2)\lambda_1\beta_2,\label{eta5_N}\\ 
\nonumber\dot{\widehat\eta}^{(6)}_t&=&\left(-\widehat\eta^{(4)}_t-\widehat\phi^{(5)}_t+q_1\lambda_1(1-\beta_1)+q_2\lambda_2(1-\beta_2)\right)\widehat\eta^{(6)}_t-\left(\widehat\eta^{(5)}_t+q_1\lambda_1\beta_2\right)\widehat\eta^{(2)}_t\\
 &&-\widehat\eta_t^{(4)}\widehat\eta_t^{(5)}-\left(\widehat\phi^{(4)}_t+q_2\lambda_2\beta_1\right)\widehat \eta^{(3)}_t+(\epsilon_1-q_1^2)\lambda_1^2(1-\beta_1)\beta_2,\label{eta6_N}
\ea
\ba
\label{phi1_N}\dot{\widehat\phi}^{(1)}_t&=&2 q_2\widehat\phi^{(1)}_t+(\widehat\phi^{(1)}_t)^2 -(\epsilon_2-q_2^2),\\
  \nonumber  {\dot{\widehat\phi}^{(2)}_t}&=&2\left(-\widehat\eta^{(4)}_t+q_1\lambda_1(1-\beta_1)\right) \widehat\phi^{(2)}_t-(\widehat\phi_t^{(4)})^2\\
&&-2\left( \widehat\phi^{(4)}_t+q_2\lambda_2\beta_1\right)\widehat\phi^{(6)}_t-(\epsilon_2-q_2^2)\lambda_2^2\beta_2^2,\label{phi2_N}\\
\nonumber {\dot{\widehat\phi}^{(3)}_t} &=&2\left(-\widehat\phi^{(5)}_t+q_2\lambda_2(1-\beta_2)\right) \widehat\phi^{(3)}_t -(\phi_t^{(5)})^2\\
 &&-2\left(\widehat\eta^{(5)}_t+q_1\lambda_1\beta_2\right)\widehat \phi^{(6)}_t-(\epsilon_2-q_2^2)\lambda_2^2(\beta_2-1)^2,
\label{phi3_N}\\
 \dot{\widehat\phi}^{(4)}_t&=&\left(q_2-\widehat\eta^{(4)}_t-\widehat\phi_t^{(5)}+q_1\lambda_1(1-\beta_1)\right)\widehat\phi^{(4)}_t-q_2\lambda_2\beta_1\widehat \phi^{(5)}_t-(\epsilon_2-q_2^2)\lambda_2\beta_1,\label{phi4_N}\\
\nonumber \dot{\widehat\phi}^{(5)}_t&=&q_2\left(1+\lambda_2(1-\beta_2)\right)\widehat\phi^{(5)}_t-(\widehat\phi^{(5)}_t)^2\\
 &&-\left(\widehat\eta^{(5)}_t+q_1\lambda_1\beta_2\right)\widehat\phi_t^{(4)}+(\epsilon_2-q_2^2)\lambda_2(1-\beta_2),\label{phi5_N}\\
\nonumber\dot{\widehat\phi}^{(6)}_t&=&\left(-\widehat\eta^{(4)}_t-\widehat\phi^{(5)}_t+q_1\lambda_1(1-\beta_1)+q_2\lambda_2(1-\beta_2)\right) \widehat\phi^{(6)}_t-\widehat\phi_t^{(4)}\widehat\phi_t^{(5)}\\
\nonumber &&-\left(\widehat\eta^{(5)}_t+q_1\lambda_1\beta_2\right)\widehat \phi^{(2)}_t-\left(\widehat\phi^{(4)}_t+q_2\lambda_2\beta_1\right)\widehat\phi^{(3)}_t- (\epsilon_2-q_2^2)\lambda_2^2\beta_1(\beta_2-1),\\\label{phi6_N}
 \ea
with terminal conditions  
\ban
&&\widehat\eta_T^{(1)}=c_1,\quad  \widehat\eta_T^{(2)}=c_1\lambda_1^2(\beta_1-1)^2, \quad \widehat\eta_T^{(3)}=c_1\lambda_1^2\beta_2^2, \quad\widehat \eta_T^{(4)}=c_1\lambda_1(\beta_1-1),\\
&&\widehat\eta_T^{(5)}=c_1\lambda_1\beta_2,\quad\widehat \eta_T^{(6)}=c_1\lambda_1^2(\beta_1-1)\beta_2.
\ean
and 
\ban
&&\widehat\phi_T^{(1)}=c_2,\quad \widehat \phi_T^{(2)}=c_2\lambda_2^2\beta_1^2, \quad \widehat\phi_T^{(3)}=c_2\lambda_2^2(\beta_2-1)^2, \quad \widehat\phi_T^{(4)}=c_2\lambda_2\beta_1,\\
&&\widehat\phi_T^{(5)}=c_2\lambda_2(\beta_2-1),\quad \widehat\phi_T^{(6)}=c_2\lambda_2^2\beta_1(\beta_2-1).
\ean
where $0<\beta_1,\;\beta_2<1$, $\beta_1+\beta_2=1$, $0<\lambda_1,\;\lambda_2<1$, and $q_1,q_2,\epsilon_1,\epsilon_2>0$ with $\epsilon_1-q_1^2>0$ and $\epsilon_2-q_2^2>0$. The existence of the coupled equations \eqref{eta1_N} to \eqref{phi6_N} is verified. 
\end{prop}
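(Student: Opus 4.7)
The statement has two parts: deriving the limit system, and then verifying its solvability. I would dispatch the first part by direct inspection of the finite-$N$ relations \eqref{tildeeta}--\eqref{tildephi}. As $N_1,N_2\to\infty$ the factors $1/N_k$ vanish, so $\widetilde\eta^{(1)}\to\widehat\eta^{(1)}$, $\widetilde\eta^{(4)}\to\widehat\eta^{(4)}-\lambda_1(1-\beta_1)q_1$, $\widetilde\eta^{(5)}\to\widehat\eta^{(5)}+\lambda_1\beta_2 q_1$, $\widetilde\eta^{(7)}\to\widehat\eta^{(7)}$, and analogously for $\widetilde\phi^{(\cdot)}$, once the limiting objects are relabelled with hats. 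Substituting these formulas into the original Riccati ODEs \eqref{eta1}--\eqref{phi6} and discarding every remaining $O(1/N_k)$ term reproduces exactly the system \eqref{eta1_N}--\eqref{phi6_N}; the terminal conditions are inherited unchanged from $g^N_{(k)}$ since they do not depend on $N_k$. This is a bookkeeping calculation, group by group, coefficient by coefficient.

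For the existence part, I would exploit the triangular hierarchy of the limit system. First, the equations for $\widehat\eta^{(1)}$ and $\widehat\phi^{(1)}$ decouple from all other components and each one is a scalar Riccati equation of the form $\dot y=2qy+y^2-(\epsilon-q^2)$ with $y_T=c>0$. Since $\epsilon_k\geq q_k^2$, the right-hand side has a real, constant steady state and the Cole--Hopf linearisation gives an explicit, bounded solution on all of $[0,T]$, exactly as in the one-group analysis of \cite{R.Carmona2013}.

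Next, with $\widehat\eta^{(1)},\widehat\phi^{(1)}$ in hand, observe that \eqref{eta4_N}, \eqref{eta5_N}, \eqref{phi4_N}, \eqref{phi5_N} form a closed four-dimensional Riccati-type system in $(\widehat\eta^{(4)},\widehat\eta^{(5)},\widehat\phi^{(4)},\widehat\phi^{(5)})$ that is independent of the remaining components. Local existence is immediate by Picard--Lindel\"of since the vector field is polynomial; the real work is to derive an \emph{a priori} bound on $[0,T]$ that rules out finite-time blow-up. I would obtain such a bound by recognising this quadruple as the Riccati representation of an auxiliary linear-quadratic two-group control problem with positive semidefinite cost (ensured by $\epsilon_k-q_k^2>0$), so that the symmetric matrix assembled from the four unknowns stays within a compact range governed by the terminal data and the coefficients $q_k,\epsilon_k,\lambda_k,\beta_k$. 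This is the main obstacle of the proof: the cross-group nonlinearities $(\widehat\phi^{(4)}+q_2\lambda_2\beta_1)\widehat\eta^{(5)}$ and $(\widehat\eta^{(5)}+q_1\lambda_1\beta_2)\widehat\phi^{(4)}$ prevent a one-dimensional comparison argument, and the constraints $\beta_1+\beta_2=1$ with $0<\beta_k<1$ and $\epsilon_k>q_k^2$ are precisely what the energy estimate needs to close.

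Finally, once $(\widehat\eta^{(4)},\widehat\eta^{(5)},\widehat\phi^{(4)},\widehat\phi^{(5)})$ is controlled on $[0,T]$, the remaining equations \eqref{eta2_N}, \eqref{eta3_N}, \eqref{eta6_N}, \eqref{phi2_N}, \eqref{phi3_N}, \eqref{phi6_N} form a six-dimensional \emph{linear} ODE system in $(\widehat\eta^{(2)},\widehat\eta^{(3)},\widehat\eta^{(6)},\widehat\phi^{(2)},\widehat\phi^{(3)},\widehat\phi^{(6)})$ with bounded time-dependent coefficients and bounded inhomogeneities. Classical linear ODE theory then gives global existence on all of $[0,T]$, completing the verification.
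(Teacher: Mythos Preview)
Your triangular decomposition is exactly the paper's: first the decoupled scalar Riccati equations for $\widehat\eta^{(1)},\widehat\phi^{(1)}$, then the closed four-dimensional block $(\widehat\eta^{(4)},\widehat\eta^{(5)},\widehat\phi^{(4)},\widehat\phi^{(5)})$, then the remaining six equations as a linear system with known coefficients. The derivation of the limit system and the first and last layers are handled the same way in both.

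The substantive difference is in the four-dimensional Riccati block, and here your proposal has a gap. You say you would ``recognise this quadruple as the Riccati representation of an auxiliary linear-quadratic two-group control problem'' and extract an a priori bound from positive semidefiniteness, but you do not exhibit such a representation, and the block is not visibly a symmetric matrix Riccati equation (the unknowns are four scalars, not the entries of a symmetric matrix with the usual structure). You yourself flag this as ``the main obstacle'' and note that the cross terms block a one-dimensional comparison; that obstacle is not actually overcome in your write-up.

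The paper bypasses this entirely with an algebraic observation you missed: adding \eqref{eta4_N} to \eqref{eta5_N} and \eqref{phi4_N} to \eqref{phi5_N} shows that the sums $S_1:=\widehat\eta^{(4)}+\widehat\eta^{(5)}$ and $S_2:=\widehat\phi^{(4)}+\widehat\phi^{(5)}$ satisfy a \emph{homogeneous linear} $2\times 2$ system with terminal data $S_1(T)=S_2(T)=0$ (using $\beta_1+\beta_2=1$), hence $S_1\equiv S_2\equiv 0$ on $[0,T]$. This collapses the four-dimensional block to a two-dimensional one in $(\widehat\eta^{(5)},\widehat\phi^{(4)})$. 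After time reversal the two forward equations have positive initial data and a structure that forces positivity to persist; positivity then lets you drop the quadratic and cross terms to get linear differential inequalities, yielding explicit uniform upper bounds and hence global existence by elementary comparison. So where you invoke an unspecified energy/LQ argument, the paper finds a conserved relation that reduces the problem to a case where a one-dimensional comparison \emph{does} work.
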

\begin{proof}
We first observe that the existence of the coupled equations (\ref{eta1_N}-\ref{phi6_N}) is rely on the existence of the coupled Riccati equations (\ref{eta4_N}-\ref{eta5_N}) and (\ref{phi4_N}-\ref{phi5_N}). Using (\ref{eta4_N}-\ref{eta5_N}) and (\ref{phi4_N}-\ref{phi5_N}), we have 
\ba
\nonumber \dot{\widehat\eta_t^{(4)}}+\dot{\widehat\eta_t^{(5)}}&=&q_1\widehat\eta^{(4)}_t-(\widehat\eta^{(4)}_t)^2-\widehat\phi^{(4)}_t\widehat\eta^{(5)}_t+q_1\widehat\eta^{(5)}_t-\widehat\phi^{(5)}_t\widehat\eta^{(5)}_t-\widehat\eta^{(4)}_t\widehat\eta^{(5)}_t\\
&=&\left(q_1-\widehat\eta^{(4)}_t\right)\left(\widehat\eta^{(4)}_t+\widehat\eta^{(5)}_t\right)-\widehat\eta^{(5)}_t\left(\widehat\phi^{(4)}_t+\widehat\phi^{(5)}_t\right)\label{eqn_sum_1}
\ea
and similarly 
\ba
\dot{\widehat\phi_t^{(4)}}+\dot{\widehat\phi_t^{(5)}}&=&\left(q_2-\widehat\phi^{(5)}_t\right)\left(\widehat\phi^{(4)}_t+\widehat\phi^{(5)}_t\right)-\widehat\phi^{(4)}_t\left(\widehat\eta^{(4)}_t+\widehat\eta^{(5)}_t\right)\label{eqn_sum_2}
\ea
with the terminal conditions $\widehat\eta^{(4)}_T+\widehat\eta^{(5)}_T=0$ and $\widehat\phi^{(4)}_T+\widehat\phi^{(5)}_T=0$.  Observe that \eqref{eqn_sum_1} and \eqref{eqn_sum_2} are linear equations for $\widehat\eta^{(4)}_t+\widehat\eta^{(5)}_t$ and $\widehat\phi^{(4)}_t+\widehat\phi^{(5)}_t$ with the terminal condition being zero implying 
\be\label{suff_cond_1}
\widehat\eta^{(4)}_t+\widehat\eta^{(5)}_t=0, \quad \widehat\phi^{(4)}_t+\widehat\phi^{(5)}_t=0,
\en
for $0 \leq t\leq T$. Namely $\widehat\eta^{(4)}_t=-\widehat\eta^{(5)}_t$ and $\widehat\phi^{(4)}_t=-\widehat\phi^{(5)}_t$ for $0\leq t\leq T$.     
Hence, it is sufficient to study the existence of $\widehat\eta^{(5)}_t$ and $\widehat\phi^{(4)}_t$. 

Inserting $\widehat\eta^{(4)}_t=-\widehat\eta^{(5)}_t$ and $\widehat\phi^{(5)}_t=-\widehat\phi^{(4)}_t$ into $\widehat\eta^{(5)}_t$ and $\widehat\phi^{(4)}_t$ gives 
\ban
\nonumber\dot{\widehat\eta}_t^{(5)}&=&\left(q_1+\widehat\eta_t^{(5)}+\widehat\phi_t^{(4)}+q_2\lambda_2\beta_1\right)\widehat\eta_t^{(5)}+q_1\lambda_1\beta_2\widehat\eta_t^{(5)}-(\epsilon_1-q_1^2)\lambda_1\beta_2\\
&=&\left(q_1+q_2\lambda_2\beta_1+q_1\lambda_1\beta_2\right)\widehat\eta_t^{(5)}+(\widehat\eta_t^{(5)})^2+\widehat\phi_t^{(4)}\widehat\eta_t^{(5)}-(\epsilon_1-q_1^2)\lambda_1\beta_2,
\ean
and  
\ban
\dot{\widehat\phi}_t^{(4)}&=&\left(q_2+q_1\lambda_1\beta_2+q_2\lambda_2\beta_1\right)\widehat\phi_t^{(4)}+(\widehat\phi_t^{(4)})^2+\widehat\eta_t^{(5)}\widehat\phi_t^{(4)}-(\epsilon_2-q_2^2)\lambda_2\beta_1.
\ean
Now, consider ${\check\eta}_t^{(5)}=\widehat\eta_{T-t}^{(5)}$ and $\check\phi_t^{(4)}=\widehat\phi_{T-t}^{(4)}$. Then
\ba
\dot{\check\eta}_t^{(5)}&=&-\left(q_1+q_2\lambda_2\beta_1+q_1\lambda_1\beta_2\right)\check\eta_t^{(5)}-(\check\eta_t^{(5)})^2-\check\phi_t^{(4)}\check\eta_t^{(5)}+(\epsilon_1-q_1^2)\lambda_1\beta_2,\\
\dot{\check\phi}_t^{(4)}&=&-\left(q_2+q_1\lambda_1\beta_2+q_2\lambda_2\beta_1\right)\check\phi_t^{(4)}-(\check\phi_t^{(4)})^2-\check\eta_t^{(5)}\check\phi_t^{(4)}+(\epsilon_2-q_2^2)\lambda_2\beta_1,
\ea
with the initial conditions $\check\eta_0^{(5)}=c_1\lambda_1\beta_2$ and $\check\phi_0^{(4)}=c_2\lambda_2\beta_1$. Simple argument implies $\check\eta_t^{(5)}$ and $\check\phi_t^{(4)}$ being positive for $0\leq t\leq T$. Then, we have 
\ban
\dot{\check\eta}_t^{(5)}&\leq& -\left(q_1+q_2\lambda_2\beta_1+q_1\lambda_1\beta_2\right)\check\eta_t^{(5)}+(\epsilon_1-q_1^2)\lambda_1\beta_2,\\
\dot{\check\phi}_t^{(4)}&\leq& -\left(q_2+q_1\lambda_1\beta_2+q_2\lambda_2\beta_1\right)\check\phi_t^{(4)}+(\epsilon_2-q_2^2)\lambda_2\beta_1,
\ean
 leading to 
 \ban
 e^{\left(q_1+q_2\lambda_2\beta_1+q_1\lambda_1\beta_2\right)t}\check\eta_t^{(5)}&\leq&c_1\lambda_1\beta_2+(\epsilon_1-q_1^2)\lambda_1\beta_2\int_0^t e^{\left(q_1+q_2\lambda_2\beta_1+q_1\lambda_1\beta_2\right)s}ds\\
 &\leq&c_1\lambda_1\beta_2+\frac{(\epsilon_1-q_1^2)\lambda_1\beta_2}{q_1+q_2\lambda_2\beta_1+q_1\lambda_1\beta_2}e^{\left(q_1+q_2\lambda_2\beta_1+q_1\lambda_1\beta_2\right)t}\\
 \ean
such that 
\be
0\leq \check\eta_t^{(5)} \leq c_1\lambda_1\beta_2 e^{-\left(q_1+q_2\lambda_2\beta_1+q_1\lambda_1\beta_2\right)t}+\frac{(\epsilon_1-q_1^2)\lambda_1\beta_2}{q_1+q_2\lambda_2\beta_1+q_1\lambda_1\beta_2}.
\en
Similarly, we also get
\be
0\leq \check\phi_t^{(4)}\leq c_2\lambda_2\beta_1 e^{-\left(q_2+q_1\lambda_1\beta_2+q_2\lambda_2\beta_1\right)t}+\frac{(\epsilon_2-q_2^2)\lambda_2\beta_1}{q_2+q_1\lambda_1\beta_2+q_2\lambda_2\beta_1}. 
\en
The proof is complete. 
\end{proof}

According to the results in Proposition \ref{Prop_suff}, as $N_1$ and $N_2$ are large enough,  the existence of the coupled ODEs  \eqref{eta1} to \eqref{phi10} are guaranteed.

\subsection{Open-loop Nash Equilibria}\label{FBSDE-approach}
Referring to \cite{R.Carmona2013}, we now study the open-loop Nash equilibria where the strategies are the functions given at initial time. Namely, the strategies are the function of $t$ and $X_0$ given in \cite{CarmonaSIAM2016}. 
\begin{theorem}\label{Hete-open}
Assume $q_k^2\leq \epsilon_k$ for $k=1,2$ and $\eta^{o,(i)}_t$ and $\phi^{o,(i)}_t$ for $i=1,\cdots,4$ satisfying \eqref{eta_open-1} to \eqref{phi_open-4}. The open-loop Nash equilibria are written as 
\ba
\nonumber \hat\alpha^{o,(1)i}&=&\left(q_1+(1-\frac{1}{\widetilde N_1})\eta_t^{o,(1)}\right)(\overline X_t^{(1)}-X_t^{(1)i})\\
\nonumber&&+\left(q_1\lambda_1(\beta_1-1)+(1-\frac{1}{\widetilde N_1})\eta_t^{o,(2)}\right)\overline X_t^{(1)}\\
&&+\left(q_1\lambda_1\beta_2+(1-\frac{1}{\widetilde N_1})\eta_t^{o,(3)}\right)\overline X_t^{(2)}+\left(1-\frac{1}{\widetilde N_1}\right)\eta_t^{o,(4)},\\
\nonumber \hat\alpha^{o,(2)j}&=&\left(q_2+(1-\frac{1}{\widetilde N_2})\phi_t^{o,(1)}\right)(\overline X_t^{(2)}-X_t^{(2)j})\\
\nonumber &&+\left(q_2\lambda_2\beta_1+(1-\frac{1}{\widetilde N_2})\phi_t^{o,(2)}\right)\overline X_t^{(1)}\\
&&+\left(q_2\lambda_2(\beta_2-1)+(1-\frac{1}{\widetilde N_2})\phi_t^{o,(3)}\right)\overline X_t^{(2)}+\left(1-\frac{1}{\widetilde N_2}\right)\phi_t^{o,(4)},
\ea
 where $$\frac{1}{\widetilde N_k}=\frac{1-\lambda_k}{N_k}+\frac{\lambda_k}{N},$$ for $k=1,2$. 
\end{theorem}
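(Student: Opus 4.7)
The plan is to apply the stochastic Pontryagin minimum principle, derive the adjoint FBSDE system, and then verify the stated form of the equilibria via a linear ansatz on the adjoint processes $Y^{(k)i}$. Since the running and terminal costs are quadratic and the state dynamics are linear in the control, the Hamiltonian is convex in $\alpha^{(k)i}$, so the necessary condition is also sufficient. The first-order condition gives
\[
\hat\alpha^{(k)i}_t \;=\; -Y^{(k)i}_t + q_k\bigl(\overline X^{\lambda_k}_t - X^{(k)i}_t\bigr),
\]
and the adjoint BSDE reads $dY^{(k)i}_t = -\partial_{x^{(k)i}} f^N_{(k)}\,dt + Z^{(k)i}_t\!\cdot\! dW_t$ with terminal condition $Y^{(k)i}_T = \partial_{x^{(k)i}} g^N_{(k)}$. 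The key arithmetic at this stage is that, by definition of $\overline X^{\lambda_k}$,
\[
\partial_{x^{(k)i}} \overline X^{\lambda_k} = \frac{1-\lambda_k}{N_k} + \frac{\lambda_k}{N} = \frac{1}{\widetilde N_k},
\]
so differentiating the quadratic costs produces the factor $-(1-1/\widetilde N_k)$ in both the BSDE driver and its terminal datum. This is precisely the prefactor that appears in the stated equilibria, and it is what makes the ansatz below close.

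Next I would postulate the linear ansatz
\[
Y^{(1)i}_t = -\Bigl(1-\tfrac{1}{\widetilde N_1}\Bigr)\!\Bigl[\eta^{o,(1)}_t(\overline X^{(1)}_t - X^{(1)i}_t) + \eta^{o,(2)}_t\overline X^{(1)}_t + \eta^{o,(3)}_t\overline X^{(2)}_t + \eta^{o,(4)}_t\Bigr],
\]
and the analogous expression for $Y^{(2)j}$ with coefficients $\phi^{o,(\cdot)}$. Substituting into the first-order condition and rewriting $\overline X^{\lambda_1} - X^{(1)i}$ as $(\overline X^{(1)}-X^{(1)i}) + \lambda_1(\beta_1-1)\overline X^{(1)} + \lambda_1\beta_2\overline X^{(2)}$ (using $\beta_1+\beta_2=1$ and $\overline X = \beta_1\overline X^{(1)}+\beta_2\overline X^{(2)}$) reproduces the structural form stated in the theorem. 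It remains to check that the ansatz is consistent with the BSDE.

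To do that I would compute $dY^{(1)i}_t$ via It\^o on the right-hand side of the ansatz, using the forward dynamics of $X^{(1)i}_t$ together with the induced dynamics of $\overline X^{(1)}_t$ and $\overline X^{(2)}_t$ obtained by averaging the closed-loop forward equations. The drift produced in this way must match $-\partial_{x^{(1)i}} f^N_{(1)}$ evaluated at the optimal controls; equating the resulting polynomial in $(\overline X^{(1)}-X^{(1)i})$, $\overline X^{(1)}$, $\overline X^{(2)}$, and the constant term gives four scalar identities, which are precisely the ODEs \eqref{eta_open-1}--\eqref{eta_open-4} satisfied by $\eta^{o,(i)}$. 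The same procedure for $Y^{(2)j}$ yields \eqref{phi_open-1}--\eqref{phi_open-4}, while the terminal coefficients are read off directly from $\partial_{x^{(k)i}} g^N_{(k)}$ and the $-(1-1/\widetilde N_k)$ factor. The $Z^{(k)i}_t$ process is identified with the diffusion coefficient generated by It\^o; the admissibility of $\hat\alpha^{(k)i}$ then follows from the boundedness of the deterministic coefficients and standard $L^2$-estimates for the coupled linear FBSDE under $q_k^2\le\epsilon_k$.

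The main obstacle is the bookkeeping of cross-group terms. Because the drift of $\overline X^{(1)}_t$ under the ansatz contains contributions from $\overline X^{(2)}_t$ (through $q_1\lambda_1\beta_2$ and $\eta^{o,(3)}$) and vice versa, the matching procedure couples the $\eta^{o}$ and $\phi^{o}$ systems; in particular, the constant coefficient $\eta^{o,(4)}_t$ must absorb the deterministic drift $\gamma^{(1)}_t$ together with the combinations of $\gamma^{(2)}_t$ propagated through $\overline X^{(2)}_t$. Handling this coupling carefully — and verifying that the factor $1-1/\widetilde N_k$ cancels correctly on both sides of the matched identities — is where the computation becomes delicate, but once the ansatz is correctly formulated the verification is algebraic and reduces to invoking the ODEs \eqref{eta_open-1}--\eqref{phi_open-4} assumed in the statement.
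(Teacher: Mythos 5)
Your proposal follows essentially the same route as the paper's Appendix B proof: the Pontryagin principle, the first-order condition $\hat\alpha^{o,(k)i}=q_k(\overline X^{\lambda_k}_t-X^{(k)i}_t)-Y^{(k)i,(k)i}_t$, a linear ansatz in $(\overline X^{(k)}-X^{(k)i})$, $\overline X^{(1)}$, $\overline X^{(2)}$ and a constant with the prefactor $-(1-\frac{1}{\widetilde N_k})$, and coefficient matching of the It\^o drift against the BSDE driver to arrive at the coupled ODEs \eqref{eta_open-1}--\eqref{phi_open-4} and the stated equilibria. The only cosmetic difference is that you track just the diagonal adjoint component, whereas the paper writes the whole family $Y^{(k)i,(j)l}$ (whose off-diagonal components carry the same bracket with prefactors $\frac{1}{\widetilde N_k}$ or $\frac{\lambda_k}{N}$); this reduction is legitimate since only the diagonal component enters the optimality condition and its BSDE closes on itself.
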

\begin{proof}
See Appendix \ref{Appex-open}. 
\end{proof}
Due to the complexity of coupled ordinary differential equations (ODEs) (\ref{eta_open-1}-\ref{phi_open-4}), we also verify the existence of (\ref{eta_open-1}-\ref{phi_open-4}) in the case of $N_1\rightarrow\infty$ and $N_2\rightarrow\infty$ where the equations are given by
\ba
\dot{\widehat{\eta}}^{o,(1)}_t&=&2 q_1\widehat\eta^{o,(1)}_t+ (\widehat\eta_t^{o,(1)})^2-(\epsilon_1-q_1^2),\\
\nonumber \dot{\widehat\eta}^{o,(2)}_t&=&q_1\left(1+\lambda_1(1-\beta_1)\right) \widehat\eta^{o,(2)}_t-(\widehat\eta^{o,(2)}_t)^2\\
&&-\left(\widehat\phi^{o,(2)}_t+q_2\lambda_2\beta_1\right)\widehat\eta^{o,(3)}_t+(\epsilon_1-q_1^2)\lambda_1(1- \beta_1),\\
\nonumber  \dot{\widehat\eta}^{o,(3)}_t&=&\left(q_1-\widehat\eta^{o,(2)}_t-\widehat\phi^{o,(3)}_t+q_2\lambda_2(1-\beta_2)\right) \widehat\eta^{o,(3)}_t\\
  &&-q_1\lambda_1\beta_2\widehat\eta^{o,(2)}_t-(\epsilon_1-q_1^2)\lambda_1\beta_2,\\
\dot{\widehat\phi}^{o,(1)}_t&=&2 q_2\widehat\phi^{o,(1)}_t+(\widehat\phi^{o,(1)}_t)^2 -(\epsilon_2-q_2^2),\\
  \nonumber\dot{\widehat\phi}^{o,(2)}_t&=&\left(q_2-\widehat\eta^{o,(2)}_t-\widehat\phi_t^{o,(3)}+q_1\lambda_1(1-\beta_1)\right)\widehat\phi^{o,(2)}_t\\
&&  -q_2\lambda_2\beta_1\widehat \phi^{o,(3)}_t-(\epsilon_2-q_2^2)\lambda_2\beta_1, \\
\nonumber \dot{\widehat\phi}^{o,(3)}_t&=&q_2\left(1+\lambda_2(1-\beta_2)\right)\widehat\phi^{o,(3)}_t-(\widehat\phi^{o,(3)}_t)^2\\
 &&-\left(\widehat\eta^{o,(3)}_t+q_1\lambda_1\beta_2\right)\widehat\phi_t^{o,(2)}+(\epsilon_2-q_2^2)\lambda_2(1-\beta_2),
\ea
with the terminal conditions 
\ban
\widehat\eta_T^{o,(1)}=c_1,\;\widehat\eta_T^{o,(2)}=c_1\lambda_1(\beta_1-1),\;\widehat\eta_T^{o,(3)}=c_1\lambda_1\beta_2,\\
\widehat\phi_T^{o,(1)}=c_2,\;\widehat\eta_T^{o,(2)}=c_2\lambda_2\beta_1,\;\widehat\phi_T^{o,(3)}=c_2\lambda_2(\beta_1-1). 
\ean
Note that according to the results in Section \ref{HJB-approach}, we obtain the Riccati equations $\widehat\eta_T^{o,(2)}=\widehat\eta_T^{(4)}$ and $\widehat\phi_T^{o,(3)}=\widehat\phi_T^{(5)}$ and the linear ODEs $\widehat\eta_T^{o,(3)}=\widehat\eta_T^{(5)}$ and $\widehat\phi_T^{o,(2)}=\widehat\phi_T^{(4)}$. Hence, the existence of the coupled ODEs is verified by Proposition \ref{Prop_suff}.

 We then discuss  $\epsilon$-Nash Equilibria in the case of the general $d$ groups mean field game. Namely, $N\rightarrow\infty$, $N_k\rightarrow\infty$ with $\frac{N_k}{N}\rightarrow\beta_k$ for $k=1,\cdots,d$.

\section{$\epsilon$-Nash Equilibria: Mean Field Games}\label{MFG}
We return to the case of the general $d$ groups. The $i$-th bank in group $k$ minimizes the objective function given by
\ba 
\nonumber J^{(k)i}(\alpha)&=&\EE\bigg\{\int_{0}^{T}
 \left[\frac{(\alpha_{t}^{(k)i})^2}{2}-q_k\alpha_{t}^{(k)i}\left(\overline{X}^{\lambda_k}_{t}-X^{(k)i}_{t}\right)
+\frac{\epsilon_k}{2}\left(\overline{X}^{\lambda_k}_{t}-X^{(k)i}_{t}\right)^2\right]dt\\
& &+ \frac{c_k}{2}\left(\overline{X}^{\lambda_k}_{T}-X^{(k)i}_{T}\right)^2\bigg\},
\ea
with $q_k^2<\epsilon_k$ subject to 
\ba
\nonumber dX^{(k)i}_t &=& (\alpha_{t}^{(k)i}+\gamma^{(k)}_{t})dt\\
 &&+\sigma_k\left(\rho dW^{(0)}_t+\sqrt{1-\rho^2}\left(\rho_{k}dW_t^{(k)}+\sqrt{1-\rho^2_{k}}dW^{(k)i}_t\right)\right), 
\ea
with the initial value $X_0^{(k)i}$ which may also be a squared integrable random variable $\xi^{(k)}$. In the mean field limit as $N\rightarrow \infty$ and $\frac{N_k}{N}\rightarrow\beta_k$ for all $k$, as in the case of one group, the $d$ heterogenous groups is solved by the $d$-players game. Referred to \cite{R.Carmona2013}, the scheme to solve the $\epsilon$-Nash equilibria is as follows:
\begin{enumerate}
\item Fix $m^{(k)}_t=\EE[X^{(k)}_t|(W^{(0)}_s)_{s\leq t}]$  which is a candidate for the limit of $\overline{X}^{(k)}_t$ as $N_k\rightarrow \infty$:
    \[
    m^{(k)}_t=\lim_{I_k\rightarrow\infty}\overline{X}^{(k)}_t,
    \]
for all $k$, and
    \[
    M_t=\lim_{N,N_1\ldots,N_k\rightarrow\infty}\sum_{k=1}^d\frac{N_k}{N}\overline{X}^{(k)}_t=\sum_{k=1}^d\beta_km^{(k)}_t,
    \]
  where a vector of standard Brownian motions $W^{(0)}=(W^{(0),(0)},\cdots,W^{(0),(d)})$ represents the common noises. 
\item Substitute $m_t^{(k)}$ to $\overline X^{(k)}$ and solve the $d$-players control problem through minimizing the objective function written as
\ban
&& \nonumber  \inf_{ \alpha^{(k)}\in {\cal{A}}} \EE\bigg\{\int_{0}^{T} \left[\frac{(\alpha_{t}^{(k)})^2}{2}-q_k\alpha_{t}^{(k)}\left(M^{\lambda_k}_{t}-X^{(k)}_{t}\right)+\frac{\epsilon_k}{2}\left(M^{\lambda_k}_{t}-X^{(k)}_{t}\right)^2\right]dt\\
&&\quad\quad\quad+ \frac{c_k}{2}\left(M^{\lambda_k}_{T}-X^{(k)}_{T}\right)^2\bigg\}
\ean
with $q_k^2<\epsilon_k$  subject to the dynamics
      \ba
\nonumber dX^{(k)}_t &=& (\alpha_{t}^{(k)}+\gamma^{(k)}_{t})dt\\
\nonumber &&+\sigma_k\left(\rho dW^{(0),(0)}_t+\sqrt{1-\rho^2}\left(\rho_{k}dW_t^{(0),(k)}+\sqrt{1-\rho^2_{k}}dW^{(k)}_t\right)\right),\\\label{X-d-groups}
\ea
with $M_t^{\lambda_k}=(1-\lambda_k)m_t^{(k)}+\lambda_kM_t$ where $W_t^{(k)}$ is a Brownian motion independent of $W_t^{(0)}$. 
\item Similarly, solve the fixed point problem: find $$m^{(k)}_t=\EE[X^{(k)}_t|(W^{(0)}_s)_{s\leq t},(W_s^{(0),(k)})_{s\leq t}]$$ for all $t$.
\end{enumerate}

\begin{theorem}\label{Hete-MFG-prop}
Assuming $q_k^2\leq \epsilon_k$ and $\eta_t^{m,(k)}$, $\psi^{m,(k),h}_t$, and $\mu^{m,(k)}_t$ satisfying 
\ba
\label{Hete-eta-MFG}\dot\eta_t^{m,(k)}&=&2q_k \eta_t^{m,(k)}+(\eta_t^{m,(k)})^2-(\epsilon_k-q_k^2),\\
\nonumber  \dot\psi_t^{m,(k),h_1}&=&q_k\psi_t^{m,(k),h_1}-\sum_{h=1}^d\psi_t^{m,(k),h}\left(\psi_t^{m,(h),h_1}+q_h\lambda_h(\beta_{h_1 }-\delta_{h,h_1})\right)\\
&&-(\epsilon_k-q_k^2)\lambda_k(\beta_{h_1}-\delta_{k,h_1})\label{Hete-psi-MFG}\\
 \dot\mu^{m,(k)}_t&=&q_k\mu^{m,(k)}_t-\sum_{h=1}^d\psi_t^{m,(k),h}(\mu_t^{m,(h)}+\gamma_t^{(h)}),
\label{Hete-mu-MFG}
\ea
with terminal conditions $\eta_T^{m,(k)}=c_k$, $\psi_T^{m,(k),h}=c_k\lambda_k(\beta_h-\delta_{k,h})$, and $\mu^{m,(k)}_T=0$ for $k,h=1,\cdots,d$, the $\epsilon$-Nash equilibrium is given by
\be\label{Hete-optimal-MFG}
\hat\alpha_t^{m,(k)}=(q_k+\eta^{m,(k)}_t)( m^{(k)}_t-x^{(k)})+\sum_{h=1}^d\widetilde\psi_t^{m,(k),h}m^{(h)}_t+\mu^{m,(k)}_t,\quad 
\en
where $m_t^{(k)}$ is given by
\ban
\nonumber dm^{(k)}_t&=&\bigg\{\sum_{h_1=1}^d\psi_t^{m,(k),h_1}m^{(h_1)}_t+\mu^{m,(k)}_t+\gamma^{(k)}_t+q_k\lambda_k\sum_{h_1=1}^d(\beta_{h_1}-\delta_{k,h_1})m^{(h_1)}_t\bigg\}dt\\
&&+\sigma_k\left(\rho dW^{(0),(0)}_t+\sqrt{1-\rho^2} \rho_{k}dW_t^{(0),(k)} \right)
\ean
for $k=1,\cdots,d$. Given $c_{\tilde k}\geq \max_{k,h}\left(\frac{q_k\lambda_k}{\lambda_h}-q_h\right)$ for $\tilde k=1,\cdots,d$, the existence of the coupled ODEs \eqref{Hete-eta-MFG} to \eqref{Hete-mu-MFG} can be verified.
\end{theorem}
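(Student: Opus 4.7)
The plan is to apply the stochastic Pontryagin minimum principle to the decoupled control problem of Step 2, treating the conditional means $(m^{(h)}_t)_{h=1}^{d}$ as exogenous processes adapted to the common-noise filtration $\sigma(W^{(0)}_s:s\le t)$. The Hamiltonian $H^{(k)}(t,x,y,\alpha)=(\alpha+\gamma^{(k)}_t)y+\tfrac{1}{2}\alpha^2-q_k\alpha(M^{\lambda_k}_t-x)+\tfrac{\epsilon_k}{2}(M^{\lambda_k}_t-x)^2$ is strictly convex in $\alpha$ under $q_k^2\le\epsilon_k$, so the minimizer is the candidate feedback $\hat\alpha^{m,(k)}_t=q_k(M^{\lambda_k}_t-X^{(k)}_t)-Y^{(k)}_t$, where the adjoint BSDE has driver $-q_k\hat\alpha^{m,(k)}_t+\epsilon_k(M^{\lambda_k}_t-X^{(k)}_t)$ and terminal condition $-c_k(M^{\lambda_k}_T-X^{(k)}_T)$.

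Next I would impose the affine ansatz $Y^{(k)}_t=\eta^{m,(k)}_t(X^{(k)}_t-m^{(k)}_t)-\sum_{h=1}^{d}\psi^{m,(k),h}_t m^{(h)}_t-\mu^{m,(k)}_t$, apply It\^o's formula using the dynamics of $X^{(k)}$ and of the still-unknown $m^{(h)}$, and match the coefficients of $X^{(k)}_t-m^{(k)}_t$, of each $m^{(h)}_t$, and of the constant in the BSDE. Using the decomposition $M^{\lambda_k}_t-X^{(k)}_t=-(X^{(k)}_t-m^{(k)}_t)+\sum_{h}\lambda_k(\beta_h-\delta_{k,h})m^{(h)}_t$, the coefficient of $X^{(k)}_t-m^{(k)}_t$ yields the scalar Riccati \eqref{Hete-eta-MFG}, the coefficient of $m^{(h)}_t$ yields the coupled Riccati \eqref{Hete-psi-MFG}, and the constant part yields the linear ODE \eqref{Hete-mu-MFG}, with terminal conditions read directly from the BSDE terminal value. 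Substituting the ansatz back into $\hat\alpha^{m,(k)}_t$ then produces \eqref{Hete-optimal-MFG}. The announced dynamics of $m^{(k)}_t$ follow by taking the conditional expectation of the controlled SDE given $\sigma(W^{(0)}_s:s\le t)$: the term $(q_k+\eta^{m,(k)}_t)(m^{(k)}_t-X^{(k)}_t)$ has conditionally zero mean, the drift $\mu^{m,(k)}_t+\gamma^{(k)}_t$ survives, and only the $W^{(0),(0)}$ and $W^{(0),(k)}$ diffusions remain. The $\epsilon$-Nash verification in the $N$-player game then proceeds as in \cite{R.Carmona2013} via the bound $\EE\sup_t|\overline{X}^{(k),N_k}_t-m^{(k)}_t|^2=O(1/N_k)$ combined with the Lipschitz structure of the feedback and the quadratic cost.

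The main obstacle is the existence claim for the coupled ODE system. The scalar Riccati \eqref{Hete-eta-MFG} is standard and globally solvable on $[0,T]$ under $\epsilon_k\ge q_k^2$, and \eqref{Hete-mu-MFG} is linear in $\mu$ once $\psi$ is known, so everything reduces to global existence for the coupled quadratic system \eqref{Hete-psi-MFG}. My plan is to introduce the shifted variables $\Psi^{(k),h}_t:=\psi^{m,(k),h}_t+q_k\lambda_k(\beta_h-\delta_{k,h})$, which absorb the cross terms $q_h\lambda_h(\beta_{h_1}-\delta_{h,h_1})$ in the driver into the quadratic part and turn the terminal value into $(c_k+q_k)\lambda_k(\beta_h-\delta_{k,h})$. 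The condition $c_{\tilde k}\ge\max_{k,h}(q_k\lambda_k/\lambda_h-q_h)$ is exactly what makes the sign of each $\Psi^{(k),h}_T$ agree with that of $\beta_h-\delta_{k,h}$ uniformly in $\tilde k$, so that a componentwise cone of the form $\{\sign(\Psi^{(k),h}_t)=\sign(\beta_h-\delta_{k,h}),\ |\Psi^{(k),h}_t|\le C\}$ becomes invariant under the backward flow. Establishing this invariance reduces to a tangent-cone check at each face, exploiting the sign of the forcing $(\epsilon_k-q_k^2)\lambda_k(\beta_{h_1}-\delta_{k,h_1})$ together with the quadratic coupling structure, after which Cauchy--Lipschitz local existence extends to all of $[0,T]$. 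The delicate point, and where I expect most of the work to lie, is organising these tangent-cone inequalities systematically over all $(k,h)$ pairs.
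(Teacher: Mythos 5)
Your first half --- the Pontryagin/FBSDE formulation, the affine ansatz for the adjoint process, coefficient matching to produce \eqref{Hete-eta-MFG}--\eqref{Hete-mu-MFG}, substitution back into the minimizer, and the conditional expectation giving the $m^{(k)}$ dynamics --- is essentially the paper's own derivation in Appendix C, and your sketch of the $\epsilon$-Nash verification via $\EE\sup_t|\overline X^{(k),N_k}_t-m^{(k)}_t|^2=O(1/N_k)$ is the standard completion (the paper itself only invokes a fixed point argument there). The genuine gap is in the existence argument for the coupled quadratic system \eqref{Hete-psi-MFG}. First, the role you assign to the hypothesis $c_{\tilde k}\geq \max_{k,h}(q_k\lambda_k/\lambda_h-q_h)$ is not correct: with your shift $\Psi^{(k),h}_t=\psi^{m,(k),h}_t+q_k\lambda_k(\beta_h-\delta_{k,h})$ the terminal value is $\Psi^{(k),h}_T=(c_k+q_k)\lambda_k(\beta_h-\delta_{k,h})$, whose sign agrees with $\beta_h-\delta_{k,h}$ automatically because $c_k,q_k,\lambda_k>0$; no condition is needed for that. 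What the condition actually controls (and how the paper uses it) is the cross-coupling coefficient $\psi^{m,(h),h_1}_t+q_h\lambda_h\beta_{h_1}-q_k\lambda_k\beta_{h_1}$ appearing after the diagonal entries are eliminated: at $t=T$ this equals $(c_h+q_h)\lambda_h\beta_{h_1}-q_k\lambda_k\beta_{h_1}$, which is nonnegative precisely under the stated inequality, and keeping this coefficient nonnegative is what allows the off-diagonal components to remain nonnegative along the backward flow. A sign cone anchored at zero is therefore not the right invariant set; one needs this strictly positive margin on the off-diagonal entries, which your formulation does not capture.

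Second, your plan omits the structural identity that makes the paper's a priori bound close: the row sums $\sum_{h_1=1}^d\psi^{m,(k),h_1}_t$ satisfy a linear homogeneous backward equation with zero terminal data and hence vanish identically (the $d$-group analogue of $\widehat\eta^{(4)}_t+\widehat\eta^{(5)}_t=0$ and $\widehat\phi^{(4)}_t+\widehat\phi^{(5)}_t=0$ in Proposition \ref{Prop_suff}). This eliminates the diagonal entries, reduces \eqref{Hete-psi-MFG} to a system in the off-diagonal entries only, and --- once their nonnegativity is secured as above --- the time-reversed sum $\check\psi_t=\sum_{k\neq h_1}\check\psi^{m,(k),h_1}_t$ obeys a scalar linear differential inequality, giving a uniform bound on $[0,T]$ and hence global existence. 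In your scheme this is exactly the step you defer ("the tangent-cone check at the upper face $|\Psi^{(k),h}_t|\le C$"): at that face the quadratic terms have no favorable sign on their own, so without the zero-row-sum cancellation and the nonnegativity structure there is no mechanism producing the bound $C$, and the existence claim is not established as written.
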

\begin{proof}
See Appendix \ref{Appex-Hete-MFG}. 
\end{proof}
In particular, in the case of $d=2$, as $N\rightarrow\infty$, $N_1\rightarrow\infty$, and $N_2\rightarrow\infty$ with $\frac{N_1}{N}\rightarrow\beta_1$ and $\frac{N_2}{N}\rightarrow\beta_2$, we observe that $\widetilde\eta^{(4)}_t\rightarrow\widetilde\psi_t^{m,(1),1}$, $\widetilde\eta^{(5)}_t\rightarrow\widetilde\psi_t^{m,(1),2}$, $\widetilde\eta_t^{(7)}\rightarrow\widetilde\mu_t^{m,(1)}$, $\widetilde\phi^{(4)}_t\rightarrow\widetilde\psi_t^{m,(2),1}$, $\widetilde\phi^{(5)}_t\rightarrow\psi_t^{m,(2),2}$, and  $\widetilde\phi_t^{(7)}\rightarrow\widetilde\mu_t^{m,(2)}$ for $0\leq t \leq T$ such that the $\epsilon$-Nash equilibria in the case of the mean field game with heterogeneous groups are the limit of the closed-loop and the open loop Nash equilibria in the case of the finite player game with heterogeneous groups. The results are consistent with \cite{Lacker2018}. Hence, the asymptotic optimal strategy for bank $(1)i$ is given by
\[
\hat\alpha^{m,(1)i}_t=(q_1+ \eta^{m,(1)}_t)( \overline x^{(1)}-x^{(1)i})+\widetilde\psi_t^{m,(1),1}\overline x^{(1)}+\widetilde\psi_t^{m,(1),2}\overline x^{(2)}+\mu^{m,(1)}_t. 
\]

\begin{remark}
The case of no common noise implies the given $m_t$ for $0\leq t \leq T$ being a deterministic function. For instance, in the case of $d=1$, the $\epsilon$-Nash equilibrium in mean field games can be obtained using the HJB equation written as
\ban
\partial_tV&+&\inf_{\alpha}\bigg\{\alpha\partial_xV+\frac{\sigma^2}{2}\partial_{xx}V +\frac{\alpha^2}{2}-q\alpha(m_t-x)+\frac{\epsilon}{2}(m_t-x)^2\bigg\}=0, \\
\ean
 with the terminal condition $V(T,x)=\frac{c}{2}(m_T-x)^2$. 
\end{remark}


\section{Financial Implications}\label{FI}
The aim of this section is to investigate the financial implications for this heterogeneous interbank lending and borrowing model.  We mainly comment on the closed-loop Nash equilibria in the case of finite players identical to the open-loop Nash equilibria and $\epsilon$-Nash equilibria. We recall the closed-loop Nash equilibria written as 
\ba
\label{optimal-finite-ansatz-V1-FI}
\hat\alpha^{(1)i}(t,x)&=&(q_1+\widetilde\eta^{(1)}_t)( \overline x^{(1)}-x^{(1)i})+\widetilde\eta^{(4)}_t\overline x^{(1)}+\widetilde\eta^{(5)}_t\overline x^{(2)}+\widetilde\eta_t^{(7)},\\
\hat\alpha^{(2)j}(t,x)&=&(q_2+\widetilde\phi^{(1)}_t)(\overline x^{(2)}-x^{(2)j})+\widetilde\phi^{(4)}_t\overline x^{(1)}+\widetilde\phi^{(5)}_t\overline x^{(2)}+\widetilde\phi_t^{(7)}.\label{optimal-finite-ansatz-V2-FI}
\ea
The corresponding optimal trajectories are written as
\ba
\nonumber dX_t^{(1)i}&=&\left\{(q_1+\widetilde\eta^{(1)}_t)( \overline X_t^{(1)}-X_t^{(1)i})+\widetilde\eta^{(4)}_t\overline X_t^{(1)}+\widetilde\eta^{(5)}_t\overline X_t^{(2)}+\widetilde\eta_t^{(7)}+\gamma^{(1)}_t\right\}dt \\
\nonumber&& +\sigma_1\left(\rho dW^{(0)}_t+\sqrt{1-\rho^2}\left(\rho_{1}dW_t^{(1)}+\sqrt{1-\rho^2_{1}}dW^{(1)i}_t\right)\right)\\
\nonumber &=&\left\{\frac{q_1+\widetilde\eta^{(1)}_t}{N_1}\sum_{l=1}^{N_1}(X_t^{(1)l}-X_t^{(1)i})+\widetilde\eta^{(4)}_t\overline X_t^{(1)}+\widetilde\eta^{(5)}_t\overline X_t^{(2)}+\widetilde\eta_t^{(7)}+\gamma^{(1)}_t\right\}dt \\
\label{X-1-FI}&& +\sigma_1\left(\rho dW^{(0)}_t+\sqrt{1-\rho^2}\left(\rho_{1}dW_t^{(1)}+\sqrt{1-\rho^2_{1}}dW^{(1)i}_t\right)\right),\\
\nonumber  dX_t^{(2)j}&=&\left\{(q_2+\widetilde\phi^{(1)}_t)( \overline X_t^{(2)}-X_t^{(2)j})+\widetilde\phi^{(4)}_t\overline X_t^{(1)}+\widetilde\phi^{(5)}_t\overline X_t^{(2)}+\widetilde\phi_t^{(7)}+\gamma^{(2)}_t\right\}dt\\
\nonumber&& +\sigma_2\left(\rho dW^{(0)}_t+\sqrt{1-\rho^2}\left(\rho_{2}dW_t^{(2)}+\sqrt{1-\rho^2_{2}}dW^{(2)i}_t\right)\right)\\
\nonumber&=&\left\{\frac{q_2+\widetilde\phi^{(1)}_t}{N_2}\sum_{l=1}^{N_2}(X^{(2)l}-X_t^{(2)j})+\widetilde\phi^{(4)}_t\overline X_t^{(1)}+\widetilde\phi^{(5)}_t\overline X_t^{(2)}+\widetilde\phi_t^{(7)}+\gamma^{(2)}_t\right\}dt\\
\label{X-2-FI}&& +\sigma_2\left(\rho dW^{(0)}_t+\sqrt{1-\rho^2}\left(\rho_{2}dW_t^{(2)}+\sqrt{1-\rho^2_{2}}dW^{(2)i}_t\right)\right)
\ea
with given initial values $X_0^{(1)i}$ and $X_0^{(2)j}$ for $i=1,\cdots,N_1$ and $j=1,\cdots,N_2$.  
Compared to the homogeneous group scenario discussed in \cite{R.Carmona2013}, owing to the linear quadratic structure, heterogeneity implies that banks not only purely consider the distance between their capitalization and averages of their own group capitalization where these terms can be rewritten as
$
\frac{1}{N_1}\sum_{l=1}^{N_1}(X_t^{(1)l}-X_t^{(1)i})
$
and $\frac{1}{N_2}\sum_{l=1}^{N_2}(X^{(2)l}-X_t^{(2)j})$ show the lending and borrowing behavior within their own groups
identical to the homogeneous case but also the linear combination of the average of each group.

In particular, as $N_1$ and $N_2$ are sufficiently large , based on the relation $\widetilde\eta^{(4)}_t=-\widetilde\eta^{(5)}_t$ and $\widetilde\phi^{(4)}_t=-\widetilde\phi^{(5)}_t$ in Proposition \ref{Prop_suff}, we rewrite the system (\ref{X-1-FI}-\ref{X-2-FI}) as 
\ba
\nonumber dX_t^{(1)i}&=&\left\{\frac{q_1+\widetilde\eta^{(1)}_t}{N_1}\sum_{l=1}^{N_1}(X_t^{(1)l}-X_t^{(1)i})+\widetilde\eta^{(5)}_t(\overline X_t^{(2)}-\overline X_t^{(1)})+\widetilde\eta_t^{(7)}+\gamma^{(1)}_t\right\}dt \\
\label{X-1-FI-1}&& +\sigma_1\left(\rho dW^{(0)}_t+\sqrt{1-\rho^2}\left(\rho_{1}dW_t^{(1)}+\sqrt{1-\rho^2_{1}}dW^{(1)i}_t\right)\right),\\
\nonumber  dX_t^{(2)j}&=&\left\{\frac{q_2+\widetilde\phi^{(1)}_t}{N_2}\sum_{l=1}^{N_2}(X^{(2)l}-X_t^{(2)j})+\widetilde\phi^{(4)}_t(\overline X_t^{(1)}-\overline X_t^{(2)})+\widetilde\phi_t^{(7)}+\gamma^{(2)}_t\right\}dt\\
\label{X-2-FI-1}&& +\sigma_2\left(\rho dW^{(0)}_t+\sqrt{1-\rho^2}\left(\rho_{2}dW_t^{(2)}+\sqrt{1-\rho^2_{2}}dW^{(2)i}_t\right)\right).
\ea
The terms $\widetilde\eta^{(5)}_t(\overline X_t^{(2)}-\overline X_t^{(1)})$ and $\widetilde\phi^{(4)}_t(\overline X_t^{(1)}-\overline X_t^{(2)})$ with $\widetilde\eta^{(5)}_t$ and $\widetilde\phi^{(4)}_t$ being positive for $0\leq t \leq T$ give the mean-reverting interaction between groups with each other such that the ensemble averages intend to be close. The dynamics of the distance $\overline X^{D}_t= \overline X^{(1)}_t-\overline X_t^{(2)} $ is written as
\ba
\nonumber d\overline X^{D}_t&=&-\left\{(\widetilde\eta^{(5)}_t+\widetilde\phi^{(4)}_t)\overline X^{D}_t+(\widetilde\eta_t^{(7)}+\gamma^{(1)}_t-\widetilde\phi_t^{(7)}-\gamma^{(2)}_t)\right\}dt\\
\nonumber &&+\rho\left(\sigma_1-\sigma_2\right)dW^{(0)}_t+\sqrt{1-\rho^2}\left(\sigma_1\rho_1dW^{(1)}_t-\sigma_2\rho_2dW_t^{(2)}\right)\\
&&+\sqrt{1-\rho^2}\left(\sqrt{1-\rho_1^2}\frac{1}{N_1}\sum_{l=1}^{N_1}dW^{(1)l}_t-\sqrt{1-\rho_2^2}\frac{1}{N_2}\sum_{l=1}^{N_2}dW^{(2)l}_t\right). 
\ea
with $\overline X_0^{(D)}=\overline X^{(1)}_0-\overline X_0^{(2)}$. As $N_1, N_2\rightarrow\infty$, the distance $\overline X^{D}_t$ is driven by common noises $W^{(0)}_t$, $W^{(1)}_t$ and $W^{(2)}_t$. Namely, the stronger correlation leads to larger fluctuation between groups.  

On the contrary, in the case of no common noise with $\rho=\rho_1=\rho_2=0$ and no growth rates $\gamma^{(1)}=\gamma^{(2)}_t=0$ leading to $\widetilde\eta_t^{(7)}=\widetilde\phi_t^{(7)}=0$ for all $t\geq 0$, we obtain $\overline X^{(D)}_t\rightarrow 0$ as $t\rightarrow\infty$ in the sense that in the long term behavior, all banks trace the global average $\overline X_t$ driven by only the scaled Brownian motion. This implies that systemic risk happens in the same manner as studied in \cite{R.Carmona2013} and therefore
\[
\PP(\tau<\infty) = \lim_{T\to\infty}\PP(\tau \leq T) =\lim_{T\to\infty}2\Phi\left(\frac{D\sqrt{N}}{\sigma\sqrt{T}}\right)= 1,
\]
with $\Phi$ being the standard normal distribution function  where $\tau=\inf\{t:\overline X_t\leq D\}$ with the certain default level $D$. The systemic event 
\[
\left\{\left(\frac{1}{N}\sum_{i=1}^{N}X_{t}^{(i)}\right)\leq {D}\; \mathrm{for\;some}\; t\right\} 
\]
defined by \cite{Fouque-Sun} is unavoidable.

In the numerical analysis, suppose that the first group is the group of stronger banks and the second one is the group of smaller banks. As discussion in Section \ref{Heter}, we first assume $\beta_1=0.2$ and $\beta_2=0.8$ and further fix the relative consideration $\lambda_1=0.1$ for the relative ensemble average
\[
\overline x^{\lambda_1}=(1-\lambda_1)\overline x^{(1)}+\lambda_1\overline{x},
\]
since the major players prefer tracing the group average rather than the ensemble average. By using varied $\lambda_2$ and $N$, we then obtain the following implications:

\begin{enumerate}
\item We first comment on two extreme cases. As $\lambda_1=\lambda_2=0$ or $\beta_1=\beta_2=1$, the model degenerates to the two homogeneous group model without the interaction between groups referred to \cite{R.Carmona2013}. 
\item In the intermediate region, in Figure \ref{liquidityratelambda}, we observe that the liquidity rate 
\be\label{liquidityrate}
\tilde\eta^{(1)}=(1-\frac{1}{N_1})\eta^{(1)}-\frac{1}{N_1}\eta^{(4)}
\en
increases in the relative proportion $\lambda_2$. Namely, when banks consider to trace the global ensemble average $\overline x$, they intend to lend to or borrow from a central bank more frequently.  
 \item As the terminal time $T$ becomes large, Figure \ref{liquidityratelambda_t=10} shows that the liquidity rate intends to be a constant. The identical results are obtained in \cite{R.Carmona2013}.  
\item As the numbers of banks  $N$, $N_1$, and $N_2$ become large, the liquid rate \eqref{liquidityrate} also increases. The interbank lending and borrowing behavior becomes more frequently.  See Figure \ref{liquidityrate_N} for instance.    
 
 \end{enumerate}

\begin{figure}[htbp]
\begin{center}
\includegraphics[width=12cm,height=8cm]{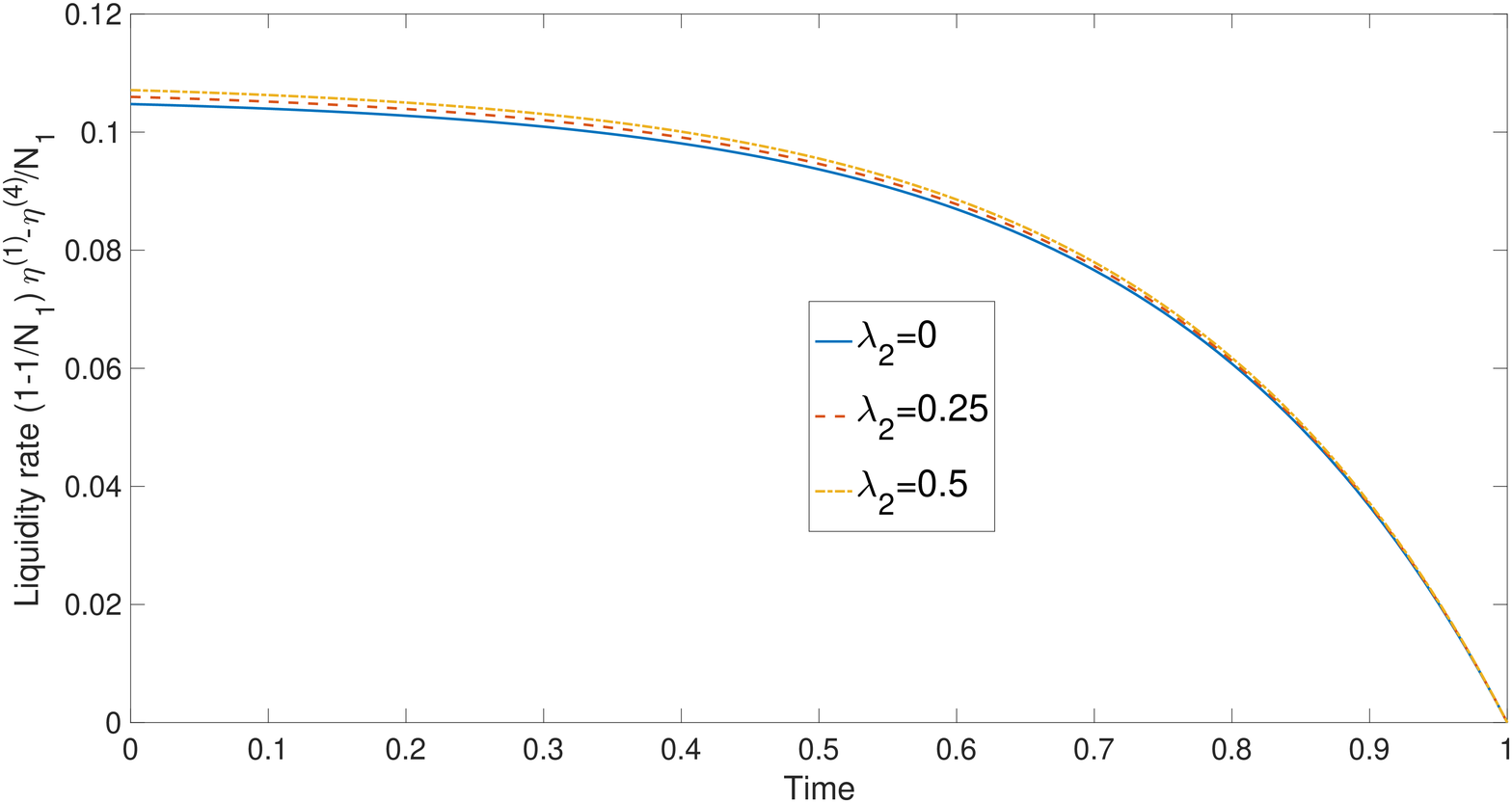}
\caption{The liquidity $\tilde\eta^{(1)}=(1-\frac{1}{N_1})\eta^{(1)}-\frac{1}{N_1}\eta^{(4)}$ with the varied $\lambda_2$ and the fixed $\lambda_1=0.1$.  The fixed parameters are $N=10$, $N_1=2$, $N_2=8$, $q_1=q_2=2$, $\epsilon_1=5$, $\epsilon_2=4.5$, $c_1=c_2=0$, $T=1$, and $\gamma^{(1)}_t=\gamma^{(2)}_t=0$ for $0\leq t\leq T$. }
\label{liquidityratelambda}
\end{center}
\end{figure}

\begin{figure}[htbp]
\begin{center}
\includegraphics[width=12cm,height=8cm]{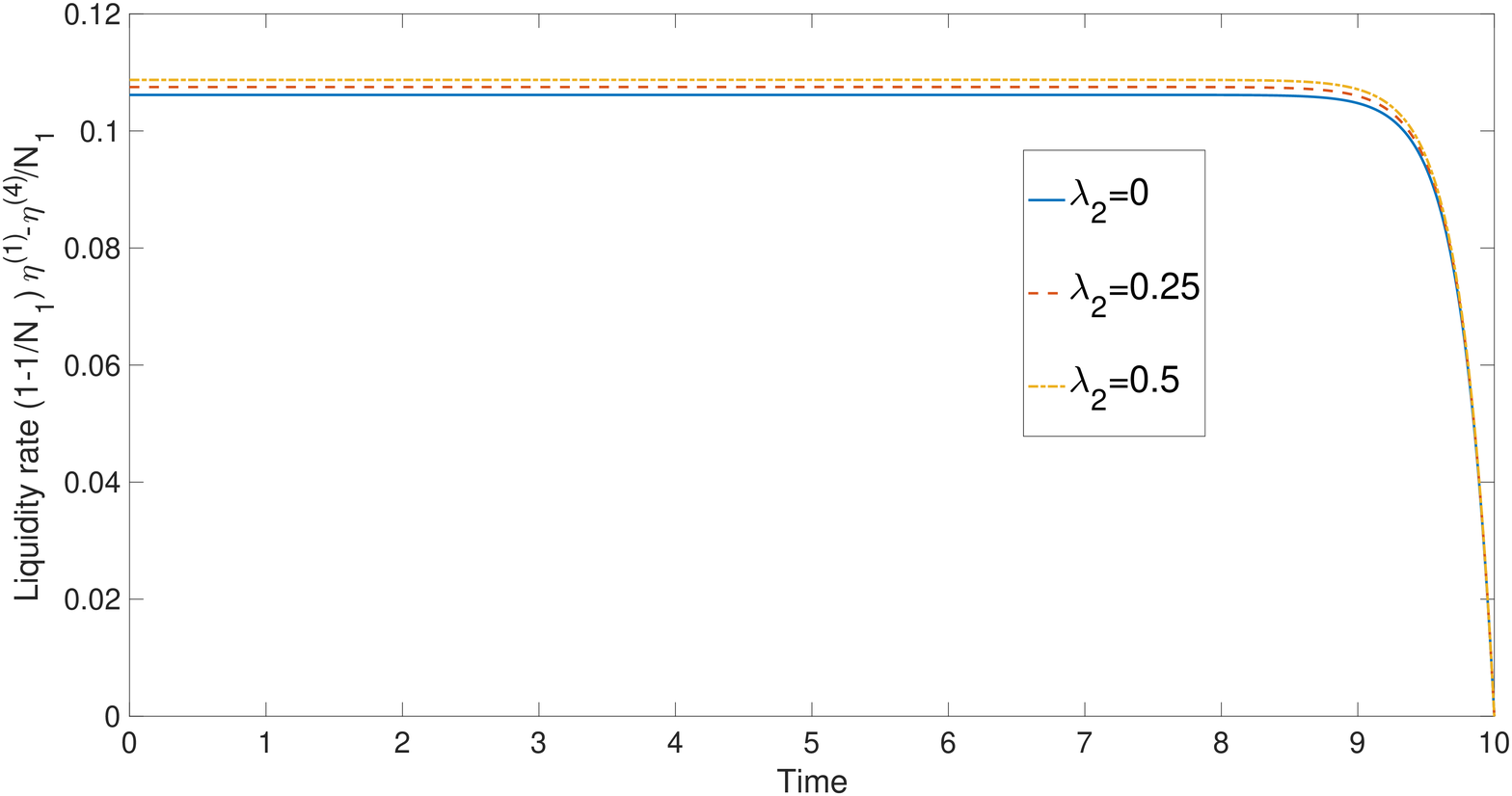}
\caption{The liquidity $\tilde\eta^{(1)}=(1-\frac{1}{N_1})\eta^{(1)}-\frac{1}{N_1}\eta^{(4)}$ with the varied $\lambda_2$ and the fixed $\lambda_1=0.1$.  The fixed parameters are $N=10$, $N_1=2$, $N_2=8$, $q_1=q_2=2$, $\epsilon_1=5$, $\epsilon_2=4.5$, $c_1=c_2=0$, $T=10$, and $\gamma^{(1)}_t=\gamma^{(2)}_t=0$ for $0\leq t\leq T$.  }
\label{liquidityratelambda_t=10}
\end{center}
\end{figure}

 \begin{figure}[htbp]
\begin{center}
\includegraphics[width=6cm,height=8cm]{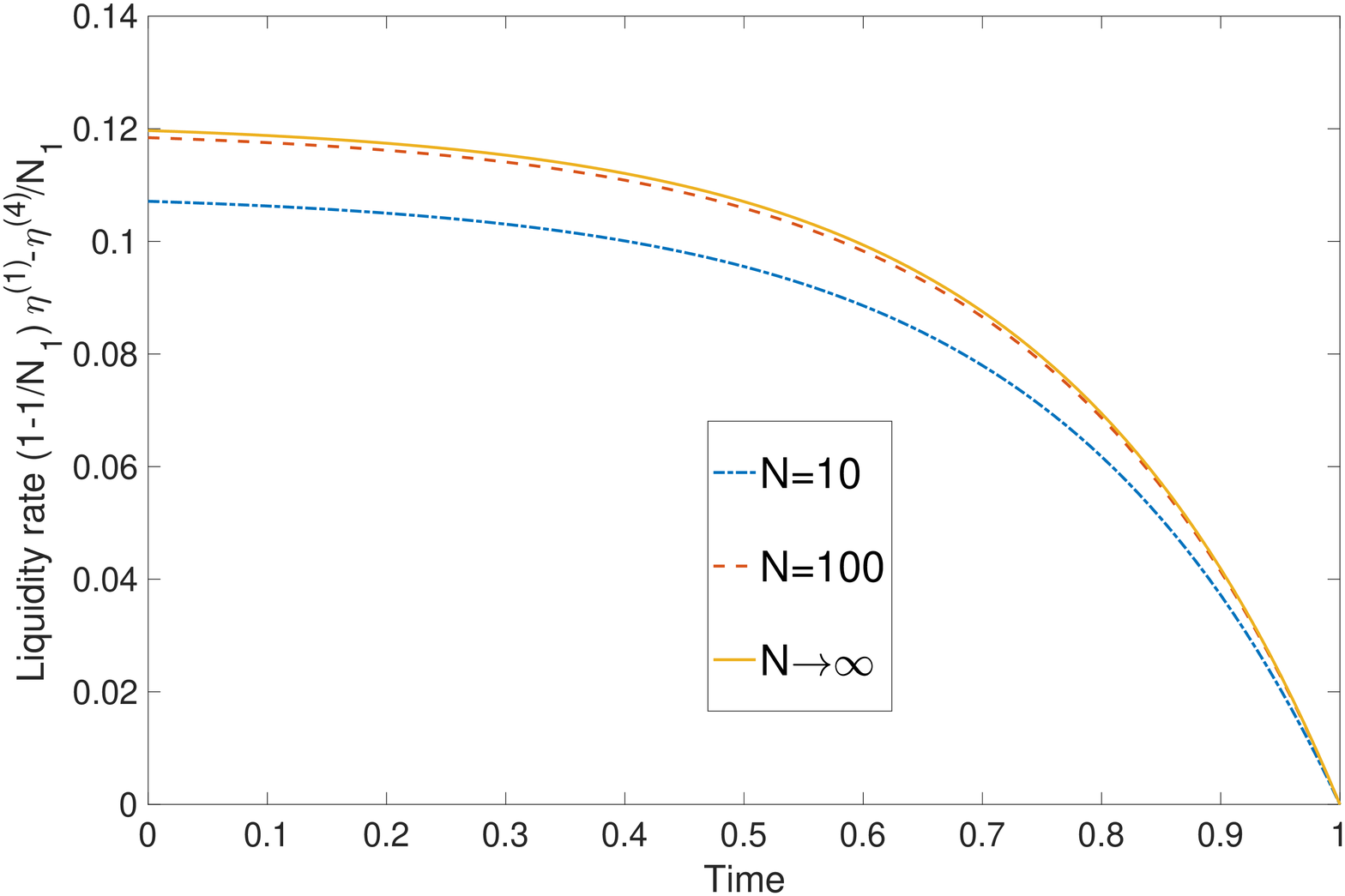}
\includegraphics[width=6cm,height=8cm]{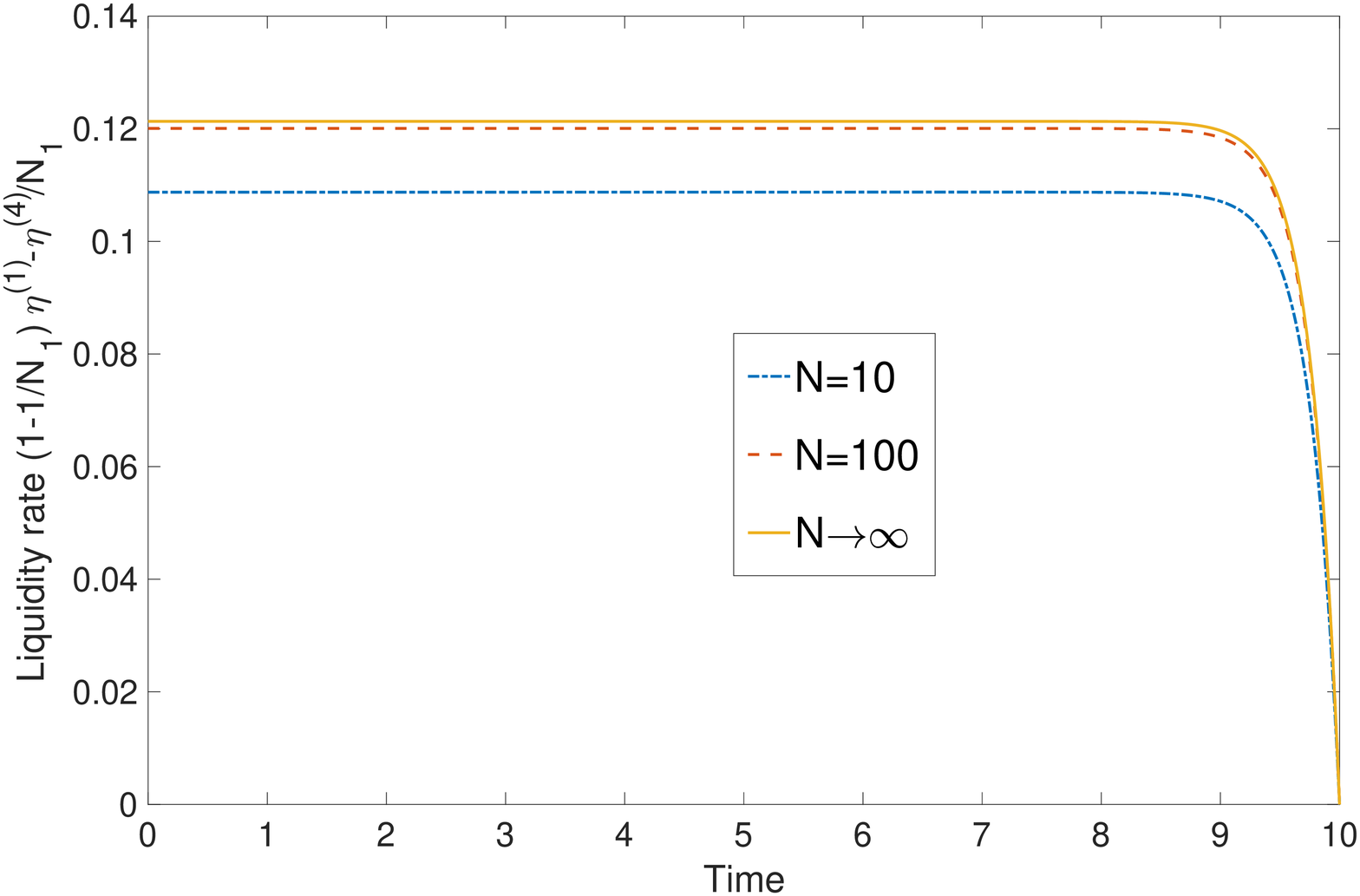}
\caption{The liquidity $\tilde\eta^{(1)}=(1-\frac{1}{N_1})\eta^{(1)}-\frac{1}{N_1}\eta^{(4)}$ with the varied $N$, $N_1$, $N_2$ with the proportion $\beta_1=0.2$, $\beta_2=0.8$. The terminal times are $T=1$ (left) and $T=10$ (right). The parameters are $\lambda_1=0.5$, $\lambda_2=0.1$, $q_1=q_2=2$, $\epsilon_1=5$, $\epsilon_2=4.5$, $c_1=c_2=0$, and $\gamma^{(1)}_t=\gamma^{(2)}_t=0$ for $0\leq t\leq T$.}
\label{liquidityrate_N}
\end{center}
\end{figure}

\section{Conclusions}\label{conclusions}
We study the system of interbank lending and borrowing with heterogeneous groups where the lending and borrowing depends on the homogeneous parameters within groups and heterogeneous parameters between groups. The amount of lending and borrowing is based on the relative ensemble average
\[
\tilde x^{\lambda_k}=(1-\lambda_k)\overline x^{(k)}+\lambda_k\overline{x},\quad k=1,\cdots,d. 
\]
Due to the heterogeneity structure, the value functions and the corresponding closed-loop and open-loop Nash equilibria are given by the coupled Riccati equations. In the two-group case, the existence of the coupled Riccati equations can be proved through when the number of banks are sufficiently large so that the existence of the value functions and equilibria are guaranteed. In addition, in the case of mean field games with the general $d$ groups, the existence of the $\epsilon$-Nash equilibria is also verified. 

We observe that owing to heterogeneity, the equilibria are consisted of the term of mean-reverting at their own group averages and all group ensemble averages . In the mean field case with no common noise, systemic event happens almost surely in the long time period. The numerical results illustrate that  as banks intend to trace the global average as large $\lambda_k$, they prefer liquidating more frequently using a larger liquidity rate. The liquidity rate is also increasing in the number of banks. 

The problem can be extended in several directions. First, it is interesting to discuss the delay obligation based on the model studied in \cite{Carmona-Fouque2016,FouqueZhang2018}. Second, it is nature to consider the stochastic growth rate in the system. Third, the CIR type processes can be applied to describe the capitalization of banks. See \cite{Fouque-Ichiba,Sun2016}. Furthermore, referring to \cite{BMMB2019}, the bubble assets is worth to study in the interbank lending and borrowing system. The admissible conditions for the equilibria of the above extensions are also interesting to investigate. 

\section*{Acknowledgements}
The author would like to thank  Jean-Pierre Fouque and Shuenn-Jyi Sheu for the conversations and suggestion on this subject.

 \appendix
 \numberwithin{equation}{section}

\makeatletter 
\newcommand{\section@cntformat}{Appendix \thesection:\ }
\makeatother

 \section{Proof of Theorem \ref{Hete-Nash} and Verification Theorem} \label{Appex-1}
The corresponding coupled HJB equations for the value functions (\ref{value-function-1}) and (\ref{value-function-2}) read 
\ba\label{HJB1}
\nonumber  \partial_{t}V^{(1)i} &+&\inf_{\alpha^{(1)i}}\bigg\{
\sum_{l\neq i,l=1}^{N_1}\bigg(\gamma^{(1)}_t+{\hat\alpha^{(1)l}(t,x)}\bigg)\partial_{x^{(1)l}}V^{(1)i}
+ \bigg(\gamma^{(1)}_t+{\alpha^i}\bigg)\partial_{x^{(1)i}}V^{(1)i}\\
\nonumber&+&\sum_{h=1}^{N_2}\bigg(\gamma_t^{(2)}+{\hat\alpha^{(2)h}(t,x)}\bigg)\partial_{x^{(2)h}}V^{(1)i}\\
\nonumber&+&\frac{\sigma_1^2}{2} \sum_{l=1}^{N_1}\sum_{h=1}^{N_1}\left((\rho^2+(1-\rho^2)\rho_{1 }^2+\delta_{(1)l,(1)h}(1-\rho^2)(1-\rho_{1 }^2)\right)    \partial_{x^{(1)l}x^{(1)h}}V^{(1)i}\\
\nonumber&+&\frac{\sigma_2\sigma_1}{2}  \sum_{l=1}^{N_1}\sum_{h=1}^{N_2}\rho ^2    \partial_{x^{(1)l}x^{(2)h}}V^{(1)i}\\
\nonumber&+&\frac{\sigma_2\sigma_1}{2} \sum_{l=1}^{N_2}\sum_{h=1}^{N_1}\rho ^2    \partial_{x^{(2)l}x^{(1)h}}V^{(1)i}\\
\nonumber &+& \frac{\sigma_2^2}{2} \sum_{l=1}^{N_2}\sum_{h=1}^{N_2}  \left((\rho^2+(1-\rho^2)\rho_{2}^2+\delta_{(2)l,(2)h}(1-\rho^2)(1-\rho_{2 }^2)\right) \partial_{x^{(2)l}x^{(2)h}}V^{(1)i}\\
&+&\frac{(\alpha^{(1)i})^2}{2}-q_1\alpha^{(1)i}\left(\overline x^{\lambda_1}-x^{(1)i}\right)+\frac{\epsilon_1}{2}(\overline x^{\lambda_1}-x^{(1)i})^2\bigg\}=0,   
\ea
with the terminal condition $V^{(1)i}(T,x)=\frac{c_1}{2}(\overline x^{\lambda_1} -x^{(1)i})^2$ and 
\ba\label{HJB2}
\nonumber  \partial_{t}V^{(2)j}&+&\inf_{\alpha^{(2)j}}\bigg\{
\sum_{l=1}^{N_1}\bigg(\gamma_t^{(1)}+{\hat\alpha^{(1)l}(t,x)}\bigg)\partial_{x^{(1)l}}V^{(2)j}\\
\nonumber &+&\sum_{h\neq j,h=1}^{N_2}\bigg(\gamma_t^{(2)}+{\hat\alpha^{(2)h}(t,x)}\bigg)\partial_{x^{(2)h}}V^{(2)j}+\bigg(\gamma_t^{(2)}+{\alpha^{(2)j}}\bigg)\partial_{x^{(2)j}}V^{(2)j}\\
\nonumber&+&\frac{\sigma_1^2}{2} \sum_{l=1}^{N_1}\sum_{h=1}^{N_1}\left((\rho^2+(1-\rho^2)\rho_{1 }^2+\delta_{(1)l,(1)h}(1-\rho^2)(1-\rho_{1 }^2)\right)  \partial_{x^{(1)l}x^{(1)h}}V^{(2)j}\\
\nonumber&+&\frac{\sigma_2\sigma_1}{2}  \sum_{l=1}^{N_1}\sum_{h=1}^{N_2}\rho ^2    \partial_{x^{(1)l}x^{(2)h}}V^{(2)j}\\
\nonumber&+&\frac{\sigma_2\sigma_1}{2} \sum_{l=1}^{N_2}\sum_{h=1}^{N_1}\rho ^2    \partial_{x^{(2)l}x^{(1)h}}V^{(2)j}\\
\nonumber &+& \frac{\sigma_2^2}{2} \sum_{l=1}^{N_2}\sum_{h=1}^{N_2}  \left((\rho^2+(1-\rho^2)\rho_{2}^2+\delta_{(2)l,(2)h}(1-\rho^2)(1-\rho_{2 }^2)\right)\partial_{x^{(2)l}x^{(2)h}}V^{(2)j}\\
&+&\frac{(\alpha^{(2)j})^2}{2}-q_2\alpha^{(2)j}\left(\overline x^{\lambda_2}-x^{(2)j}\right)+\frac{\epsilon_2}{2}(\overline x^{\lambda_2}-x^{(2)j})^2\bigg\} =0,\label{HJB2}
\ea
with the terminal condition $V^{(2)j}(T,x)=\frac{c_2}{2}(\overline x^{\lambda_2} -x^{(2)j})^2$. The first order condition gives the candidate of the optimal strategy for bank $(k)i$ written as 
\be\label{candidate}
\hat\alpha^{(k)i}=q_k(\overline x^{\lambda_k}-x^{(k)i})-\partial_{x^{(k)i}}V^{(k)i}.
\en
Inserting \eqref{candidate} into \eqref{HJB1} and \eqref{HJB2} gives 
\ba 
\nonumber  \partial_{t}V^{(1)i}(t,x)&+& \bigg\{
\sum_{l=1}^{N_1}\bigg(\gamma^{(1)}_t+q_1(\overline x^{\lambda_1} -x^{(1)l})-\partial_{x^{(1)l}}V^{(1)l}\bigg)\partial_{x^{(1)l}}V^{(1)i}\\
\nonumber&+&\sum_{h=1}^{N_2}\bigg(\gamma_t^{(2)}+q_2(\overline x^{\lambda_2}-x^{(2)h})-\partial_{x^{(2)h}}V^{(2)h}\bigg)\partial_{x^{(2)h}}V^{(1)i}\\
\nonumber&+&\frac{\sigma_1^2}{2} \sum_{l=1}^{N_1}\sum_{h=1}^{N_1}\left((\rho^2+(1-\rho^2)\rho_{1 }^2+\delta_{(1)l,(1)h}(1-\rho^2)(1-\rho_{1 }^2)\right)    \partial_{x^{(1)l}x^{(1)h}}V^{(1)i}\\
\nonumber&+&\frac{\sigma_2\sigma_1}{2}  \sum_{l=1}^{N_1}\sum_{h=1}^{N_2}\rho ^2    \partial_{x^{(1)l}x^{(2)h}}V^{(1)i}\\
\nonumber&+&\frac{\sigma_2\sigma_1}{2} \sum_{l=1}^{N_2}\sum_{h=1}^{N_1}\rho ^2    \partial_{x^{(2)l}x^{(1)h}}V^{(1)i}\\
\nonumber &+& \frac{\sigma_2^2}{2} \sum_{l=1}^{N_2}\sum_{h=1}^{N_2}  \left((\rho^2+(1-\rho^2)\rho_{2}^2+\delta_{(2)l,(2)h}(1-\rho^2)(1-\rho_{2 }^2)\right) \partial_{x^{(2)l}x^{(2)h}}V^{(1)i}\\
&+&\frac{(\partial_{x^{(1)i}}V^{(1)i})^2}{2}+\frac{\epsilon_1-q_1^2}{2}(\overline x^{\lambda_1} -x^{(1)i})^2\bigg\}=0,\label{HJB1-1}
\ea
with the terminal condition $V^{(1)i}(T,x)=\frac{c_1}{2}(\overline x^{\lambda_1}  -x^{(1)i})^2$ and 
\ba 
\nonumber  \partial_{t}V^{(2)j}(t,x)&+& \bigg\{
\sum_{l=1}^{N_1}\bigg(\gamma^{(1)}_t+q_1(\overline x^{\lambda_1} -x^{(1)l})-\partial_{x^{(1)l}}V^{(1)l}\bigg)\partial_{x^{(1)l}}V^{(2)j}\\
\nonumber&+&\sum_{h=1}^{N_2}\bigg(\gamma_t^{(2)}+q_2(\overline x^{\lambda_2}-x^{(2)h})-\partial_{x^{(2)h}}V^{(2)h}\bigg)\partial_{x^{(2)h}}V^{(2)j}\\
\nonumber&+&\frac{\sigma_1^2}{2} \sum_{l=1}^{N_1}\sum_{h=1}^{N_1}\left((\rho^2+(1-\rho^2)\rho_{1 }^2+\delta_{(1)l,(1)h}(1-\rho^2)(1-\rho_{1 }^2)\right)  \partial_{x^{(1)l}x^{(1)h}}V^{(2)j}\\
\nonumber&+&\frac{\sigma_2\sigma_1}{2}  \sum_{l=1}^{N_1}\sum_{h=1}^{N_2}\rho ^2    \partial_{x^{(1)l}x^{(2)h}}V^{(2)j}\\
\nonumber&+&\frac{\sigma_2\sigma_1}{2} \sum_{l=1}^{N_2}\sum_{h=1}^{N_1}\rho ^2    \partial_{x^{(2)l}x^{(1)h}}V^{(2)j}\\
\nonumber &+& \frac{\sigma_2^2}{2} \sum_{l=1}^{N_2}\sum_{h=1}^{N_2}  \left((\rho^2+(1-\rho^2)\rho_{2}^2+\delta_{(2)l,(2)h}(1-\rho^2)(1-\rho_{2 }^2)\right)\partial_{x^{(2)l}x^{(2)h}}V^{(2)j}\\
&+&\frac{(\partial_{x^{(2)j}}V^{(2)j})^2}{2}+\frac{\epsilon_2-q_2^2}{2}(\overline x^{\lambda_2}-x^{(2)j})^2\bigg\}=0,\label{HJB2-1}
\ea
with the terminal condition $V^{(2)j}(T,x)=\frac{c_2}{2}(\overline x^{\lambda_2}-x^{(2)j})^2$. We make the ansatz for $V^{(1)i}$ written as
\ba
\nonumber V^{(1)i}(t,x)&=&\frac{\eta^{(1)}_t}{2}(\overline x^{(1)}-x^{(1)i})^2+\frac{\eta^{(2)}_t}{2}(\overline x^{(1)})^2+\frac{\eta^{(3)}_t}{2}(\overline x^{(2)})^2\\
 \nonumber&&+\eta^{(4)}_t(\overline x^{(1)}-x^{(1)i})\overline x^{(1)}+\eta^{(5)}_t(\overline x^{(1)}-x^{(1)i})\overline x^{(2)}+\eta^{(6)}_t\overline x^{(1)}\overline x^{(2)}\\
&&+\eta^{(7)}_t(\overline x^{(1)}-x^{(1)i})+\eta^{(8)}_t\overline x^{(1)}+\eta^{(9)}_t\overline x^{(2)}+\eta^{(10)}_t\label{ansatz-1},
\ea
and the ansatz for $V^{(2)j}$  given by
\ba
\nonumber V^{(2)j}(t,x)&=&\frac{\phi^{(1)}_t}{2}(\overline x^{(2)}-x^{(2)j})^2+\frac{\phi^{(2)}_t}{2}(\overline x^{(1)})^2+\frac{\phi^{(3)}_t}{2}(\overline x^{(2)})^2\\
\nonumber &&+\phi^{(4)}_t(\overline x^{(2)}-x^{(2)j})\overline x^{(1)}+\phi^{(5)}_t(\overline x^{(2)}-x^{(2)j})\overline x^{(2)}+\phi^{(6)}_t\overline x^{(1)}\overline x^{(2)}\\ 
 &&+\phi^{(7)}_t(\overline x^{(2)}-x^{(2)j})+\phi^{(8)}_t\overline x^{(1)}+\phi^{(9)}_t\overline x^{(2)}+\phi^{(10)}_t,\label{ansatz-2}
\ea
where $\eta^{(i)}$ and $\phi^{(j)}$ for $i=1,\cdots,10$ and $j=1,\cdots,10$ are deterministic functions with terminal conditions 
\ban
&&\eta_T^{(1)}=c_1,\quad  \eta_T^{(2)}=c_1\lambda_1^2(\beta_1-1)^2, \quad \eta_T^{(3)}=c_1\lambda_1^2\beta_2^2, \quad \eta_T^{(4)}=c_1\lambda_1(\beta_1-1),\\
&&\eta_T^{(5)}=c_1\lambda_1\beta_2,\quad \eta_T^{(6)}=c_1\lambda_1^2(\beta_1-1)\beta_2,\quad \eta_T^{(7)}=\eta_T^{(8)}=\eta_T^{(9)}=\eta_T^{(10)}=0,
\ean
and 
\ban
&&\phi_T^{(1)}=c_2,\quad  \phi_T^{(2)}=c_2\lambda_2^2\beta_1^2, \quad \phi_T^{(3)}=c_2\lambda_2^2(\beta_2-1)^2, \quad \phi_T^{(4)}=c_2\lambda_2\beta_1,\\
&&\phi_T^{(5)}=c_2\lambda_2(\beta_2-1),\quad \phi_T^{(6)}=c_2\lambda_2^2\beta_1(\beta_2-1),\quad \phi_T^{(7)}= \phi_T^{(8)}=\phi_T^{(9)}=\phi_T^{(10)}=0,
\ean
using 
\ban
\nonumber(\overline x^{\lambda_1} -x^{(1)i})&=& (\overline x^{(1)}-x^{(1)i})+\lambda_1(\beta_1-1)\overline x^{(1)}+\lambda_1\beta_2\overline x^{(2)}\\
 \nonumber(\overline x^{\lambda_2}-x^{(2)j})&=&(\overline x^{(2)}-x^{(2)j})+\lambda_2\beta_1\overline x^{(1)}+\lambda_2(\beta_2-1)\overline x^{(2)}.
\ean 
Inserting the ansatz \eqref{ansatz-1} and \eqref{ansatz-2} into the HJB equations \eqref{HJB1} and \eqref{HJB2} and using 
\ban
\partial_{x^{(1)l}}V^{(1)i}&=&\left(\frac{1}{N_1}-\delta_{(1)i,(1)l}\right)\eta^{(1)}_t(\overline x^{(1)}-x^{(1)i})+\frac{1}{N_1}\eta^{(2)}_t\overline x^{(1)}+\frac{1}{N_1}\eta^{(4)}_t(\overline x^{(1)}-x^{(1)i})\\
&&+\left(\frac{1}{N_1}-\delta_{(1)i,(1)l}\right)\eta_t^{(4)}\overline x^{(1)}+  \left(\frac{1}{N_1}-\delta_{(1)i,(1)l}\right)\eta_t^{(5)} \overline x^{(2)}+\frac{1}{N_1}\eta^{(6)}_t\overline x^{(2)}\\
&&+\left(\frac{1}{N_1}-\delta_{(1)i,(1)l}\right)\eta_t^{(7)}+\frac{1}{N_1}\eta_t^{(8)}\\
\partial_{x^{(2)l}}V^{(1)i}&=& \frac{1}{N_2}\eta^{(3)}_t\overline x^{(2)}+\frac{1}{N_2}\eta^{(5)}_t(\overline x^{(1)}-x^{(1)i})+\frac{1}{N_2}\eta^{(6)}_t\overline x^{(1)}+\frac{1}{N_2}\eta_t^{(9)} \\
 \ean
\ban
\partial_{x^{(1)l}x^{(1)h}}V^{(1)i}&=&\left(\frac{1}{N_1}-\delta_{(1)i,(1)l}\right)\left(\frac{1}{N_1}-\delta_{(1)i,(1)h}\right)\eta^{(1)}_t+\left(\frac{1}{N_1}\right)^2\eta_t^{(2)}\\
&&+\frac{1}{N_1}\left(\left(\frac{1}{N_1}-\delta_{(1)i,(1)h}\right)+\left(\frac{1}{N_1}-\delta_{(1)i,(1)l}\right)\right)\eta^{(4)}_t,\\
\partial_{x^{(1)l}x^{(2)h}}V^{(1)i}&=& \frac{1}{N_2}\left(\frac{1}{N_1}-\delta_{(1)i,(1)l}\right)\eta^{(5)}_t+\frac{1}{N_1N_2}\eta^{(6)}_t,\\
\partial_{x^{(2)l}x^{(1)h}}V^{(1)i}&=&\frac{1}{N_2}\left(\frac{1}{N_1}-\delta_{(1)i,(1)h}\right)\eta^{(5)}_t+\frac{1}{N_1N_2}\eta^{(6)}_t,\\
\partial_{x^{(2)l}x^{(2)h}}V^{(1)i}&=& \left(\frac{1}{N_2}\right)^2\eta_t^{(3)} ,
\ean
 
\ban
\partial_{x^{(1)l}}V^{(2)j}&=& \frac{1}{N_1}\phi^{(2)}_t\overline x^{(1)} +\frac{1}{N_1}\phi^{(4)}_t(\overline x^{(2)}-x^{(2)j})+\frac{1}{N_1}\phi^{(6)}_t\overline x^{(2)}+\frac{1}{N_1}\phi_t^{(8)}\\
\partial_{x^{(2)l}}V^{(2)j}&=&\left(\frac{1}{N_2}-\delta_{(2)j,(2)l}\right)\phi^{(1)}_t(\overline x^{(2)}-x^{(2)j})+\frac{1}{N_2}\phi^{(3)}_t\overline x^{(2)} + \left(\frac{1}{N_2}-\delta_{(2)j,(2)l}\right)\phi_t^{(4)} \overline x^{(1)}\\
&&+\frac{1}{N_2}\phi^{(5)}_t(\overline x^{(2)}-x^{(2)j})+\left(\frac{1}{N_2}-\delta_{(2)j,(2)l}\right)\phi_t^{(5)}\overline x^{(2)}+\frac{1}{N_2}\phi^{(6)}_t\overline x^{(1)},\\
&&+\left(\frac{1}{N_2}-\delta_{(2)j,(2)l}\right)\phi_t^{(7)}+\frac{1}{N_2}\phi_t^{(9)} 
\ean
\ban
\partial_{x^{(1)l}x^{(1)h}}V^{(2)j}&=&\left(\frac{1}{N_1}\right)^2\phi_t^{(2)},\\
\partial_{x^{(1)l}x^{(2)h}}V^{(2)j}&=& \frac{1}{N_1}\left(\frac{1}{N_2}-\delta_{(2)j,(2)h}\right)\phi_t^{(4)}+\frac{1}{N_1N_2}\phi_t^{(6)},\\
\partial_{x^{(2)l}x^{(1)h}}V^{(2)j}&=& \frac{1}{N_1}\left(\frac{1}{N_2}-\delta_{(2)j,(2)l}\right)\phi_t^{(4)}+\frac{1}{N_1N_2}\phi_t^{(6)},\\
\partial_{x^{(2)l}x^{(2)h}}V^{(2)j}&=&\left(\frac{1}{N_2}-\delta_{(2)j,(2)l}\right)\left(\frac{1}{N_2}-\delta_{(2)j,(2)h}\right)\phi_t^{(1)}+\left(\frac{1}{N_2}\right)^2\phi_t^{(3)}\\
&&+\frac{1}{N_2}\left(\left(\frac{1}{N_2}-\delta_{(2)j,(2)l}\right)+\left(\frac{1}{N_2}-\delta_{(2)j,(2)h}\right)\right)\phi_t^{(5)},
\ean
 we get 
\ban
\partial_tV^{1(i)}&+&\sum_{l=1}^{N_1}\bigg\{\left(q_1+(1-\frac{1}{N_1})\eta^{(1)}_t-\frac{1}{N_1}\eta^{(4)}_t\right)(\overline x^{(1)}-x^{(1)l})\\
&&-\left(\frac{1}{N_1}\eta^{(2)}_t+(\frac{1}{N_1}-1)\eta^{(4)}_t+q_1\lambda_1(1-\beta_1)\right)\overline x^{(1)}\\
&&- \left((\frac{1}{N_1}-1)\eta^{(5)}_t+\frac{1}{N_1}\eta^{(6)}_t-q_1\lambda_1\beta_2\right)\overline x^{(2)}-\left((\frac{1}{N_1}-1)\eta_t^{(7)}+\frac{1}{N_1}\eta_t^{(8)}\right)+\gamma^{(1)}_t\bigg\} \\
&&\bigg\{\left(\frac{1}{N_1}-\delta_{(1)i,(1)l}\right)\eta^{(1)}_t(\overline x^{(1)}-x^{(1)i})+\frac{1}{N_1}\eta^{(2)}_t\overline x^{(1)}+\frac{1}{N_1}\eta^{(4)}_t(\overline x^{(1)}-x^{(1)i})\\
&&+\left(\frac{1}{N_1}-\delta_{(1)i,(1)l}\right)\eta_t^{(4)}\overline x^{(1)}+  \left(\frac{1}{N_1}-\delta_{(1)i,(1)l}\right)\eta_t^{(5)} \overline x^{(2)}+\frac{1}{N_1}\eta^{(6)}_t\overline x^{(2)}\\
&&+\left(\frac{1}{N_1}-\delta_{(1)i,(1)l}\right)\eta_t^{(7)}+\frac{1}{N_1}\eta_t^{(8)}\bigg\}\\
&+&\sum_{l=1}^{N_2}\bigg\{\bigg(q_2+(1- \frac{1}{N_2})\phi^{(1)}_t-\frac{1}{N_2}\phi^{(5)}_t\bigg)(\overline x^{(2)}-x^{(2)l})\\
&&-\left((\frac{1}{N_2}-1)\phi_t^{(4)}+\frac{1}{N_2}\phi_t^{(6)}-q_2\lambda_2\beta_1\right)\overline x^{(1)}\\
&&-\left((\frac{1}{N_2}-1)\phi_t^{(5)}+\frac{1}{N_2}\phi_t^{(3)}+q_2\lambda_2(1-\beta_2)\right)\overline x^{(2)}-\frac{1}{N_2}\eta_t^{(9)} +\gamma^{(2)}_t\bigg\}\\
&& \bigg\{\frac{1}{N_2}\eta^{(3)}_t\overline x^{(2)}+\frac{1}{N_2}\eta^{(5)}_t(\overline x^{(1)}-x^{(1)i})+\frac{1}{N_2}\eta^{(6)}_t\overline x^{(1)}+\frac{1}{N_2}\eta_t^{(9)}  \bigg\}\\
\nonumber&+&\frac{\sigma_1^2}{2} \sum_{l=1}^{N_1}\sum_{h=1}^{N_1}\bigg\{\rho^2+(1-\rho^2)\rho_1^2+\delta_{(1)l,(1)h}(1-\rho^2)(1-\rho_1^2)\bigg\}\\
&&\bigg\{\left(\frac{1}{N_1}-\delta_{(1)i,(1)l}\right)\left(\frac{1}{N_1}-\delta_{(1)i,(1)h}\right)\eta^{(1)}_t+\left(\frac{1}{N_1}\right)^2\eta_t^{(2)}\\
&&+ \frac{1}{N_1}\left(\left(\frac{1}{N_1}-\delta_{(1)i,(1)h}\right)+\left(\frac{1}{N_1}-\delta_{(1)i,(1)l}\right)\right)\eta^{(4)}_t\bigg\}   \\
\nonumber&+&\frac{\rho^2\sigma_1\sigma_2}{2} \sum_{l=1}^{N_1}\sum_{h=1}^{N_2}\bigg\{\frac{1}{N_2}\left(\frac{1}{N_1}-\delta_{(1)i,(1)l}\right)\eta^{(5)}_t+\frac{1}{N_1N_2}\eta^{(6)}_t\bigg\}\\
\nonumber&+&\frac{\rho^2\sigma_2\sigma_1}{2} \sum_{l=1}^{N_2}\sum_{h=1}^{N_1} \bigg\{\frac{1}{N_2}\left(\frac{1}{N_1}-\delta_{(1)i,(1)h}\right)\eta^{(5)}_t+\frac{1}{N_1N_2}\eta^{(6)}_t\bigg\}\\
\nonumber &+& \frac{\sigma_2^2}{2} \sum_{l=1}^{N_2}\sum_{h=1}^{N_2}\bigg\{\rho_{2}^2+(1-\rho^2)\rho_2^2+\delta_{(2)l,(2)h}(1-\rho^2)(1-\rho_{2}^2)\bigg\}\bigg\{ \left(\frac{1}{N_2}\right)^2\eta_t^{(3)}\bigg\}\\
&+&\frac{1}{2}\bigg\{\left((\frac{1}{N_1}-1)\eta^{(1)}_t+\frac{1}{N_1}\eta^{(4)}_t\right)(\overline x^{(1)}-x^{(1)i})+\left(\frac{1}{N_1}\eta^{(2)}_t+(\frac{1}{N_1}-1)\eta^{(4)}_t\right)\overline x^{(1)}\\
&&+\left((\frac{1}{N_1}-1)\eta^{(5)}_t+\frac{1}{N_1}\eta^{(6)}_t\right)\overline x^{(2)}+(\frac{1}{N_1}-1)\eta_t^{(7)}+\frac{1}{N_1}\eta_t^{(8)}\bigg\}^2\\
&+&\frac{\epsilon_1-q_1^2}{2}\bigg\{ (\overline x^{(1)}-x^{(1)i})+\lambda_1(\beta_1-1)\overline x^{(1)}+\lambda_1\beta_2\overline x^{(2)}\bigg\}^2=0,
\ean
for $i=1,\cdots,N_1$ and
\ban
\partial_tV^{2(j)}&+&\sum_{l=1}^{N_1}\bigg\{\bigg(q_1+(1-\frac{1}{N_1})\eta^{(1)}_t-\frac{1}{N_1}\eta^{(4)}_t\bigg)(\overline x^{(1)}-x^{(1)l})\\
&&-\left(\frac{1}{N_1}\eta^{(2)}_t+(\frac{1}{N_1}-1)\eta^{(4)}_t+q_1\lambda_1(1-\beta_1)\right)\overline x^{(1)}\\
&&-\left(\frac{1}{N_1}\eta^{(6)}_t+(\frac{1}{N_1}-1)\eta^{(5)}_t-q_1\lambda_1\beta_2\right)\overline x^{(2)}-\frac{1}{N_1}\eta_t^{(8)}+\gamma^{(1)}_t\bigg\}\\
&& \bigg\{ \frac{1}{N_1}\phi^{(2)}_t\overline x^{(1)} +\frac{1}{N_1}\phi^{(4)}_t(\overline x^{(2)}-x^{(2)j})+\frac{1}{N_1}\phi^{(6)}_t\overline x^{(2)}+\frac{1}{N_1}\eta_t^{(8)}\bigg\}\\
&+&\sum_{l=1}^{N_2}\bigg\{\left(q_2+(1-\frac{1}{N_2})\phi^{(1)}_t-\frac{1}{N_2}\phi^{(5)}_t\right)( \overline x^{(2)}-x^{(2)l})\\
&&-\left(\frac{1}{N_2}\eta^{(6)}_t+(\frac{1}{N_2} -1)\phi^{(4)}_t-q_2\lambda_2\beta_1\right)\overline x^{(1)}\\
&&-\left((\frac{1}{N_2}-1)\phi^{(5)}_t+\frac{1}{N_2}\phi^{(3)}_t+q_2\lambda_2(1-\beta_2)\right)\overline x^{(2)}\\
&&-\left((\frac{1}{N_2}-1)\phi_t^{(7)}+\frac{1}{N_2}\phi_t^{(9)}\right)+\gamma^{(2)}_t\bigg\}\\
&& \bigg\{\left(\frac{1}{N_2}-\delta_{(2)j,(2)l}\right)\phi^{(1)}_t(\overline x^{(2)}-x^{(2)j})+\frac{1}{N_2}\phi^{(3)}_t\overline x^{(2)} + \left(\frac{1}{N_2}-\delta_{(2)j,(2)l}\right)\phi_t^{(4)} \overline x^{(1)}\\
&&+\frac{1}{N_2}\phi^{(5)}_t(\overline x^{(2)}-x^{(2)j})+\left(\frac{1}{N_2}-\delta_{(2)j,(2)l}\right)\phi_t^{(5)}\overline x^{(2)}+\frac{1}{N_2}\phi^{(6)}_t\overline x^{(1)}\\
&&+\left((\frac{1}{N_2}-\delta_{(2)j,(2)l})\phi_t^{(7)}+\frac{1}{N_2}\phi_t^{(9)}\right)\bigg\}\\
\nonumber&+&\frac{\sigma_1^2}{2} \sum_{l=1}^{N_1}\sum_{h=1}^{N_1}\bigg\{\rho^2+(1-\rho^2)\rho_1^2+\delta_{(1)l,(1)h}(1-\rho^2)(1-\rho_{1}^2)\bigg\}\bigg\{\left(\frac{1}{N_1}\right)^2\phi_t^{(2)}\bigg\}\\
\nonumber&+&\frac{\rho^2\sigma_1\sigma_2}{2} \sum_{l=1}^{N_1}\sum_{h=1}^{N_2} \bigg\{\frac{1}{N_1}\left(\frac{1}{N_2}-\delta_{(2)j,(2)h}\right)\phi_t^{(4)}+\frac{1}{N_1N_2}\phi_t^{(6)}\bigg\}\\  
\nonumber&+&\frac{\rho^2\sigma_2\sigma_1}{2} \sum_{l=1}^{N_2}\sum_{h=1}^{N_1}  \bigg\{\frac{1}{N_1}\left(\frac{1}{N_2}-\delta_{(2)j,(2)l}\right)\phi_t^{(4)}+\frac{1}{N_1N_2}\phi_t^{(6)}\bigg\}\\   
\nonumber &+& \frac{\sigma_2^2}{2} \sum_{l=1}^{N_2}\sum_{h=1}^{N_2}  \bigg\{\rho^2+(1-\rho^2)\rho_2^2+\delta_{(2)l,(2)h}(1-\rho^2)(1-\rho_{2}^2)\bigg\}\\
&&\bigg\{\left(\frac{1}{N_2}-\delta_{(2)j,(2)l}\right)\left(\frac{1}{N_2}-\delta_{(2)j,(2)h}\right)\phi_t^{(1)}+\left(\frac{1}{N_2}\right)^2\phi_t^{(3)}\\
&&+\frac{1}{N_2}\left(\left(\frac{1}{N_2}-\delta_{(2)j,(2)l}\right)+\left(\frac{1}{N_2}-\delta_{(2)j,(2)h}\right)\right)\phi_t^{(5)}\bigg\}\\
&+& \frac{1}{2}\bigg\{\left((\frac{1}{N_2}-1)\phi^{(1)}_t+\frac{1}{N_2}\phi^{(5)}_t\right)(\overline x^{(2)} -x^{(2)j})+\left(\frac{1}{N_2}\eta^{(6)}_t+(\frac{1}{N_2}-1)\phi^{(4)}_t\right)\overline x^{(1)}\\
&&+\left((\frac{1}{N_1}-1)\phi^{(5)}_t+\frac{1}{N_2}\phi^{(3)}_t\right)\overline x^{(2)}+(\frac{1}{N_2}-1)\phi_t^{(7)}+\frac{1}{N_2}\phi_t^{(9)}\bigg\}^2\\
&+&\frac{\epsilon_2-q_2^2}{2}\bigg\{(\overline x^{(2)}-x^{(2)j})+\lambda_2\beta_1\overline x^{(1)}+\lambda_2(\beta_2-1)\overline x^{(2)}\bigg\}^2=0,
\ean
for $j=1,\cdots,N_2$.
By idenfifying the terms of states $X$, we obtain that the deterministic functions $\eta^i$ and $\phi^i$ for $i=1,\cdots,10$ must satisfy
\ba
\nonumber \dot\eta^{(1)}_t&=&2\left(q_1+(1-\frac{1}{N_1} )\eta^{(1)}_t-\frac{1}{N_1}\eta^{(4)}_t\right)\eta^{(1)}_t- \left(\frac{1}{N_1}\eta^{(4)}_t+(\frac{1}{N_1}-1)\eta_t^{(1)}\right)^2-(\epsilon_1-q_1^2)\\
\label{eta1}\\
 \nonumber  {\dot\eta^{(2)}_t}&=&2\left(\frac{1}{N_1}\eta^{(2)}_t+(\frac{1}{N_1}  -1)\eta^{(4)}_t+q_1\lambda_1(1-\beta_1)\right)\eta^{(2)}_t- \left(\frac{1}{N_1}\eta^{(2)}_t+(\frac{1}{N_1} -1)\eta_t^{(4)}\right)^2\\
\label{eta2}&&+2\left(\frac{1}{N_2}\phi^{(6)}_t+(\frac{1}{N_2}-1)\phi^{(4)}_t-q_2\lambda_2\beta_1\right)\eta^{(6)}_t-(\epsilon_1-q_1^2)\lambda_1^2(\beta_1-1)^2\\
\nonumber  {\dot\eta^{(3)}_t} &=&2\left(\frac{1}{N_2}\phi^{(3)}_t+(\frac{1}{N_2}-1)\phi^{(5)}_t+q_2\lambda_2(1-\beta_2)\right)\eta^{(3)}_t-\left(\frac{1}{N_1}\eta_t^{(6)}+(\frac{1}{N_1}-1)\eta_t^{(5)}\right)^2\\
&&+2\left(\frac{1}{N_1}\eta^{(6)}_t+(\frac{1}{N_1}-1)\eta^{(5)}_t-q_1\lambda_1\beta_2\right)\eta^{(6)}_t-(\epsilon_1-q_1^2)\lambda_1^2\beta_2^2
\label{eta3}
\ea
\ba
\nonumber\dot\eta^{(4)}_t&=&\left(q_1+(1-\frac{1}{N_1})\eta^{(1)}_t-\frac{1}{N_1}\eta^{(4)}_t\right) \eta^{(4)}_t+\left(\frac{1}{N_1}\eta^{(2)}_t+(\frac{1}{N_1}-1)\eta^{(4)}_t+q_1\lambda_1(1-\beta_1)\right)\eta^{(4)}_t\\
\nonumber&&-\left(\frac{1}{N_1}\eta^{(4)}_t+(\frac{1}{N_1}-1)\eta_t^{(1)}\right)\left(\frac{1}{N_1}\eta^{(2)}_t+(\frac{1}{N_1}-1)\eta_t^{(4)}\right)\\
\label{eta4}&&+\left(\frac{1}{N_2}\phi^{(6)}_t+(\frac{1}{N_2}-1)\phi^{(4)}_t-q_2\lambda_2\beta_1\right)\eta^{(5)}_t-(\epsilon_1-q_1^2)\lambda_1(\beta_1-1)\\
\nonumber\dot\eta^{(5)}_t&=&\left(q_1+(1-\frac{1}{N_1})\eta^{(1)}_t-\frac{1}{N_1}\eta^{(4)}_t\right) \eta^{(5)}_t+\left(\frac{1}{N_2}\phi^{(3)}_t+(\frac{1}{N_2}-1)\phi^{(5)}_t+q_2\lambda_2(1-\beta_2)\right)\eta^{(5)}_t\\
\nonumber&&+\left(\frac{1}{N_1}\eta^{(6)}_t+(\frac{1}{N_1}-1)\eta^{(5)}_t-q_1\lambda_1\beta_2\right)\eta^{(4)}_t\\
\label{eta5}&&-\left(\frac{1}{N_1}\eta^{(4)}_t+(\frac{1}{N_1}-1)\eta_t^{(1)}\right)\left(\frac{1}{N_1}\eta^{(6)}_t+(\frac{1}{N_1}-1)\eta_t^{(5)}\right)-(\epsilon_1-q_1^2)\lambda_1\beta_2\\
\nonumber\dot\eta^{(6)}_t&=&\left(\frac{1}{N_1}\eta^{(2)}_t+(\frac{1}{N_1}-1)\eta^{(4)}_t+q_1\lambda_1(1-\beta_1)\right)\eta^{(6)}_t\\
\nonumber&&+\left(\frac{1}{N_2}\phi^{(3)}_t+(\frac{1}{N_2}-1)\phi^{(5)}_t+q_2\lambda_2(1-\beta_2)\right) \eta^{(6)}_t\\
\nonumber&&+\left(\frac{1}{N_1}\eta^{(6)}_t+(\frac{1}{N_1}-1)\eta^{(5)}_t-q_1\lambda_1\beta_2\right)\eta^{(2)}_t\\
\nonumber &&-\left(\frac{1}{N_1}\eta^{(2)}_t+(\frac{1}{N_1}-1)\eta_t^{(4)}\right)\left(\frac{1}{N_1}\eta^{(6)}_t+(\frac{1}{N_1}-1)\eta_t^{(5)}\right)\\
 &&+\left(\frac{1}{N_2}\phi^{(6)}_t+(\frac{1}{N_2}-1)\phi^{(4)}_t-q_2\lambda_2\beta_1\right) \eta^{(3)}_t-(\epsilon_1-q_1^2)\lambda_1^2(\beta_1-1)\beta_2\label{eta6}
 \ea
 \ba
 \nonumber \dot\eta^{(7)}_t&=&\left(q_1+(1-\frac{1}{N_1})\eta_t^{(1)}-\frac{1}{N_1}\eta_t^{(4)}\right)\eta_t^{(7)}\\
 \nonumber &&+\left((\frac{1}{N_1}-1)\eta_t^{(7)}+\frac{1}{N_1}\eta_t^{(8)}-\gamma_t^{(1)}\right)\eta_t^{(4)}+\left((\frac{1}{N_2}-1)\phi_t^{(7)}+\frac{1}{N_2}\phi_t^{(9)}-\gamma_t^{(2)}\right)\eta_t^{(5)}\\
 &&-\left((\frac{1}{N_1}-1)\eta_t^{(1)}+\frac{1}{N_1}\eta_t^{(4)}\right)\left((\frac{1}{N_1}-1)\eta_t^{(7)}+\frac{1}{N_1}\eta_t^{(8)}\right)\label{eta7}\\
 \nonumber \dot\eta^{(8)}_t&=&\left(\frac{1}{N_1}\eta_t^{(2)}+(\frac{1}{N_1}-1)\eta_t^{(4)}+q_1\lambda_1(1-\beta_1)\right)\eta_t^{(8)}+\left((\frac{1}{N_1}-1)\eta_t^{(7)}+\frac{1}{N_1}\eta_t^{(8)}-\gamma_t^{(1)}\right)\eta_t^{(2)}\\
  \nonumber&&-\left(\frac{1}{N_1}\eta^{(2)}_t+(\frac{1}{N_1}-1)\eta_t^{(4)}\right)\left((\frac{1}{N_1}-1)\eta_t^{(7)}+\frac{1}{N_1}\eta_t^{(8)}\right)\\
\nonumber &&+\left(\frac{1}{N_2}\phi^{(6)}_t+(\frac{1}{N_2}-1)\phi^{(4)}_t-q_2\lambda_2\beta_1\right)\eta_t^{(9)}+\left((\frac{1}{N_2}-1)\phi_t^{(7)}+\frac{1}{N_2}\phi_t^{(9)}-\gamma_t^{(2)}\right)\eta_t^{(6)}\\
 \label{eta8}\\
  \nonumber \dot\eta^{(9)}_t&=&\left((\frac{1}{N_2}-1)\phi_t^{(5)}+\frac{1}{N_2}\phi_t^{(3)}+q_2\lambda_2(1-\beta_2)\right)\eta_t^{(9)}+\left((\frac{1}{N_1}-1)\eta_t^{(7)}+\frac{1}{N_1}\eta_t^{(8)}-\gamma_t^{(1)}\right)\eta_t^{(6)}\\
  \nonumber   &&+\left(\frac{1}{N_1}\eta^{(6)}_t+(\frac{1}{N_1}-1)\eta_t^{(5)}-q_1\lambda_1\beta_2\right)\eta_t^{(8)}+\left((\frac{1}{N_2}-1)\phi_t^{(7)}+\frac{1}{N_2}\phi_t^{(9)}-\gamma_t^{(2)}\right)\eta_t^{(3)}\\
\label{eta9}  &&-\left(\frac{1}{N_1}\eta^{(6)}_t+(\frac{1}{N_1}-1)\eta_t^{(5)}\right)\left((\frac{1}{N_1}-1)\eta_t^{(7)}+\frac{1}{N_1}\eta_t^{(8)}\right)\\
 \nonumber \dot\eta^{(10)}_t&=&\left((\frac{1}{N_1}-1)\eta_t^{(7)}+\frac{1}{N_1}\eta_t^{(8)}-\gamma^{(1)}_t\right)\eta_t^{(8)}+\left((\frac{1}{N_2}-1)\phi_t^{(7)}+\frac{1}{N_2}\phi_t^{(9)}-\gamma^{(2)}_t\right)\eta_t^{(9)}\\
 \nonumber &&-\frac{\sigma_1^2}{2}\left((1-\frac{1}{N_1})(1-\rho^2)(1-\rho_1^2)\eta_t^{(1)}+\left(\rho^2+(1-\rho^2)\rho_1^2+\frac{1}{N_1}(1-\rho^2)(1-\rho^2_1)\right)\eta_t^{(2)}\right)\\
 \nonumber &&-\rho^2\sigma_1\sigma_2\eta_t^{(6)}-\frac{\sigma_2^2}{2}\left(\rho_2^2+(1-\rho^2)\rho^2_2+\frac{1}{N_2}(1-\rho^2)(1-\rho_2^2)\right)\eta_t^{(3)}\\
 &&-\frac{1}{2}\left((\frac{1}{N_1}-1)\eta_t^{(7)}+\frac{1}{N_1}\eta_t^{(8)}\right)^2,\label{eta10}
\ea
and
\ba
\nonumber\dot\phi^{(1)}_t&=&2\left(q_2+(1-\frac{1}{N_2})\phi^{(1)}_t-\frac{1}{N_2}\phi_t^{(5)}\right)\phi^{(1)}_t-\left(( \frac{1}{N_2}-1)\phi^{(1)}_t-\frac{1}{N_2}\phi_t^{(5)}\right)^2-(\epsilon_2-q_2^2)\\
\label{phi1}\\
\nonumber {\dot\phi^{(2)}_t}&=&2\left(\frac{1}{N_1}\eta^{(2)}_t+(\frac{1}{N_1}-1)\eta^{(4)}_t+q_1\lambda_1(1-\beta_1)\right) \phi^{(2)}_t- \left(\frac{1}{N_2}\phi^{(6)}_t+(\frac{1}{N_2}-1)\phi_t^{(4)}\right)^2 \\
 &&+2\left(\frac{1}{N_2}\phi^{(6)}_t+(\frac{1}{N_2}-1)\phi^{(4)}_t-q_2\lambda_2\beta_1\right)\phi^{(6)}_t-(\epsilon_2-q_2^2)\lambda_2^2\beta_2^2 \label{phi2}\\
\nonumber {\dot\phi^{(3)}_t} &=&2\left( \frac{1}{N_2}\phi^{(3)}_t+(\frac{1}{N_2}-1)\phi^{(5)}_t+q_2\lambda_2(1-\beta_2)\right) \phi^{(3)}_t - \left(\frac{1}{N_2}\phi_t^3+(\frac{1}{N_2}-1)\phi_t^{(5)}\right)^2\\
 &&+2\left(\frac{1}{N_1}\eta^{(6)}_t+(\frac{1}{N_1}-1)\eta^{(5)}_t-q_1\lambda_1\beta_2\right) \phi^{(6)}_t- (\epsilon_2-q_2^2)\lambda_2^2(\beta_2-1)^2
\label{phi3}
\ea
\ba
\nonumber\dot\phi^{(4)}_t&=&\left(q_2+(1-\frac{1}{N_2})\phi^{(1)}_t-\frac{1}{N_2}\phi_t^{(5)}\right)\phi^{(4)}_t+\left(\frac{1}{N_2}\phi^{(6)}_t+(\frac{1}{N_2}-1)\phi^{(4)}_t-q_2\lambda_2\beta_1\right)\phi^{(5)}_t\\
\nonumber&&-\left((\frac{1}{N_2}-1)\phi^{(1)}_t+\frac{1}{N_2}\phi_t^{(5)}\right)\left(\frac{1}{N_2}\phi^{(6)}_t+(\frac{1}{N_2}-1)\phi_t^{(4)}\right)\\ 
&&+\left(\frac{1}{N_1}\eta^{(2)}_t+(\frac{1}{N_1}-1)\eta^{(4)}_t+q_1\lambda_1(1-\beta_1)\right)\phi^{(4)}_t-(\epsilon_2-q_2^2)\lambda_2\beta_1\label{phi4}\\
\nonumber\dot\phi^{(5)}_t&=&\left(q_2+(1-\frac{1}{N_2})\phi^{(1)}_t-\frac{1}{N_2}\phi_t^{(5)} \right)\phi^{(5)}_t+\left(\frac{1}{N_2}\phi^{(3)}_t+(\frac{1}{N_2}-1)\phi_t^{(5)}+q_2\lambda_2(1-\beta_2)\right)\phi^{(5)}_t\\
\nonumber&&-\left((\frac{1}{N_2}-1)\phi^{(1)}_t+\frac{1}{N_2}\phi_t^{(5)}\right)\left(\frac{1}{N_2}\phi^{(3)}_t+(\frac{1}{N_2}-1)\phi_t^{(5)}\right)\\
\label{phi5}&&+\left(\frac{1}{N_1}\eta^{(6)}_t+(\frac{1}{N_1}-1)\eta^{(5)}_t-q_1\lambda_1\beta_2\right)\phi_t^{(4)}-(\epsilon_2-q_2^2)\lambda_2(\beta_2-1)\\
\nonumber\dot\phi^{(6)}_t&=&\left( \frac{1}{N_1}\eta^{(2)}_t+(\frac{1}{N_1}-1)\eta^{(4)}_t+q_1\lambda_1(1-\beta_1)\right) \phi^{(6)}_t \\
\nonumber&&+\left(\frac{1}{N_2}\phi^{(3)}_t+(\frac{1}{N_2}-1)\phi^{(5)}_t+q_2\lambda_2(1-\beta_2)\right) \phi^{(6)}_t \\
\nonumber &&-\left(\frac{1}{N_2}\phi^{(6)}_t+(\frac{1}{N_2}-1)\phi_t^{(4)} \right) \left(\frac{1}{N_2}\phi^{(3)}_t+(\frac{1}{N_2}-1)\phi_t^{(5)}\right)\\
\nonumber&&+\left(\frac{1}{N_1}\eta^{(6)}_t+(\frac{1}{N_1}-1)\eta^{(5)}_t-q_1\lambda_1\beta_2\right) \phi^{(2)}_t +\left(\frac{1}{N_2}\phi^{(6)}_t+(\frac{1}{N_2}-1)\phi^{(4)}_t-q_2\lambda_2\beta_1\right)\phi^{(3)}_t\\
\label{phi6}&&- (\epsilon_2-q_2^2)\lambda_2^2\beta_1(\beta_2-1)\\
\nonumber\dot\phi^{(7)}_t&=&\left(q_2+(1-\frac{1}{N_2})\phi_t^{(1)}-\frac{1}{N_2}\phi_t^{(5)}\right)\phi_t^{(7)}\\
 \nonumber &&+\left((\frac{1}{N_1}-1)\eta_t^{(7)}+\frac{1}{N_1}\eta_t^{(8)}-\gamma_t^{(1)}\right)\phi_t^{(4)}+\left((\frac{1}{N_2}-1)\phi_t^{(7)}+\frac{1}{N_2}\phi_t^{(9)}-\gamma_t^{(2)}\right)\phi_t^{(5)}\\
 &&-\left((\frac{1}{N_2}-1)\phi_t^{(1)}+\frac{1}{N_2}\phi_t^{(5)}\right)\left((\frac{1}{N_2}-1)\phi_t^{(7)}+\frac{1}{N_2}\phi_t^{(9)}\right)\label{phi7}\\
  \nonumber \dot\phi^{(8)}_t&=&\left(\frac{1}{N_1}\eta_t^{(2)}+(\frac{1}{N_1}-1)\eta_t^{(4)}+q_1\lambda_1(1-\beta_1)\right)\phi_t^{(8)}+\left((\frac{1}{N_1}-1)\eta_t^{(7)}+\frac{1}{N_1}\eta_t^{(8)}-\gamma_t^{(1)}\right)\phi_t^{(2)}\\
  \nonumber&&-\left((\frac{1}{N_2}-1)\phi_t^{(4)}+\frac{1}{N_2}\phi_t^{(6)}\right)\left((\frac{1}{N_2}-1)\phi_t^{(7)}+\frac{1}{N_2}\phi_t^{(9)}\right)\\
 \nonumber&&+\left(\frac{1}{N_2}\phi^{(6)}_t+(\frac{1}{N_2}-1)\phi^{(4)}_t-q_2\lambda_2\beta_1\right)\phi_t^{(9)}+\left((\frac{1}{N_2}-1)\phi_t^{(7)}+\frac{1}{N_2}\phi_t^{(9)}-\gamma_t^{(2)}\right)\phi_t^{(6)}\\\label{phi8}\\
  \nonumber \dot\phi^{(9)}_t&=&\left((\frac{1}{N_2}-1)\phi_t^{(5)}+\frac{1}{N_2}\phi_t^{(3)}+q_2\lambda_2(1-\beta_2)\right)\phi_t^{(9)}+\left((\frac{1}{N_1}-1)\eta_t^{(7)}+\frac{1}{N_1}\eta_t^{(8)}-\gamma_t^{(1)}\right)\phi_t^{(6)}\\
  \nonumber   &&+\left(\frac{1}{N_1}\eta^{(6)}_t+(\frac{1}{N_1}-1)\eta_t^{(5)}-q_1\lambda_1\beta_2\right)\phi_t^{(8)}+\left((\frac{1}{N_2}-1)\phi_t^{(7)}+\frac{1}{N_2}\phi_t^{(9)}-\gamma_t^{(2)}\right)\phi_t^{(3)}\\
\label{phi9}  &&+\left((\frac{1}{N_2}-1)\phi_t^{(5)}+\frac{1}{N_2}\phi_t^{(3)}\right)\left((\frac{1}{N_2}-1)\phi_t^{(7)}+\frac{1}{N_2}\phi_t^{(9)}\right)
\ea
\ba
 \nonumber \dot\phi^{(10)}_t&=& \left((\frac{1}{N_1}-1)\eta_t^{(7)}+\frac{1}{N_1}\phi_t^{(8)}-\gamma^{(1)}_t\right)\eta_t^{(8)}+\left((\frac{1}{N_2}-1)\phi_t^{(7)}+\frac{1}{N_2}\phi_t^{(9)}-\gamma^{(2)}_t\right)\phi_t^{(9)}\\
  \nonumber &&-\frac{\sigma_1^2}{2}\left(\rho_1^2+(1-\rho^2)\rho_1+\frac{1}{N_1}(1-\rho^2)(1-\rho_1^2)\right)\phi_t^{(2)}-\rho^2\sigma_1\sigma_2\phi_t^{(6)}\\
  \nonumber  &&-\frac{\sigma_2^2}{2}\left((1-\frac{1}{N_2})(1-\rho^2)(1-\rho_2^2)\phi_t^{(1)}+\left(\rho^2+(1-\rho^2)\rho_2^2+\frac{1}{N_2}(1-\rho^2)(1-\rho^2_2)\right)\phi_t^{(3)}\right)\\
  &&-\frac{1}{2}\left((\frac{1}{N_1}-1)\eta_t^{(7)}+\frac{1}{N_1}\eta_t^{(8)}\right)^2,\label{phi10}
\ea
with terminal conditions  
\ban
&&\eta_T^{(1)}=c_1,\quad  \eta_T^{(2)}=c_1\lambda_1^2(\beta_1-1)^2, \quad \eta_T^{(3)}=c_1\lambda_1^2\beta_2^2, \quad \eta_T^{(4)}=c_1\lambda_1(\beta_1-1),\\
&&\eta_T^{(5)}=c_1\lambda_1\beta_2,\quad \eta_T^{(6)}=c_1\lambda_1^2(\beta_1-1)\beta_2,\quad \eta_T^{(7)}=\eta_T^{(8)}=\eta_T^{(9)}=\eta_T^{(10)}=0,
\ean
and 
\ban
&&\phi_T^{(1)}=c_2,\quad  \phi_T^{(2)}=c_2\lambda_2^2\beta_1^2, \quad \phi_T^{(3)}=c_2\lambda_2^2(\beta_2-1)^2, \quad \phi_T^{(4)}=c_2\lambda_2\beta_1,\\
&&\phi_T^{(5)}=c_2\lambda_2(\beta_2-1),\quad \phi_T^{(6)}=c_2\lambda_2^2\beta_1(\beta_2-1),\quad \phi_T^{(7)}= \phi_T^{(8)}=\phi_T^{(9)}=\phi_T^{(10)}=0.
\ean
We now discuss the existence of $\eta^{(i)}$ and $\phi^{(i)}$ for $i=1,\cdots,10$. First, observe that $\eta^{(3)}$, $\eta^{(i)}$ for $i=5,\cdots,10$, $\phi^{(2)}$, $\phi^{(4)}$, and $\phi^{(i)}$ for $i=6,\cdots,10$ are coupled first order linear equations.

The existence of $\eta^{(i)}$ and $\phi^{(i)}$ for $i=1,\cdots,10$ in the case of sufficiently large $N_1$ and $N_2$ can be verified in Proposition \ref{Prop_suff}. Hence, the closed-loop Nash equilibria are written as 
\ba
\label{optimal-finite-ansatz-V1-appen}
\hat\alpha^{(1)i}(t,x)&=&(q_1+\tilde\eta^{(1)}_t)(\overline x^{(1)}-x^{(1)i})+\tilde\eta^{(4)}_t\overline x^{(1)}+\tilde\eta^{(5)}_t\overline x^{(2)}+\tilde\eta_t^{(7)},\\
\hat\alpha^{(2)j}(t,x)&=&(q_2+\tilde\phi^{(1)}_t)(\overline x^{(2)}-x^{(2)j})+\tilde\phi^{(4)}_t\overline x^{(1)}+\tilde\phi^{(5)}_t\overline x^{(2)}+\tilde\phi_t^{(7)},\label{optimal-finite-ansatz-V2-appen}
\ea
where $\tilde\eta^i$ and $\tilde\phi^i$ for $i=1,4,5,7$ satisfy (\ref{tildeeta}-\ref{tildephi}).

\qed 


We then verify that $V^{(1)i}$, $V^{(2)j}$, $\hat\alpha^{(1)i}$, and $\hat\alpha^{(2)j}$ are the solutions to the problem (\ref{value-function-1}-\ref{coupled-2}). Without loss of generality, we show the verification theorem for $V^{(1)i}$.

{\theorem\label{Ver-Thm}(Verification Theorem)\\
Given the optimal strategies $\hat\alpha^{(1)l}$ for $l\neq i$ given by  \eqref{optimal-finite-ansatz-V1} and $\hat\alpha^{(2)j}$ for $j=1,\cdots,N_2$ given by \eqref{optimal-finite-ansatz-V2}, $V^{(1)i}$ given by \eqref{ansatz-1} is the value function associated to the problem \eqref{value-function-1} and  \eqref{value-function-2} subject to \eqref{coupled-1} and \eqref{coupled-2} and $\hat\alpha^{(1)i}$  is the optimal strategy for the $i$-th bank in the first group   and also the closed-loop Nash equilibrium. 
} 
\begin{proof}
According to the notations in \cite{Sun2016}, an admissible strategy $\tilde\alpha$ and its corresponding trajectory $\tilde X$ are given by  
\be
\tilde\alpha_t=\left(\hat\alpha_t^{(1)1},\cdots, \alpha^{(1)i}_t,\cdots,\hat\alpha_t^{(1)N_1},\hat\alpha_t^{(2)1},\cdots,\hat\alpha_t^{(2)N_2}\right)
\en
and
\be
\tilde X_t=\left(\tilde X_t^{(1)1},\cdots,\tilde X^{(1)i}_t,\cdots,\tilde X_t^{(1)N_1},\tilde X_t^{(2)1},\cdots,\tilde X_t^{(2)N_2}\right).
\en
In addition, the optimal strategy $\hat\alpha$ and its corresponding trajectory $\hat X$ are written as 
\be
\hat\alpha_t=\left(\hat\alpha_t^{(1)1},\cdots,\hat\alpha^{(1)i}_t,\cdots,\hat\alpha_t^{(1)N_1},\hat\alpha_t^{(2)1},\cdots,\hat\alpha_t^{(2)N_2}\right)
\en
and
\be
\hat X_t=\left(\hat X_t^{(1)1},\cdots,\hat X^{(1)i}_t,\cdots,\hat X_t^{(1)N_1},\hat X_t^{(2)1},\cdots,\hat X_t^{(2)N_2}\right).
\en
We claim for any admissible strategy $\tilde\alpha$, 
\be \label{upper}
V^{(1)i}(t,x)\leq \EE_{t,x}\left\{\int_t^T f^N_{(1)}(\tilde{X}_t, \alpha^{(1)i}_s)ds+g_{(1)}(\tilde{X}_T)\right\},
\en
and for $\hat{\alpha}$ 
\be \label{optim}
V^{(1)i}(t,x)= \EE_{t,x}\left\{\int_t^T f^N_{(1)}(\hat{X}_t, \hat{\alpha}^{(1)i}_s)ds+g_{(1)}(\hat{X}_T)\right\},
\en
leading to $\hat{\alpha}^{(1)i}$ is the optimal strategy for $i$-th bank in the first group .  We can assume 
\be \label{condition}
\EE_{t,x}\left\{\int_t^T f^N_{(1)}(\tilde{X}_t,  {\alpha}^{(1)i}_s)ds\right\}<\infty,
\en
otherwise (\ref{upper}) holds automatically. For some $M>0$, define the exit time 
\[
\theta_M=\inf\{t;\;  |\tilde{X}_t|\geq M \}.
\]
Given the condition \eqref{condition}, in order to complete the proof, we shall claim 
\be \label{nonexplosion}
P(\theta_M\leq T)\rightarrow 0,\ M\rightarrow \infty,
\en
and
\be \label{ui}
\EE_{t, x}[\sup_{t\leq s\leq T}|\tilde{X}_s|^2]<\infty.
\en
The proof of the above properties is postponed. 

Given the optimal strategies $\hat\alpha^{(1)i}$ and $\hat\alpha^{(2)j}$, applying It\^o's formula, we get 
\ba
\nonumber&&V^{(1)i}(T\wedge\theta_M,\tilde{X}_{T\wedge\theta_M})\\
\nonumber&=&V^{(1)i}(t,x)\\
\nonumber &&+\int_t^{T\wedge\theta_M}\bigg\{\partial_sV^{(1)i}(s,\tilde X_s)+
\sum_{l\neq i,l=1}^{N_1}\bigg(\gamma^{(1)}_t+{\hat\alpha^{(1)l}(t,x)}\bigg)\partial_{x^{(1)l}}V^{(1)i}(s,\tilde X_s)
\\
\nonumber&&+ \bigg(\gamma^{(1)}_t+{\alpha^{(1)i}}\bigg)\partial_{x^{(1)i}}V^{(1)i}(s,\tilde X_s)+\sum_{h=1}^{N_2}\bigg(\gamma_t^{(2)}+{\hat\alpha^{(2)h}(t,x)}\bigg)\partial_{x^{(2)h}}V^{(1)i}(s,\tilde X_s)\\
\nonumber&&+\frac{(\sigma^1)^2}{2} \sum_{l=1}^{N_1}\sum_{h=1}^{N_1}((\rho^{11})^2+\delta_{(1)l,(1)h}(1-(\rho^{11})^2))    \partial_{x^{(1)l}x^{(1)h}}V^{(1)i}(s,\tilde X_s)\\
\nonumber&&+\frac{\sigma^1\sigma^2}{2} \sum_{l=1}^{N_1}\sum_{h=1}^{N_2}((\rho^{12})^2+\delta_{(1)l,(2)h}(1-(\rho^{12})^2))    \partial_{x^{(1)l}x^{(2)h}}V^{(1)i}(s,\tilde X_s)\\
\nonumber&&+\frac{\sigma^2\sigma^1}{2} \sum_{l=1}^{N_2}\sum_{h=1}^{N_1}((\rho^{21})^2+\delta_{(2)l,(1)h}(1-(\rho^{21})^2))    \partial_{x^{(2)l}x^{(1)h}}V^{(1)i}(s,\tilde X_s)\\
\nonumber &&+ \frac{(\sigma^2)^2}{2} \sum_{l=1}^{N_2}\sum_{h=1}^{N_2}  ((\rho^{22})^2+\delta_{(2)l,(2)h}(1-(\rho^{22})^2))\partial_{x^{(2)l}x^{(2)h}}V^{(1)i}(s,\tilde X_s)\bigg\} ds\\
\nonumber&&+\int_t^{T\wedge\theta_M}\sigma^1\sum_{l=1}^{N_1}\partial_{x^{(1)l}}V^{(1)i}(s, \tilde{X}_s) dW^{(1)l}_s+\int_t^{T\wedge\theta_M}\sigma^2\sum_{h=1}^{N_1}\partial_{x^{(2)h}}V^{(1)i}(s, \tilde{X}_s) dW^{(2)h}_s.
\ea
Taking the expectation on both sides and using 
\ba \label{positive}
\nonumber  \partial_{t}V^{(1)i} &+& 
\sum_{l\neq i,l=1}^{N_1}\bigg(\gamma^{(1)}_t+{\hat\alpha^{(1)l}(t,x)}\bigg)\partial_{x^{(1)l}}V^{(1)i}
+ \bigg(\gamma^{(1)}_t+{\alpha^{(1)i}}\bigg)\partial_{x^{(1)i}}V^{(1)i}\\
\nonumber&+&\sum_{h=1}^{N_2}\bigg(\gamma_t^{(2)}+{\hat\alpha^{(2)h}(t,x)}\bigg)\partial_{x^{(2)h}}V^{(1)i}\\
\nonumber&+&\frac{(\sigma^1)^2}{2} \sum_{l=1}^{N_1}\sum_{h=1}^{N_1}((\rho^{11})^2+\delta_{(1)l,(1)h}(1-(\rho^{11})^2))    \partial_{x^{(1)l}x^{(1)h}}V^{(1)i}\\
\nonumber&+&\frac{\sigma^1\sigma^2}{2} \sum_{l=1}^{N_1}\sum_{h=1}^{N_2}((\rho^{12})^2+\delta_{(1)l,(2)h}(1-(\rho^{12})^2))    \partial_{x^{(1)l}x^{(2)h}}V^{(1)i}\\
\nonumber&+&\frac{\sigma^2\sigma^1}{2} \sum_{l=1}^{N_2}\sum_{h=1}^{N_1}((\rho^{21})^2+\delta_{(2)l,(1)h}(1-(\rho^{21})^2))    \partial_{x^{(2)l}x^{(1)h}}V^{(1)i}\\
\nonumber &+& \frac{(\sigma^2)^2}{2} \sum_{l=1}^{N_2}\sum_{h=1}^{N_2}  ((\rho^{22})^2+\delta_{(2)l,(2)h}(1-(\rho^{22})^2))\partial_{x^{(2)l}x^{(2)h}}V^{(1)i}\\
&+&\frac{(\alpha^{(1)i})^2}{2}-q^1\alpha^{(1)i}\left(\overline x^{\lambda_1}-x^{(1)i}\right)+\frac{\epsilon^1}{2}(\overline x^{\lambda_1}-x^{(1)i})^2  \geq 0
\ea
give
\ba
\nonumber  V^{(1)i}(t,x) &\leq&\EE\bigg\{\int_t^{T\wedge\theta_M}\left(\frac{(\alpha^{(1)i}_s)^2}{2}-q_1\alpha^{(1)i}_s\left(\overline{\tilde X}^{\lambda_1}_s-\tilde X^{(1)i}_s\right)+\frac{\epsilon_1}{2}(\overline {\tilde X}^{\lambda_1}_s-\tilde X^{(1)i}_s)^2\right)ds \\
&&+ V^{(1)i}(T\wedge\theta_M,\tilde X_{T\wedge\theta_M})\bigg\}.\label{ver-ineq}
\ea
Assuming that there exists a constant $C$ such that 
$$
|V^{(1)i}(T\wedge\theta_M,\tilde X_{T\wedge\theta_M})|\leq C(1+ \sup_{t\leq s\leq T} |\tilde{X}_s|^2),
$$ 
and then the condition \eqref{ui} implies  $V^{(1)i}(T\wedge\theta_M,X_{T\wedge\theta_M})$ being uniformly integrable. Together with (\ref{nonexplosion}), we obtain as $M\rightarrow\infty$
$$
\EE_{t, x}[ V^{(1)i}(T\wedge\theta_M,\tilde X_{T\wedge\theta_M})]\rightarrow \EE_{t, x}[g^N_{(1)}(\tilde{X}_T)].
$$
In addition, owing to the integrand staying nonnegative, we get
\ban
\nonumber && \EE\bigg\{\int_t^{T\wedge\theta_M}\left(\frac{(\alpha^{(1)i}_s)^2}{2}-q\alpha^{(1)i}_s\left(\overline X^{\lambda_1}_s-\tilde{X}^{(1)i}_s\right)+\frac{\epsilon}{2}(\overline X^{\lambda_1}_s-\tilde{X}^{(1)i}_s)^2\right)ds \bigg\}\\
&&  \leq \EE\bigg\{\int_t^{T}\left(\frac{(\alpha^{(1)i}_s)^2}{2}-q\alpha^{(1)i}_s\left(\overline X^{\lambda_1}_s-\tilde{X}^{(1)i}_s\right)+\frac{\epsilon}{2}(\overline X^{\lambda_1}_s-\tilde{X}^{(1)i}_s)^2\right)ds \bigg\}.
\ean
Based on the above results, we obtain
\ba
V^{(1)i}(t,x_t) \leq\EE\bigg\{\int_t^{T}f^N_{(1)}(\tilde{X}_t, \alpha^{(1)i}_s)ds 
+g^N_{(1)}(\tilde{X}_T))\bigg\}.\label{V(t,x)}
\ea
This completes the proof of (\ref{upper}).
In order to prove \eqref{nonexplosion}, we first recall 
\be\label{coupled-appex}
d\tilde X_t^{(1)i}=(\alpha^{(1)i}_t+\gamma_t^{(1)})dt+ \sigma_1dW^{(1)i}_t
\en
and
\[
d(\tilde X_t^{(1)i})^2=\left(2\tilde X_t^{(1)i}(\alpha^{(1)i}_t+\gamma^{(1)}_t)+\sigma_1^2\right)dt+2\tilde X_t^{(1)i}\sigma_1dW^{(1)i}_t. 
\]
Denote $\beta>0$ large enough satisfying
\[
\beta>2\sup_{0\leq t\leq T}\left\{(\gamma^{(1)}_t)^2+\sigma_1^2+1\right\}.
\] 
and apply It\^o formula to $e^{-\beta t}(\tilde X_t^{(1)i})^2$ leading to 
\ban
&&e^{-\beta(t\wedge\theta_M)}(\tilde X^{(1)i})^2_{t\wedge\theta_M}\\
&\leq&(\tilde X_0^{(1)i})^2+\int_0^{t\wedge\theta_M}e^{-\beta s}\left(-\frac{\beta}{2}((\tilde X_s^{(1)i})^2-1)+|\alpha^{(1)i}_s|^2\right)ds+\int_0^{t\wedge\theta_M}e^{-\beta s}2\sigma_1\tilde X_s^{(1)i}dW^{(1)i}_t. 
\ean
Taking expectation on both sides gives 
\be
e^{-\beta t}M^2\PP(\theta_M\leq t)\leq (\tilde X_0^{(1)i})^2+\frac{\beta }{2}t+\EE\left[\int_0^{t\wedge\theta_M}|\alpha^{(1)i}_s|^2ds\right]-\frac{\beta}{2}\EE\left[\int_0^{t\wedge\theta_M}(\tilde X_s^{(1)i})^2ds\right].
\en
By letting $t=T$ and $M\rightarrow \infty$, we have \eqref{nonexplosion} and 
\be\label{condition-X2}
\frac{\beta}{2}\EE\left[\int_0^{T}(\tilde X_s^{(1)i})^2ds\right]\leq (\tilde X_0^{(1)i})^2+\frac{\beta}{2}T+\EE\left[\int_0^T|\alpha^{(1)i}_s|^2ds\right].
\en
Applying Doob's martingale inequality and Cauchy-Schuwartz inequality to \eqref{coupled-appex}  and using \eqref{condition-X2} imply
\ba
\nonumber&&\EE[\sup_{t\leq s\leq T}|\tilde X^{(1)i}_s|^2]\\
\nonumber&\leq& 2\EE\left[\int_t^T|\gamma^{(1)}_s|ds\right]^2+2\EE\left[\int_t^T|\alpha^{(1)i}_s|ds\right]^2+2\EE\left[\sup_{t\leq u\leq T}\int_t^u2\sigma_1 \tilde X_s^{(1)i}dW^{(1)i}_s\right]^2\\
&\leq&C_1T\EE\left[\int_t^T|\gamma^{(1)}_s|^2+|\alpha^{(1)i}_s|^2ds\right]+C_2\EE\left[\int_t^T (\tilde X^{(1)i}_s)^2ds\right]< \infty,
\ea
where $C_1$ and $C_2$ are two positive constants. This proves \eqref{ui}. 
\end{proof}

\section{Proof of Theorem \ref{Hete-open}} \label{Appex-open}
Applying the Pontryagin principle to the proposed problem (\ref{objective}-\ref{diffusions}), we obtain the Hamiltonians written as 
\ba
\nonumber H^{(1)i}&=&\sum_{k=1}^{N_1}(\gamma_t^{(1)}+\alpha^{(1)k})y^{(1)i,(1)k}+\sum_{k=1}^{N_2}(\gamma^{(2)}_t+\alpha^{(2)k})y^{(1)i,(2)k}\\
&&+\frac{(\alpha^{(1)i})^2}{2}-q_1\alpha^{(1)i}(\overline x^{\lambda_1}-x^{(1)i})+\frac{\epsilon_1}{2}(\overline x^{\lambda_1}-x^{(1)i})^2,
\ea
and 
\ba
\nonumber H^{(2)j}&=&\sum_{k=1}^{N_1}(\gamma_t^{(1)}+\alpha^{(1)k})y^{(2)j,(1)k}+\sum_{k=1}^{N_2}(\gamma^{(2)}_t+\alpha^{(2)k})y^{(2)j,(2)k}\\
&&+\frac{(\alpha^{(2)j})^2}{2}-q_2\alpha^{(2)j}(\overline x^{\lambda_2}-x^{(2)j})+\frac{\epsilon_2}{2}(\overline x^{\lambda_2}-x^{(2)j})^2, 
\ea
where the adjoint diffusions $Y_t^{(1)i,(1)l}$, $Y_t^{(1)i,(2)h}$, $Y_t^{(2)j,(1)l}$, and $Y_t^{(2)j,(2)h}$ for $i,l=1,\cdots,N_1$ and $j,h=1,\cdots,N_2$  are given by
\ba
\label{Y-1-1}\nonumber dY_t^{(1)i,(1)l}&=&-\frac{\partial H^{(1)i}}{\partial x^{(1)l}}(\hat\alpha_t^{(1)i})dt+\sum_{k=0}^2Z_t^{(1)i,(1)l,k}dW_t^{(k)}\\
&&+\sum_{k=1}^{N_1}Z_t^{(1)i,(1)l,(1)k}dW_t^{(1)k}+\sum_{k=1}^{N_2}Z_t^{(1)i,(1)l,(2)k}dW_t^{(2)k},\\
\nonumber dY_t^{(1)i,(2)h}&=&-\frac{\partial H^{(1)i}}{\partial x^{(2)h}}(\hat\alpha_t^{(1)i})dt+\sum_{k=0}^2Z_t^{(1)i,(2)h,k}dW_t^{(k)}\\
&&+\sum_{k=1}^{N_1}Z_t^{(1)i,(2)h,(1)k}dW_t^{(1)k}+\sum_{k=1}^{N_2}Z_t^{(1)i,(2)h,(2)k}dW_t^{(2)k},
\ea
and
\ba
\nonumber dY_t^{(2)j,(1)l}&=&-\frac{\partial H^{(2)j}}{\partial x^{(1)l}}(\hat\alpha_t^{(2)j})dt+\sum_{k=0}^2Z_t^{(2)j,(1)l,k}dW_t^{(k)}\\
&&+\sum_{k=1}^{N_1}Z_t^{(2)j,(1)l,(1)k}dW_t^{(1)k}+\sum_{k=1}^{N_2}Z_t^{(2)j,(1)l,(2)k}dW_t^{(2)k},\\
\nonumber dY_t^{(2)j,(2)h}&=&-\frac{\partial H^{(2)j}}{\partial x^{(2)h}}(\hat\alpha_t^{(2)j})dt+\sum_{k=0}^2Z_t^{(2)j,(2)h,k}dW_t^{(k)}\\
\label{Y-1-4}&&+\sum_{k=1}^{N_1}Z_t^{(2)j,(2)h,(1)k}dW_t^{(1)k}+\sum_{k=1}^{N_2}Z_t^{(2)j,(2)h,(2)k}dW_t^{(2)k}
\ea
with the squared integrable progressive processes 
\ban
Z_t^{(1)i,(1)l,k}, Z_t^{(1)i,(1)l,(1)k_1}, Z_t^{(1)i,(1)l,(2)k_2}, Z_t^{(1)i,(2)h,k}, Z_t^{(1)i,(2)h,(1)k_1}, Z_t^{(1)i,(2)h,(2)k_2},\\
Z_t^{(2)j,(1)l,k}, Z_t^{(2)j,(1)l,(1)k}, Z_t^{(2)j,(1)l,(2)k}, Z_t^{(2)j,(2)h,k}, Z_t^{(2)j,(2)h,(1)k}, Z_t^{(2)j,(2)h,(2)k}, 
\ean
for $i,l=1,\cdots,N_1$, $j,h=1,\cdots,N_2$, and $k=0,1,2$. The terminal conditions are written as 
\[
Y_T^{(1)i,(1)l}=c_1\left( \frac{1-\lambda_1}{N_1}+\frac{\lambda_1}{N}-\delta_{(1)i,(1)l}\right)(\overline X^{\lambda_1}_T-X_T^{(1)i}),\,\;Y_T^{(1)i,(2)h}=c_1\frac{\lambda_1}{N}(\overline X^{\lambda_1}_T-X_T^{(1)i}),
\] 
and
\[
Y_T^{(2)j,(1)l}=c_2\frac{\lambda_2}{N}(\overline X^{\lambda_2}_T-X_T^{(2)j}),\;Y_T^{(2)j,(2)h}=c_2\left( \frac{1-\lambda_2}{N_2}+\frac{\lambda_2}{N}-\delta_{(2)j,(2)h}\right)(\overline X^{\lambda_2}_T-X_T^{(2)j}).
\]

Through minimizing the Hamiltonians with respect to $\alpha$ written as
\[
\frac{\partial H^{(1)i}}{\partial \alpha^{(1)i}}(\hat\alpha^{(1)i})=0, \quad \frac{\partial H^{(2)j}}{\partial \alpha^{(2)j}}(\hat\alpha^{(2)j})=0,
\]
the Nash equilibria are given by 
\ba
\label{optimal_open-1-1}\hat\alpha^{o,(1)i}&=&q_1(\overline x^{\lambda_1}-x^{(1)i})-y^{(1)i, (1)i},\\
\label{optimal_open-1-2}\hat\alpha^{o,(2)j}&=&q_2(\overline x^{\lambda_2}-x^{(2)j})-y^{(2)j,(2)j},
\ea
such that the optimal forward equations for banks are given by
\ba
\nonumber dX^{(1)i}_t &=& \left(q_1(\overline X_t^{\lambda_1}-X_t^{(1)i})-Y_t^{(1)i, (1)i} +\gamma^{(1)}_{t}\right)dt\\
&&+\sigma_1\left(\rho dW^{(0)}_t+\sqrt{1-\rho^2}\left(\rho_{1}dW_t^{(1)}+\sqrt{1-\rho^2_{1}}dW^{(1)i}_t\right)\right) ,
\ea
and 
\ba 
\nonumber dX^{(2)j}_t &=& \left(q_2(\overline X_t^{\lambda_2}-X_t^{(2)j})-Y_t^{(2)j,(2)j}+\gamma^{(2)}_{t}\right)dt\\
&&+\sigma_2\left(\rho dW^{(0)}_t+\sqrt{1-\rho^2}\left(\rho_{1}dW_t^{(2)}+\sqrt{1-\rho^2_{2}}dW^{(2)j}_t\right)\right).
\ea

Inserting \eqref{optimal_open-1-1} and \eqref{optimal_open-1-2} into (\ref{Y-1-1}-\ref{Y-1-4}), the adjoint processes are rewritten as 
\ba
\nonumber dY_t^{(1)i,(1)l}&=&\left( \frac{1-\lambda_1}{N_1}+\frac{\lambda_1}{N}-\delta_{(1)i,(1)l}\right)\left\{-(\epsilon_1-q_1^2)(\overline X^{\lambda_1}_t-X_t^{(1)i})-q_1Y_t^{(1)i,(1)i}\right\}dt\\
\nonumber&&+\sum_{k=0}^2Z_t^{(1)i,(1)l,k}dW_t^{(k)}\\
&&+\sum_{k_1=1}^{N_1}Z_t^{(1)i,(1)l,(1)k_1}dW_t^{(1)k_1}+\sum_{k_2=1}^{N_2}Z_t^{(1)i,(1)l,(2)k_2}dW_t^{(2)k_2},
\label{Y-1-1}\\
\nonumber dY_t^{(1)i,(2)h}&=&\frac{\lambda_1}{N}\left\{-(\epsilon_1-q_1^2)(\overline X^{\lambda_1}_t-X_t^{(1)i})-q_1Y_t^{(1)i,(1)i}\right\}dt+\sum_{k=0}^2Z_t^{(1)i,(2)h,k}dW_t^{(k)}\\
 &&+\sum_{k_1=1}^{N_1}Z_t^{(1)i,(2)h,(1)k_1}dW_t^{(1)k_1}+\sum_{k_2=1}^{N_2}Z_t^{(1)i,(2)h,(2)k_2}dW_t^{(2)k_2},
\ea
with the terminal conditions 
\[
Y_T^{(1)i,(1)l}=c_1\left( \frac{1-\lambda_1}{N_1}+\frac{\lambda_1}{N}-\delta_{(1)i,(1)l}\right)(\overline X^{\lambda_1}_T-X_T^{(1)i}),\,\;Y_T^{(1)i,(2)h}=c_1\frac{\lambda_1}{N}(\overline X^{\lambda_1}_T-X_T^{(1)i}),
\] 
and
\ba
\nonumber dY_t^{(2)j,(1)l}&=&\frac{\lambda_2}{N}\left\{-(\epsilon_2-q_2^2)(\overline X^{\lambda_2}_t-X_t^{(2)j})-q_1Y_t^{(1)i,(1)i}\right\}dt+\sum_{k=0}^2Z_t^{(2)j,(1)l,k}dW_t^{(k)}\\
&&+\sum_{k_1=1}^{N_1}Z_t^{(2)j,(1)l,(1)k_1}dW_t^{(1)k_1}+\sum_{k_2=1}^{N_2}Z_t^{(2)j,(1)l,(2)k_2}dW_t^{(2)k_2},\\
\nonumber dY_t^{(2)j,(2)h}&=&\left( \frac{1-\lambda_2}{N_2}+\frac{\lambda_2}{N}-\delta_{(2)j,(2)h}\right)\left\{-(\epsilon_2-q_2^2)(\overline X^{\lambda_2}_t-X^{(2)j})-q_2Y_t^{(2)j,(2)j}\right\}dt\\
\nonumber&&+\sum_{k=0}^2Z_t^{(2)j,(2)h,k}dW_t^{(k)}\\
&&+\sum_{k_1=1}^{N_1}Z_t^{(2)j,(2)h,(1)k_1}dW_t^{(1)k_1}+\sum_{k_2=1}^{N_2}Z_t^{(2)j,(2)h,(2)k_2}dW_t^{(2)k_2},
\label{Y-1-4}
\ea 
with the terminal conditions 
\[
Y_T^{(2)j,(1)l}=c_2\frac{\lambda_2}{N}(\overline X^{\lambda_2}_T-X_T^{(2)j}),\;Y_T^{(2)j,(2)h}=c_2\left( \frac{1-\lambda_2}{N_2}+\frac{\lambda_2}{N}-\delta_{(2)j,(2)h}\right)(\overline X^{\lambda_2}_T-X_T^{(2)j}).
\]
We then make the ansatz written as
\ba
\nonumber Y_t^{(1)i,(1)l}&=&\left(\frac{1}{\widetilde N_1}-\delta_{(1)i,(1)l}\right)\left(\eta_t^{(o),1}(\overline X_t^{(1)}-X_t^{(1)i})+\eta_t^{(o),2}\overline X_t^{(1)}+\eta_t^{(o),3}\overline X_t^{(2)}+\eta_t^{(o),4}\right)\\
\label{ansatz_open_1} \\
Y_t^{(1)i,(2)h}&=&\frac{\lambda_1}{N}\left(\eta_t^{(o),1}(\overline X_t^{(1)}-X_t^{(1)i})+\eta_t^{(o),2}\overline X_t^{(1)}+\eta_t^{(o),3}\overline X_t^{(2)}+\eta_t^{(o),4}\right)
\ea
and
\ba
 Y_t^{(2)j,(1)l}&=&\frac{\lambda_2}{N}\left(\phi_t^{o,(1)}(\overline X_t^{(1)}-X_t^{(1)i})+\phi_t^{o,(2)}\overline X_t^{(1)}+\phi_t^{o,(3)}\overline X_t^{(2)}+\phi_t^{o,(4)}\right)\\
\nonumber  Y_t^{(2)j,(2)h}&=&\left(\frac{1}{\widetilde N_2}-\delta_{(1)i,(1)l}\right)\left(\phi_t^{o,(1)}(\overline X_t^{(1)}-X_t^{(1)i})+\phi_t^{o,(2)}\overline X_t^{(1)}+\phi_t^{o,(3)}\overline X_t^{(2)}+\phi_t^{o,(4)}\right)\\
  \label{ansatz_open_2} 
\ea
where $$\frac{1}{\widetilde N_k}=\frac{1-\lambda_k}{N_k}+\frac{\lambda_k}{N},$$ for $k=1,2$. Differentiating (\ref{ansatz_open_1}-\ref{ansatz_open_2}) and identifying the $dY_t^{(1)i,(1)l}$, $dY_t^{(1)i,(2)h}$, $dY_t^{(2)j,(1)l}$, and $Y_t^{(2)j,(2)h}$ with (\ref{Y-1-1}-\ref{Y-1-4}), we obtain that the deterministic functions $\eta_t^{o,(i)}$ and $\phi_t^{o,(i)}$ for $i=1,\cdots,4$  must satisfy
\ba
\label{eta_open-1}\dot\eta_t^{o,(1)}&=&\left(2-\frac{1}{\widetilde N_1}\right)q_1\eta_t^{o,(1)}+\left(1-\frac{1}{\widetilde N_1}\right)(\eta_t^{o,(1)})^2-(\epsilon_1-q_1^2)\\
\nonumber \dot\eta_t^{o,(2)}&=&-\left(q_1\lambda_1(\beta_1-1)+(1-\frac{1}{\widetilde N_1})\eta_t^{o,(2)}\right)\eta_t^{o,(2)}-\left(q_2\lambda_1\beta_1+(1-\frac{1}{\widetilde N_2})\phi^{o,(2)}\right)\eta_t^{o,(3)}\\
&&-q_1\left(\frac{1}{\widetilde N_1}-1\right)\eta_t^{o,(2)}-(\epsilon_1-q_1^2)\lambda_1(\beta_1-1)\\
\nonumber \dot\eta_t^{o,(3)}&=&-\left(q_1\lambda_2\beta_2+(1-\frac{1}{\widetilde N_1})\eta_t^{o,(3)}\right)\eta_t^{o,(2)}-\left(q_2\lambda_2(\beta_1-1)+(1-\frac{1}{\widetilde N_2})\phi_t^{o,(3)}\right)\eta_t^{o,(3)}\\
&&-q_1\left(\frac{1}{\widetilde N_1}-1\right)\eta_t^{o,(3)}-(\epsilon_1-q_1^2)\lambda_1\beta_2\\
\nonumber \dot\eta_t^{o,(4)}&=&-\left((1-\frac{1}{\widetilde N_1})\eta_t^{o,(4)}+\gamma_t^{(1)}\right)\eta_t^{o,(2)}\\
 &&-\left((1-\frac{1}{\widetilde N_2})\phi_t^{o,(4)}+\gamma_t^{(2)}\right)\eta_t^{o,(3)}-q_1\left(\frac{1}{\widetilde N_1}-1\right)\eta_t^{o,(4)}
\ea
\ba
\dot\phi_t^{o,(1)}&=&\left(2-\frac{1}{\widetilde N_2}\right)q_2\phi_t^{o,(1)}+\left(1-\frac{1}{\widetilde N_2}\right)(\phi_t^{o,(1)})^2-(\epsilon_2-q_2^2)\\
\nonumber \dot\phi_t^{o,(2)}&=&-\left(q_2\lambda_2(\beta_1-1)+(1-\frac{1}{\widetilde N_1})\eta_t^{o,(2)}\right)\phi_t^{o,(2)}-\left(q_2\lambda_2\beta_2+(1-\frac{1}{\widetilde N_2})\phi_t^{o,(2)}\right)\phi_t^{o,(3)}\\
&&-q_2\left(\frac{1}{\widetilde N_2}-1\right)\phi_t^{o,(2)}-(\epsilon_2-q_2^2)\lambda_2\beta_1\\
\nonumber\dot\phi_t^{o,(3)}&=&-\left(q_1\lambda_2\beta_2+(1-\frac{1}{\widetilde N_1})\eta_t^{o,(3)}\right)\phi_t^{o,(2)}-\left(q_2\lambda_2(\beta_1-1)+(1-\frac{1}{\widetilde N_2})\phi_t^{o,(3)}\right)\phi_t^{o,(3)}\\
&&-q_2\left(\frac{1}{\widetilde N_2}-1\right)\phi_t^{o,(3)}-(\epsilon_2-q_2^2)\lambda_1(\beta_2-1)\\
\nonumber\dot\phi_t^{o,(4)}&=&-\left((1-\frac{1}{\widetilde N_1})\eta_t^{o,(4)}+\gamma_t^{(1)}\right)\phi_t^{o,(2)}\\
&&-\left((1-\frac{1}{\widetilde N_2})\phi_t^{o,(4)}+\gamma_t^{(2)}\right)\phi_t^{o,(3)}-q_2\left(\frac{1}{\widetilde N_2}-1\right)\phi_t^{o,(4)}\label{phi_open-4}
\ea
with the terminal conditions 
\ban
\eta_T^{o,(1)}=c_1,\;\eta_T^{o,(2)}=c_1\lambda_1(\beta_1-1)\;\eta_T^{o,(3)}=c_1\lambda_1\beta_2,\;\eta_T^{o,(4)}=0,\\
\phi_T^{o,(1)}=c_2,\;\eta_T^{o,(2)}=c_2\lambda_2\beta_1\;\phi_T^{o,(3)}=c_2\lambda_2(\beta_1-1),\;\phi_T^{o,(4)}=0,
\ean
and the squared integrable progressive processes are given by
\ban
&&Z_t^{(1)i,(1)l,0}=\left(\frac{1}{\widetilde N_1}-\delta_{(1)i,(1)l}\right)\rho\left(\sigma_1\eta_t^{o,(2)}+\sigma_2\eta_t^{o,(3)}\right),\\
&&Z_t^{(1)i,(1)l,1}=\left(\frac{1}{\widetilde N_1}-\delta_{(1)i,(1)l}\right)\eta_t^{o,(2)}\sigma_1\sqrt{1-\rho^2}\rho_1,\;\\
&&Z_t^{(1)i,(1)l,2}=\left(\frac{1}{\widetilde N_1}-\delta_{(1)i,(1)l}\right)\eta_t^{o,(3)}\sigma_2\sqrt{1-\rho^2}\rho_2,\\
&&Z_t^{(1)i,(1)l,(1)k_1}=\frac{1}{N_1}\left(\frac{1}{\widetilde N_1}-\delta_{(1)i,(1)l}\right)\eta_t^{o,(1)}\left(\frac{1}{N_1}\sigma_1\sqrt{1-\rho^2}\sqrt{1-\rho_1^2}-\delta_{(1)i,(1)k_1}\right),\\
&&Z_t^{(1)i,(1)l,(2)k_2}=\frac{1}{N_1}\left(\frac{1}{\widetilde N_1}-\delta_{(1)i,(1)l}\right)\eta_t^{o,(3)}\sigma_2\sqrt{1-\rho^2}\sqrt{1-\rho_2^2},
\ean
and
\ban
&&Z_t^{(1)i,(2)h,0}=\frac{\lambda_1}{N}\rho\left(\sigma_1\eta_t^{o,(2)}+\sigma_2\eta_t^{o,(3)}\right),\\
&&Z_t^{(1)i,(2)h,1}=\frac{\lambda_1}{N}\eta_t^{o,(2)}\sigma_1\sqrt{1-\rho^2}\rho_1,\\
&&Z_t^{(1)i,(2)h,2}=\frac{\lambda_1}{N}\eta_t^{o,(3)}\sigma_2\sqrt{1-\rho^2}\rho_2,\\
&&Z_t^{(1)i,(2)h,(1)k_1}=\frac{\lambda_1}{N}\eta_t^{o,(1)}\left(\frac{1}{N_1}\sigma_1\sqrt{1-\rho^2}\sqrt{1-\rho_1^2}-\delta_{(1)i,(1)k_1}\right)\\
&&Z_t^{(1)i,(2)h,(2)k_2}=\frac{\lambda_1}{N}\eta_t^{o,(3)}\sigma_2\sqrt{1-\rho^2}\sqrt{1-\rho_2^2},\\
\ean
and
\ban
&&Z_t^{(2)j,(1)l,0}=\frac{\lambda_2}{N}\rho\left(\sigma_1\phi_t^{o,(2)}+\sigma_2\phi_t^{o,(3)}\right),\\
&&Z_t^{(2)j,(1)l,1}=\frac{\lambda_2}{N}\phi_t^{o,(2)}\sigma_1\sqrt{1-\rho^2}\rho_1,\\
&&Z_t^{(2)j,(1)l,2}=\frac{\lambda_2}{N}\phi_t^{o,(3)}\sigma_2\sqrt{1-\rho^2}\rho_2,\\
&&Z_t^{(2)j,(1)l,(1)k_1}=\frac{\lambda_2}{N}\phi_t^{o,(2)} \frac{1}{N_1}\sigma_1\sqrt{1-\rho^2}\sqrt{1-\rho_1^2},\\
&&Z_t^{(2)j,(1)l,(2)k_2}=\frac{\lambda_2}{N}\phi_t^{o,(1)}\left( \frac{1}{N_1}\sigma_1\sqrt{1-\rho^2}\sqrt{1-\rho_2^2}-\delta_{(2)j,(2)k_2}\right),
\ean
and
\ban
&&Z_t^{(2)j,(2)h,0}=\left(\frac{1}{\widetilde N_1}-\delta_{(2)j,(2)h}\right)\rho\left(\sigma_1\phi_t^{o,(2)}+\sigma_2\phi_t^{o,(3)}\right),\\
&&Z_t^{(1)i,(2)h,1}=\left(\frac{1}{\widetilde N_1}-\delta_{(2)j,(2)h}\right)\phi_t^{o,(2)}\sigma_1\sqrt{1-\rho^2}\rho_1,\\
&&Z_t^{(2)j,(2)h,2}=\left(\frac{1}{\widetilde N_1}-\delta_{(2)j,(2)h}\right)\frac{\lambda_2}{N}\phi_t^{o,(3)}\sigma_2\sqrt{1-\rho^2}\rho_2,\\
&&Z_t^{(2)j,(2)h,(1)k_1}=\left(\frac{1}{\widetilde N_1}-\delta_{(2)j,(2)h}\right)\phi_t^{o,(2)} \frac{1}{N_1}\sigma_1\sqrt{1-\rho^2}\sqrt{1-\rho_1^2},\\
&&Z_t^{(2)j,(2)h,(2)k_2}=\frac{\lambda_2}{N}\phi_t^{o,(1)}\left( \frac{1}{N_1}\sigma_1\sqrt{1-\rho^2}\sqrt{1-\rho_2^2}-\delta_{(2)j,(2)k_2}\right),
\ean
for $i,l=1,\cdots,N_1$ and $j,h=1,\cdots,N_2$. Hence, the open-loop Nash equilibria are written as 
\ba
\nonumber \hat\alpha^{o,(1)i}&=&\left(q_1+(1-\frac{1}{\widetilde N_1})\eta_t^{o,(1)}\right)(\overline X_t^{(1)}-X_t^{(1)i})+\left(q_1\lambda_1(\beta_1-1)+(1-\frac{1}{\widetilde N_1})\eta_t^{o,(2)}\right)\overline X_t^{(1)}\\
\label{open-app-1}&&+\left(q_1\lambda_1\beta_2+(1-\frac{1}{\widetilde N_1})\eta_t^{o,(3)}\right)\overline X_t^{(2)}+\left(1-\frac{1}{\widetilde N_1}\right)\eta_t^{o,(4)},\\
\nonumber \hat\alpha^{o,(2)j}&=&\left(q_2+(1-\frac{1}{\widetilde N_2})\phi_t^{o,(1)}\right)(\overline X_t^{(2)}-X_t^{(2)j})+\left(q_2\lambda_2\beta_1+(1-\frac{1}{\widetilde N_2})\phi_t^{o,(2)}\right)\overline X_t^{(1)}\\
\label{open-app-2}&&+\left(q_2\lambda_2(\beta_2-1)+(1-\frac{1}{\widetilde N_2})\phi_t^{o,(3)}\right)\overline X_t^{(2)}+\left(1-\frac{1}{\widetilde N_2}\right)\phi_t^{o,(4)}.
\ea
Note that the existence of the coupled ODEs (\ref{eta_open-1}-\ref{phi_open-4}) in the case of sufficiently large $N_1$ and $N_2$ is studied in Proposition \ref{Prop_suff}. Based on the open-loop equilibria (\ref{open-app-1}-\ref{open-app-2}), we have the lending and borrowing system satisfying the Lipchitz condition in the sense that the existence of the corresponding FBSDEs can be verified using the fixed point argument. See \cite{Carmona-Fouque2016} for instance.

\section{Proof of Theorem \ref{Hete-MFG-prop}}\label{Appex-Hete-MFG}
Due to the non-Markovian structure for the given $m_t^{(k)}$ for $k=1,\cdots,d$, in order to obtain the $\epsilon$-Nash equilibrium for the coupled diffusions with common noises, we again apply the adjoint FBSDEs discussed in  \cite{CarmonaDelarueLachapelle} and \cite{R.Carmona2013}. The corresponding Hamiltonian is given by
\be
H^{k}(t,x,y^{(k)},\alpha)=\sum_{h=1}^d(\alpha^{(h)}+\gamma^{(h)}_t)y^{(k),h}+\frac{(\alpha^{(k)})^2}{2}-q_k\alpha^{(k)}\left(M^{\lambda_k}_t-x^{(k)}\right)+\frac{\epsilon_k}{2}\left(M^{\lambda_k}_t-x^{(k)}\right)^2,
\en   
where $x=(x^{(1)},\cdots,x^{(d)})$, $y^{(k)}=(y^{k,1},\cdots,y^{k,d})$, and $\alpha=(\alpha^{(1)},\cdots,\alpha^{(d)})$. The Hamiltonian attains its minimum at
\be
\hat\alpha^{m,(k)}_t=q_k\left(M^{\lambda_k}_t-x^{(k)}\right)-y^{k,k}.
\en 
The backward equations satisfy 
\ba
\nonumber dY^{k,l}_t&=&-\partial_{x^{(l)}}H^k(\hat\alpha^{(k)})dt+\sum_{h=0}^dZ^{0,k,l,h}_tdW^{(0),(h)}_t+\sum_{h=1}^dZ^{k,l,h}_tdW^{(h)}_t\\
\nonumber &=&(q_kY^{k,k}_t+(\epsilon_k-q_k^2)(M^{\lambda_k}_t-X^{(k)}_t))\delta_{k,l}dt+\sum_{h=0}^dZ^{0,k,l,h}_tdW^{(0),(h)}_t+\sum_{h=1}^dZ^{k,l,h}_tdW^{(h)}_t,\\
 \label{Y-MFG}
\ea
with the terminal conditions $Y^{k,l}_T=\frac{c_k}{2}(X^{(k)}_T-m^{(k)}_T)\delta_{k,l}$ for $k,l=1,\cdots,d$ where the processes $Z^{0,k,l,h}_t$ and $Z^{k,l,h}_t$ are adapted and square integrable.  We make the ansatz for $Y^{k,l}_t$ written as
\be\label{ansatz-MFG}
Y^{k,l}_t=-\left(\eta^{m,(k)}_t(m^{(k)}_t-X_t^{(k)})+\sum_{h_1=1}^d\psi_t^{m,(k),h_1}m^{(h_1)}_t+\mu^{m,(k)}_t\right)\delta_{k,l},
\en
leading to 
\ba
\nonumber dX^{(k)}_t&=&\bigg\{(q_k+\eta^{m,(k)}_t)(m^{(k)}_t -X_t^{(k)})+\sum_{h_1=1}^d\psi_t^{m,(k),h_1}m^{(h_1)}_t+\mu^{m,(k)}_t+\gamma^{(k)}_t\\
\nonumber&&+q_k\lambda_k\sum_{h_1=1}^d(\beta_{h_1}-\delta_{k,h_1})m^{(h_1)}_t\bigg\}dt\\
&&+\sigma_k\left(\rho dW^{(0),(0)}_t+\sqrt{1-\rho^2}\left(\rho_{k}dW_t^{(0),(k)}+\sqrt{1-\rho^2_{k}}dW^{(k)}_t\right)\right),\label{X-MFG-1}\\
\nonumber dm^{(k)}_t&=&\bigg\{\sum_{h_1=1}^d\psi_t^{m,(k),h_1}m^{(h_1)}_t+\mu^{m,(k)}_t+\gamma^{(k)}_t+q_k\lambda_k\sum_{h_1=1}^d(\beta_{h_1}-\delta_{k,h_1})m^{(h_1)}_t\bigg\}dt\\
&&+\sigma_k\left(\rho dW^{(0),(0)}_t+\sqrt{1-\rho^2} \rho_{k}dW_t^{(0),(k)} \right) \label{m-hete}
\ea
Inserting the ansatz \eqref{ansatz-MFG} into \eqref{Y-MFG} gives 
\ba
\nonumber dY^{k,l}_t&=&\delta_{k,l}\bigg\{(-q_k\eta_t^{m,(k)}+\epsilon_k-q_k^2)(m^{(k)}_t-X_t^{(k)})+(\epsilon_k-q_k^2) \lambda_k\sum_{h_1=1}^d(\beta_{h_1}-\delta_{k,h_1})m_t^{(h_1)} \\
\nonumber&&-q_k\sum_{h_1=1}^d\psi_t^{m,(k),h_1}m^{(h_1)}_t-q_k\mu^{m,(k)}_t\bigg\}dt+\sum_{h=0}^dZ^{0,k,l,h}_tdW^{(0),(h)}_t+\sum_{h=1}^dZ^{k,l,h}_tdW^{(h)}_t,\\\label{Y-MFG-1}
\ea
and applying It\^o formula to \eqref{ansatz-MFG} and using \eqref{X-MFG-1} and \eqref{m-hete} imply
\ba
\nonumber dY^{k,l}_t&=&\delta_{k,l}\bigg\{\bigg(-\dot\eta^{m,(k)}_t(m^{(k)}_t-X_t^{(k)})+\eta^{m,(k)}_t (q_k+\eta_t^{m,(k)})(m^{(k)}_t-X^{(k)}_t)\\
\nonumber&&-\dot\mu_t^{m,(k)}-\sum_{h_1=1}^d\dot\psi_t^{m,(k),h_1}m^{(h_1)}_t\\
 \nonumber&& -\sum_{h=1}^d\psi_t^{m,(k),h}\left(\sum_{h_1=1}^d(\psi_t^{m,(h),h_1}+q_h\lambda_h(\beta_{h_1}-\delta_{h,h_1}))m^{(h_1)}_t+\mu_t^{m,(h)}+\gamma^{(h)}_t \right)\bigg)dt \\
\nonumber&&+\sum_{h=1}^d\psi_t^{m,(k),h}\sigma_h\left(\rho dW^{(0),(0)}_t+\sqrt{1-\rho^2} \rho_{h}dW_t^{(0),(h)}\right)\\
&&+\eta_t^{m,(k)}\sigma_k\sqrt{1-\rho^2}\sqrt{1-\rho_k^2}dW^{(k)}_t \bigg\}.
\label{Y-MFG-2}
\ea
Similarly, through identifying \eqref{Y-MFG-1} and \eqref{Y-MFG-2}, we get $\eta_t^{(k)}$, $\psi^{(k),h}_t$, and $\mu_t^{(k)}$ must satisfy (\ref{Hete-eta-MFG}-\ref{Hete-mu-MFG}) and the squared integrable processes  $Z^{0,k,l,h}_t$ and $Z^{k,l,h}_t$ satisfying 
 \be
 Z^{0,k,l,0}_t=-\eta_t^{m,(k)}\lambda_k\rho\sum_{h_1=1}^d\sigma_{h_1}(\beta_{h_1}-\delta_{k,h_1}+\psi_t^{m,(k),h_1}),\;l=k,\quad Z^{0,k,l,0}_t=0, \; l\neq k,
 \en
 and 
 \be
Z^{0,k,l,h}_t=-\eta_t^{m,(k)}\lambda_k\sqrt{1-\rho^2}\sigma_{h }(\beta_{h }-\delta_{k,h}+\psi_t^{m,(k),h}),\;l=k,\quad Z^{0,k,l,h}_t=0, \; l\neq k,
 \en
 and
 \be
Z^{k,l,h}_t=\eta^{m,(k)}_t\sigma_k\sqrt{1-\rho^2}\sqrt{1-\rho_k^2},\;l=k,\quad Z^{k,l,h}_t=0,\; l\neq k.
 \en
By the fixed point argument, the $\epsilon$-Nash equilibria are given by
\be
\hat\alpha_t^{m,(k)}=(q_k+\eta^{m,(k)}_t)(m^{(k)}_t-x^{(k)})+\sum_{h=1}^d\widetilde\psi_t^{m,(k),h}m^{h}_t+\mu^{m,(k)}_t,\quad k=1,\cdots,d
\en
where $\widetilde\psi_t^{m,(k),h}=\psi_t^{m,(k),h}+q_k\lambda_k(\beta_k-\delta_{k,h})$. 

We now study the existence of the coupled ODEs (\ref{Hete-eta-MFG}-\ref{Hete-mu-MFG}). Note that we show the existence of the case of two heterogeneous groups in Proposition \ref{Prop_suff}. Observe that \eqref{Hete-eta-MFG} satisfies the Riccati equation without coupling. Given \eqref{Hete-psi-MFG}, the system \eqref{Hete-mu-MFG} is the system of linear ODEs. Hence, it is sufficient to show the existence of \eqref{Hete-psi-MFG}. Similar to the results in Proposition \ref{Prop_suff},  in the general $d$ groups, we obtain 
\[
\sum_{h_1=1}^d\psi_t^{m,(k),h_1}=0,
\] 
implying that given $k=h_1$
\be\label{MFG_suff_cond_1}
\psi_t^{m,(k),k}=-\sum_{h_1\neq k}\psi_t^{m,(k),h_1}. 
\en
Now, by inserting \eqref{MFG_suff_cond_1} into \eqref{Hete-mu-MFG}, for $k\neq h_1$, \eqref{Hete-mu-MFG} can be rewritten as 
\ba
\nonumber   \dot\psi_t^{m,(k),h_1}&=&q_k\psi_t^{m,(k),h_1}+\sum_{h\neq k}\psi_t^{m,(k),h}\left(\psi_t^{m,(k),h_1}+q_k\lambda_k\beta_{h_1 }\right)\\
\nonumber &&-\sum_{h\neq k}\psi_t^{m,(k),h}\left(\psi_t^{m,(h),h_1}+q_h\lambda_h(\beta_{h_1 }-\delta_{h,h_1})\right)-(\epsilon_k-q_k^2)\lambda_k\beta_{h_1} \\
\nonumber  &=&q_k(1+\lambda_k\beta_{h_1})\psi_t^{m,(k),h_1}+(\psi_t^{m,(k),h_1})^2\\
\nonumber &&-\sum_{h\neq k,h_1}\psi_t^{m,(k),h}\left(\psi_t^{m,(h),h_1}+q_h\lambda_h\beta_{h_1 }\right)\\
\nonumber  &&+\psi_t^{m,(k),h_1}\left(\sum_{h\neq h_1}\psi_t^{m,(k),h}+q_{h_1}\lambda_{h_1}(1-\beta_{h_1 })\right)\\
\nonumber  &&+\sum_{h\neq k, h_1}\psi_t^{m,(k),h}\left(\psi_t^{m,(k),h_1}+q_k\lambda_k\beta_{h_1 }\right)-(\epsilon_k-q_k^2)\lambda_k\beta_{h_1}\\
\nonumber  &=&q_k(1+\lambda_k\beta_{h_1})\psi_t^{m,(k),h_1}+(\psi_t^{m,(k),h_1})^2\\
\nonumber  &&+\psi_t^{m,(k),h_1}\left(\sum_{h\neq h_1}\psi_t^{m,(k),h}+q_{h_1}\lambda_{h_1}(1-\beta_{h_1 })\right)+\psi_t^{m,(k),h_1}\sum_{h\neq k, h_1}\psi_t^{m,(k),h}\\
  &&-\sum_{h\neq k,h_1}\psi_t^{m,(k),h}\left(\psi_t^{m,(h),h_1}+q_h\lambda_h\beta_{h_1 }-q_k\lambda_k\beta_{h_1 }\right)-(\epsilon_k-q_k^2)\lambda_k\beta_{h_1}\
\label{C-15}
\ea
We now further assume  $c_{\tilde k}\geq \max_{k,h}\left(\frac{q_k\lambda_k}{\lambda_h}-q_h\right)$ for $\tilde k=1,\cdots,d$ such that the term 
\be\label{C-15-1}
\sum_{h\neq k,h_1}\psi_t^{m,(k),h}\left(\psi_t^{m,(h),h_1}+q_h\lambda_h\beta_{h_1 }-q_k\lambda_k\beta_{h_1 }\right)
\en
 stays in negative for all $t$ in order to guarantee $\psi_t^{m,(k),h_1}\geq 0$ for $k\neq h_1$. Note that in the two-group case with $d=2$, the equation \eqref{C-15-1} can be removed. See Proposition \ref{Prop_suff} for details.   

Similarly, using $\check\psi_t^{m,(k),h_1}= \psi_{T-t}^{m,(k),h_1}$, we have 
 \ban
  \dot{\check\psi}_t^{m,(k),h_1}&=&-q_k(1+\lambda_k\beta_{h_1})\check\psi_t^{m,(k),h_1}-(\check\psi_t^{m,(k),h_1})^2-\sum_{h\neq k, h_1}\check\psi_t^{m,(k),h}\left(\check\psi_t^{m,(k),h_1}+q_k\lambda_k\beta_{h_1 }\right)\\
   &&-\check\psi_t^{m,(k),h_1}\left(\sum_{h\neq h_1}\check\psi_t^{m,(k),h}+q_{h_1}\lambda_{h_1}(1-\beta_{h_1 })\right)\\
  &&+\sum_{h\neq k,h_1}\check\psi_t^{m,(k),h}\left(\check\psi_t^{m,(h),h_1}+q_h\lambda_h\beta_{h_1 } \right)+(\epsilon_k-q_k^2)\lambda_k\beta_{h_1} \\
  &\leq &-q_k(1+\lambda_k\beta_{h_1})\check\psi_t^{m,(k),h_1}-(\check\psi_t^{m,(k),h_1})^2\\
  &&+\sum_{h\neq k,h_1}\check\psi_t^{m,(k),h}\left(\check\psi_t^{m,(h),h_1}+q_h\lambda_h\beta_{h_1 }\right)+(\epsilon_k-q_k^2)\lambda_k\beta_{h_1} 
 \ean
Let $$\underline \zeta=\min_{k,h}q_k(1+\lambda_k\beta_h),\quad\overline \zeta=\max_{k,h}q_k\lambda_k\beta_h.$$ We have 
\ba\label{psi_ij}
\nonumber \dot{\check\psi}_t^{m,(k),h_1}&\leq&\underline\zeta \check\psi_t^{m,(k),h_1}-(\check\psi_t^{m,(k),h_1})^2+\frac{1}{2}\sum_{h\neq k,h_1}\left((\check\psi_t^{m,(k),h})^2+(\check\psi_t^{m,(h),h_1})^2\right)\\
 &&+\overline \zeta\sum_{h\neq k,h_1}\check\psi_t^{m,(k),h}+(\epsilon_k-q_k^2)\lambda_k\beta_{h_1} .
\ea
Now, using $$\check\psi_t=\sum_{k=1,\cdots,d,h_1=1,\cdots,N_k,k\neq h_1}\check\psi_t^{m,(k),h_1}$$ and \eqref{psi_ij} leads to 
\ba
\dot{\check\psi}_t&\leq&-\underline\zeta\check\psi_t+\overline \zeta\check\psi_t=(\overline \zeta-\underline \zeta)\check\psi_t+\hat\zeta
\ea 
implying 
\be
\check\psi_t\leq\widetilde\zeta e^{(\overline \zeta-\underline \zeta)t}+\frac{\hat\zeta}{\overline \zeta-\underline \zeta}\left(e^{(\overline \zeta-\underline \zeta)t}-1\right),
\en
where $$\widetilde\zeta=\sum_{k=1,\cdots,d,h_1=1,\cdots,N_k,k\neq h_1}c_k\lambda_k\beta_{h_1},\quad \hat\zeta=\sum_{k=1,\cdots,d,h_1=1,\cdots,N_k,k\neq h_1}(\epsilon_k-q_k^2)\lambda_k\beta_{h_1}.$$
Using $\psi_t^{m,(k),h_1}\geq 0$ for $k\neq h_1$, the proof is complete. \qed

%


\bibliographystyle{plainnat}
\bibliography{references-delay-games-feller}
 \end{document}